\newif\ifpublic
\newif\ifconf
\newif\ifarxiv
\newcommand{\lref}[2][]{\hyperref[#2]{#1~\ref*{#2}}}
\renewcommand{\eqref}[1]{\hyperref[#1]{(\ref*{#1})}}
\numberwithin{equation}{section}
\theoremstyle{plain}
\newtheorem{conjecture}{Conjecture}[section]
\newtheorem{claim}[section]{Claim}
\theoremstyle{plain}
\newtheorem{lem}{Lemma}[section]
\newtheorem{theorem}[lem]{Theorem}
\newtheorem{lemma}[lem]{Lemma}
\newtheorem{corollary}[lem]{Corollary}
\newtheorem{conjecture}[lem]{Conjecture}
\newtheorem{claim}[lem]{Claim}
\newtheorem{definition}[lem]{Definition}
\theoremstyle{definition}
\newtheorem{remark}[lem]{Remark}
\DeclareMathOperator{\poly}{poly}
\DeclareMathOperator{\Inf}{Inf}
\DeclareMathOperator*{\E}{\mathbb{E}}
\newcommand{\etal}{{\em et~al.}}
\newcommand{\F}{\mathbb{F}}
\newcommand{\Ef}{\mathfrak{F}}
\newcommand{\C}{\mathbb{C}}
\newcommand{\R}{\mathbb{R}}
\newcommand{\Pe}{\mathsf{P}}
\renewcommand{\Re}{\mathsf{Re}}
\newcommand{\TwoToOne}{2-to-1}
\renewcommand{\cal}{\mathcal}
\renewcommand{\epsilon}{\varepsilon}
\renewcommand{\phi}{\varphi}
\newcommand{\prob}[2]{\Pr_{#1}\left[#2\right]}
\newcommand{\avg}[2]{\mathop{\mathbb{E}}_{#1}\left[#2\right]}
\newcommand {\email} [1] {Email: \texttt{#1}.}
\title{Derandomized Graph Product Results Using the Low Degree Long Code}
\titlerunning{Derandomized Graph Product Results Using the Low Degree Long Code} 
\author[1]{Irit Dinur\footnote{Irit Dinur's research is supported by ERC-Stg grant number 239985. }}
\author[2]{Prahladh Harsha}
\author[3]{Srikanth Srinivasan}
\author[4]{Girish Varma\footnote{Girish Varma's research is supported by 
    Google Ph.D. Fellowship in Algorithms. Part of
the work was done when the author was visiting the Weizmann Institute of Science, Israel. }}
\affil[1]{Weizmann Institute of Science, Israel. \\   \texttt{irit.dinur@weizmann.ac.il}}
\affil[2]{Tata Institute of Fundamental Research, India.\\ \texttt{prahladh@tifr.res.in}}
\affil[3]{Department of Mathematics, IIT Bombay, India.\\  \texttt{srikanth@math.iitb.ac.in}}
\affil[4]{Tata Institute of Fundamental Research,  India.\\ \texttt{girishrv@tifr.res.in}}
\authorrunning{I. Dinur, P. Harsha, S. Srinivasan, and G. Varma}
\subjclass{G.2.2 Graph Theory, F.1.3 Reducibility and completeness}
\keywords{graph product, derandomization, low degree long code, graph coloring}
\begin{document}
\maketitle

\else
\begin{document}
\title{Derandomized Graph Product Results using the Low Degree Long Code}
\author{Irit Dinur\thanks{Weizmann Institute of Science, Israel.
Supported by ERC-Stg grant number 239985. \email{irit.dinur@weizmann.ac.il}.}
\and
Prahladh Harsha\thanks{Tata Institute of Fundamental Research,
  India. \email{prahladh@tifr.res.in}}
\and
Srikanth Srinivasan\thanks{Department of Mathematics, IIT Bombay,
  India. \email{srikanth@math.iitb.ac.in}}
\and Girish Varma\thanks{Tata Institute of Fundamental Research,
  India. Supported by Google India under the Google India PhD
   Fellowship Award. \email{girishrv@tifr.res.in}}
}
\maketitle
\thispagestyle{empty}

\vspace{-0.15in}

\fi

\begin{abstract}
In this paper, we address the question of whether the recent
derandomization results
obtained by the use of the low-degree long code can be extended to
other product settings. We consider two settings: (1) the graph
product results of Alon, Dinur, Friedgut and Sudakov~[{\em GAFA}, 2004]
and (2) the ``majority is stablest'' type of result obtained by Dinur,
Mossel and Regev~[{\em SICOMP}, 2009] and Dinur and
Shinkar~[In {\em Proc. APPROX}, 2010] while studying the hardness of approximate
graph coloring.

In our first result, we show that there exists a considerably smaller
subgraph of $K_3^{\otimes R}$ which exhibits the following property
(shown for $K_3^{\otimes R}$ by Alon~\etal): independent sets close in
size to the maximum independent set are well approximated
by dictators.

The ``majority is stablest'' type of result of Dinur~\etal\ and Dinur
and Shinkar shows that if there exist two sets of vertices $A$ and $B$
in $K_3^{\otimes R}$ with very few edges with one endpoint in $A$ and
another in $B$, then it must be the case that the two sets $A$ and $B$
share a single influential coordinate. In our second result, we show
that a similar ``majority is stablest'' statement holds good for a
considerably smaller subgraph of $K_3^{\otimes R}$. Furthermore using
this result, we give a more efficient reduction from Unique Games
to the graph coloring problem, leading to improved hardness  of
approximation results for coloring.

\end{abstract}

\section{Introduction}

The discovery of the low-degree long code (aka short code) by
Barak~\etal~\cite{BarakGHMRS2012} has over the last year led to
several more efficient inapproximability
reductions~\cite{BarakGHMRS2012,DinurG2013,GuruswamiHHSV2014,KhotS2014b,Varma2014}. The
low-degree long code is a derandomization of the long code in the
following sense. Given a finite field $\F$, the long code of a string
$x \in \F^n$ is the evaluation of every $\F$-valued function on $\F^n$
at the point $x$ while the degree $d$ long code of of $x$ is the
evaluation of every $n$-variate polynomial of total degree at most $d$
at the point $x$. The crucial observation of
Barak~\etal~\cite{BarakGHMRS2012} was that the optimal testing results
for Reed-Muller codes~\cite{BhattacharyyaKSSZ2010,HaramatySS2013}
proved that the low-degree long code could be used as a surrogate for
the long code in several inapproximability results. In this paper, we
ask if we can extend this application of low-degree long code to other
product settings. In particular, we prove the following two
results. (1) We show that result due to Alon~\etal~\cite{AlonDFS2004}
on the size of maximum independent sets in product graphs can be
derandomized (\lref[Theorem]{thm:derand-indep}). (2) We show that the
``majority is stablest'' type of result obtained by
Dinur~\etal~\cite{DinurMR2009} and Dinur and Shinkar~\cite{DinurS2010}
can be derandomized (\lref[Theorem]{thm:derand-ds}).

\subsection{Derandomized graph products}

As a first application, we consider the following graph product result
due to Alon~\etal~\cite{AlonDFS2004}.  Consider the undirected
weighted graph $K_3$ on the three vertices $V=\{0,1,2\}$ and edges
weighted as follows: $W(f,f') = 1/2$ iff $f'\neq f \in \{0,1,2\}$. Let
$K_3^{\otimes R}$ be the graph with vertex set $V^{\otimes R}$ and
weights-matrix the $R$-wise tensor of the matrix $W$. Clearly, for any
$i \in [R]$ and $a \in \{0,1,2\}$, the set $V_{i,a} := \{ v \in
V^{\otimes R} : v_i = a \}$ is an independent set in $K_3^{\otimes R}$
of fractional size $1/3$ since $K_3$ does not have any self loops. We
call such an independent set a \emph{dictator} for obvious
reasons. Alon~\etal~\cite{AlonDFS2004} showed that these are the
maximal independent sets in $K_3^{\otimes R}$ and in fact any
independent set of size close to the maximum is close to a dictator.
\begin{theorem}[\cite{AlonDFS2004}]\label{thm:adfs}
  Let $A$ be an independent set in $K_3^{\otimes R}$ of size $\delta
  3^R$. Then,
\begin{enumerate}
\item $\delta \leq 1/3$.
\item $\delta = 1/3$ iff $A$ is a dictator.
\item If $\delta \geq 1/3 -\epsilon$, then $A$ is $O(\epsilon)$-close
  to a dictator. That is, there is a dictator $A'$ such that $|A\Delta
  A'| = O(\epsilon3^R)$.
\end{enumerate}
\end{theorem}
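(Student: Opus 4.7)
The plan is to use a Fourier-analytic / spectral argument on $\mathbb{Z}_3^R$. The edge-weight matrix $W$ of $K_3$ has eigenvalues $1$ (on the constant functions) and $-1/2$ (on the mean-zero functions), so the tensored operator $W^{\otimes R}$ scales any Fourier character supported on exactly $k$ nonzero coordinates by $(-1/2)^k$. Writing $L_k$ for the Fourier weight of $\chi_A$ at level $k$, we have $L_0 = \delta^2$, $\sum_k L_k = \delta$ (by Parseval together with Booleanness), and the condition that $A$ is an independent set, $\langle \chi_A, W^{\otimes R} \chi_A \rangle = 0$, becomes
\[
\sum_{k=0}^R L_k \left(-\frac{1}{2}\right)^k \;=\; 0.
\]

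For part (1), the triangle inequality yields $\delta^2 \leq \tfrac{1}{2}\sum_{k\geq 1} L_k = \tfrac{1}{2}(\delta - \delta^2)$, so $\delta \leq 1/3$. Part (2) follows by tracking the equality case: all $L_k$ with $k \geq 2$ must vanish, hence $\chi_A = c + \sum_i g_i(x_i)$ with each $g_i$ of mean zero. A short case analysis shows that if two of the $g_i$'s were non-constant then $g_i(x_i) + g_j(x_j)$ would take at least three distinct values, contradicting the Booleanness of $\chi_A$. Hence $\chi_A$ depends on a single coordinate, and the mean constraint $\delta = 1/3$ together with Booleanness forces $\chi_A = \mathbf{1}_{V_{i,a}}$ for some $i \in [R]$ and $a \in \{0,1,2\}$.

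For part (3), a quantitative refinement is needed. Setting $T := \sum_{k\geq 2} L_k$ and $R := \sum_{k\geq 2} L_k(-1/2)^k$, the identity rearranges (using $L_1 = \delta - \delta^2 - T$) to
\[
\frac{T}{2} + R \;=\; \frac{\delta(1-3\delta)}{2},
\]
and since $|R| \leq T/4$, for $\delta = 1/3 - \epsilon$ the right-hand side equals $\Theta(\epsilon)$, forcing $T = O(\epsilon)$. In particular, $\chi_A$ is within $O(\sqrt{\epsilon})$ in $L^2$ of its level-$\leq 1$ projection $c + \sum_i g_i$.

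The main technical step, and the place where I expect the most care is needed, is upgrading this to $O(\epsilon)$-closeness to an honest dictator indicator in symmetric difference. This is the content of a Friedgut-Kalai-Naor (FKN) type theorem on the ternary cube: a Boolean function whose Fourier spectrum is $\epsilon$-concentrated on the first two levels is $O(\epsilon)$-close in $L^1$ to a function depending on a single coordinate. Combined with the density constraint $\delta \approx 1/3$, the only such single-coordinate Boolean function of density $\approx 1/3$ is a dictator indicator $\mathbf{1}_{V_{i,a}}$, completing the proof. The difficulty in this rounding is converting $L^2$ closeness to the low-degree subspace into $L^1$ closeness to a specific Boolean dictator, and in particular identifying the single heavy coordinate among the $g_i$'s.
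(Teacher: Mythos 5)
Your proposal is correct and is essentially the argument the paper uses (both for this theorem and for its derandomized analogue): the identity $\sum_k L_k(-1/2)^k=0$ together with Parseval gives $\delta\le 1/3$, forces the spectrum onto levels $0$ and $1$ in the equality case, and yields the level-$\ge 2$ weight bound $T\le 2\epsilon$ (the paper's \lref[Lemma]{lem:low-fourier-conc} is the same computation), after which the FKN-type rounding you defer to is precisely \lref[Lemma]{thm:dict}, which the paper likewise imports from Alon~\etal\ and which is proved there from the hypercontractive inequality (\lref[Lemma]{thm:hyp}) for functions with spectrum concentrated on levels $\le 1$. The only immaterial divergence is part (1), which the paper handles by partitioning the vertex set into the triangles $\{f,f+1,f+2\}$ rather than via the eigenvalue bound.
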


Note that the above graph has $3^R$ vertices. Our first result
(\lref[Theorem]{thm:derand-indep}) shows that there exists a
considerably smaller subgraph $\cal G = (\cal V, \cal E)$ of
$K^{\otimes R}$ with only $3^{\poly(\log R)}$ vertices that has the
same properties. In order to describe the subgraph, it will be
convenient to think of $K_3$ as having vertex set $\F_3$ and $$W(f,f')
= \Pr_{p \in \F_3,a \in \{1,2\}} [f' = f +a(p^2+1)].$$ 
Let $\Pe_{r,d}$
be the set of polynomials on $r$ variables over $\F_3$ of total degree
at most $d$ and individual degrees of the variables at most $2$. Let
$r$ and $d$ be two parameters and let $R = 3^r$. Note that $V^{\otimes
  R}$ can be identified with $\Pe_{r,2r}$, since $\Pe_{r,2r}$ is the
set of all functions from $\F^r_3$ to $\F_3$. The subgraph $\cal G =
(\cal V, \cal E)$ is as follows : $\cal V := \Pe_{r, 2d}$ and the
edges are given by the weights-matrix defined below
$$\cal W(f,f') = \Pr_{p \in \Pe_{r,d},a \in \{1,2\}} [f' = f +a(p^2+1)].$$
Note that since $\Pe_{r,2d}$ is a subspace of dimension $r^{O(d)}$,
the size of the vertex set is $3^{r^{O(d)}}$, which is considerably
smaller than $3^R$ for constant $d$.
\begin{theorem}
\label{thm:derand-indep}
There is a constant $d$ for which the following holds. If $A$ is an independent set of size $\delta |\cal V|$ in $\cal G$ then
\begin{enumerate}
\item $\delta \leq 1/3$.\label{itm:comp-1}
\item $\delta =1/3$ iff $A$ is a dictator. \label{itm:comp-2}
\item If $\delta \geq 1/3 -\epsilon$ then $A$ is $O(\epsilon)$-close to a dictator.\label{itm:sound}
\end{enumerate}
\end{theorem}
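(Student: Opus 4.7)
My plan is to follow the Fourier-analytic strategy that Alon, Dinur, Friedgut and Sudakov used for $K_3^{\otimes R}$, but carried out on the much smaller space $\Pe_{r,2d}$, with the low-degree long code playing the role that the ordinary long code played there. First, I would observe that $\mathcal{G}$ is a Cayley graph on the additive group $\Pe_{r,2d}$ with generating distribution supported on $\{a(p^2+1):p\in\Pe_{r,d},a\in\{1,2\}\}$, so its characters $\chi_L(f)=\omega^{L(f)}$ (with $\omega$ a primitive cube root of unity) are indexed by linear forms $L:\Pe_{r,2d}\to\F_3$ and carry eigenvalues
$$\lambda_L \;=\; \E_{p\in\Pe_{r,d},\,a\in\{1,2\}}\bigl[\omega^{a\,L(p^2+1)}\bigr].$$
By duality every $L$ can be written (non-uniquely) as $L(f)=\sum_{\alpha\in\F_3^r} c_\alpha f(\alpha)$. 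The ``dictator'' characters $L=c\cdot\eval_\alpha$ satisfy $L(p^2+1)=c(p(\alpha)^2+1)\in\F_3\setminus\{0\}$, and a short computation gives $\lambda_L=-\tfrac{1}{2}$, exactly matching the top non-trivial eigenvalue of $K_3$ that ADFS used; this is what makes the dictator construction tight on $\mathcal{G}$.

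The core technical step is a \emph{spectral gap}: every linear form $L$ that is not proportional to a single evaluation $\eval_\alpha$ satisfies $|\lambda_L|\leq \tfrac{1}{2}-\gamma$ for some absolute $\gamma>0$, provided $d$ is a sufficiently large constant. Heuristically, once $L$ genuinely involves at least two distinct evaluation points, $L(p^2+1)$ is no longer forced into $\F_3\setminus\{0\}$ and its character sum shrinks strictly below $1/2$ in absolute value. I expect this to be the main obstacle, and the right tool is the low-degree long code machinery: for $d$ large relative to the ``weight'' of $L$, a standard averaging argument (using the Reed--Muller testability theorems of~\cite{BhattacharyyaKSSZ2010,HaramatySS2013} and the small-set expansion analyses of~\cite{BarakGHMRS2012,DinurG2013}) shows that the joint distribution of $(p(\alpha_1),\dots,p(\alpha_k))$ on any $k$ distinct points is close to uniform on $\F_3^k$, which forces $|\lambda_L|$ below $1/2$. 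Linear forms of very large weight must be handled separately: either their contribution to $\mathbf{1}_A$ is $\ell_2$-small, or an invariance-style reduction effectively lowers their weight, again following the approach of~\cite{BarakGHMRS2012}.

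Once this spectral gap is in place, the remainder is an adaptation of ADFS. Writing $\mathbf{1}_A=\sum_L \widehat{\mathbf{1}_A}(L)\chi_L$, the independence of $A$ translates into $\sum_L |\widehat{\mathbf{1}_A}(L)|^2 \lambda_L = 0$; isolating the $L=0$ contribution $\delta^2$ and bounding every other $\lambda_L$ by $\tfrac12$ gives the Hoffman-type inequality $\delta^2 \leq \tfrac{1}{2}(\delta-\delta^2)$, yielding $\delta \leq 1/3$ (part~\ref{itm:comp-1}); equality forces $\widehat{\mathbf{1}_A}$ to be supported on the trivial character together with dictator characters $\chi_{\eval_\alpha}$, and the $\{0,1\}$-valuedness of $\mathbf{1}_A$ then pins $A$ down to a single $V_{\alpha^*,a^*}$ (part~\ref{itm:comp-2}). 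For the robust part~\ref{itm:sound}, when $\delta\geq 1/3-\epsilon$, the spectral-gap inequality shows that an $\Omega(1)$ fraction of the non-constant Fourier mass of $\mathbf{1}_A$ sits on dictator characters, up to an $O(\epsilon)$ error; a standard rounding argument in the spirit of the dictator analyses in~\cite{DinurMR2009,DinurS2010} then selects $\alpha^*\in\F_3^r$ and $a^*\in\F_3$ so that $A$ is $O(\epsilon)$-close to the dictator $V_{\alpha^*,a^*}$, completing the proof.
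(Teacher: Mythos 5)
Your overall architecture is the same as the paper's: view $\cal G$ as a Cayley graph on $\Pe_{r,2d}$, index its characters by cosets of $\Pe_{r,2d}^\perp$, compute eigenvalues $\lambda_\alpha=\E_{p,a}\chi_\alpha(a(p^2+1))$, handle small-support $\alpha$ via the $3^{\lfloor (d+1)/2\rfloor}$-wise independence of a random $p\in\Pe_{r,d}$ (so the computation reduces to the exact $K_3^{\otimes R}$ one), and run a Hoffman-type/ADFS argument. (The paper actually gets part~\ref{itm:comp-1} more cheaply by partitioning $\cal V$ into triangles $\{f,f+1,f+2\}$, but your spectral route works once the eigenvalue bounds are in place.) The first genuine gap is your treatment of large-weight characters. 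Neither of your two proposed escapes works: there is no a priori reason that an arbitrary independent set has small Fourier mass on far-from-dual characters, and an ``invariance-style weight reduction'' can only replace $\alpha$ by a minimum-support representative of its coset $\alpha+\Pe_{r,2d}^\perp$, which accomplishes nothing when $\alpha$ is genuinely far from the dual code. What is actually needed --- and what the paper invokes as \lref[Lemma]{lem:cor-rand-square}, derived from Guruswami~\etal\ --- is the direct character-sum estimate $\left|\E_{p\in\Pe_{r,d}}\chi_\alpha(p^2)\right|\le 3^{-\Omega(3^{d/9})}$ whenever $\Delta(\alpha,\Pe_{r,2d}^\perp)>3^{d/2}$. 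Without some such estimate your claimed spectral gap is unproved precisely for the characters that have no analogue in the classical long-code setting, and both the equality case (part~\ref{itm:comp-2}) and the stability case (part~\ref{itm:sound}) collapse.

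The second gap is the final step of part~\ref{itm:sound}. Concentration of all but $O(\epsilon)$ of the nonconstant Fourier mass of $\mathbf{1}_A$ on weight-$1$ characters does not by itself give closeness to a dictator; one needs an FKN-type statement for Boolean functions on $\Pe_{r,2d}$ (the paper's \lref[Lemma]{thm:derand-dict}), whose proof requires the hypercontractive inequality $\|A\|_4\le C^t\|A\|_2$ for functions on $\Pe_{r,2d}$ with Fourier support of weight at most $t$. This is not automatic in the derandomized setting: the paper establishes it (\lref[Lemma]{thm:hyp-for-poly} via \lref[Lemma]{thm:mov-to-poly}) by lifting such a function to $\Ef_r$ and showing the relevant moments coincide, using that any $\sum_i(\alpha_i-\beta_i)$ of support size below $3^{d-1}$ lying in $\Pe_{r,2d}^\perp$ must vanish (Schwartz--Zippel together with $\Pe_{r,d}^\perp=\Pe_{r,2r-d-1}$). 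Your ``standard rounding argument'' silently assumes this derandomized hypercontractivity; you should either prove the moment-matching lemma or state it explicitly as a required ingredient.
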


A crucial element in the proof of \lref[Theorem]{thm:adfs} is a
hypercontractivity theorem for functions which do not have any heavy
Fourier coefficients. \lref[Theorem]{thm:derand-indep} is proved by
observing that a similar hypercontractivity theorem also holds good in
the low-degree long code setting (see \lref[Lemma]{thm:hyp-for-poly}).
\subsection{Derandomized ``majority is stablest'' result}

While studying the hardness of approximate graph coloring, Dinur,
Mossel and Regev~\cite{DinurMR2009} proved the following ``majority is
stablest'' type of result: if there is a pair of subsets of vertices
in $K_3^{\otimes R}$ of sufficiently large size such that the  average weight
of edges between them is small, then their indicator functions must
have a common influential coordinate. 
Subsequently, Dinur and Shinkar~\cite{DinurS2010} obtained the
following quantitative improvement to the above theorem.

\begin{theorem}[{\cite[Theorem 1.3]{DinurS2010}}]
\label{thm:ds} 
For all $\mu >0$ there exists $\delta= \mu^{O(1)}$ and $k = O(\log 1/\mu)$ such that the following holds: For any two functions $A,B:\{0,1,2\}^R \rightarrow[0,1]$ if  
$$ \E A> \mu,~  \E B > \mu,~ \text{ and }~\E_{f,f'} A(f) B(f')  \leq
\delta\footnote{The hypothesis in the theorem statement of
  Dinur-Shinkar~\cite{DinurS2010} requires $\E_{f,f'}A(f)B(f') =0$,
  however it is easy to check that their theorem also holds good under
  the weaker hypothesis $\E_{f,f'} A(f) B(f')  \leq
\delta$.}$$
where $f$ is chosen randomly from $V^{\otimes R}$ and $f'$ is chosen with probability $W^{\otimes R}(f,f')$ then 
$$ \exists x \in [R] \text{ such that } \Inf^{\leq k}_x(A) \geq \delta \text{  and } \Inf_x^{\leq k} (B) \geq \delta.$$ 
\end{theorem}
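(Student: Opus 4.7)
The plan is to prove the statement by contrapositive: assuming that for every coordinate $x\in[R]$ at least one of $\Inf_x^{\leq k}(A)$ and $\Inf_x^{\leq k}(B)$ is smaller than $\delta$, I will deduce $\E_{f,f'} A(f)B(f')>\delta$ for $\delta=\mu^{O(1)}$ and $k=O(\log(1/\mu))$ chosen later. The starting observation is spectral: the single-coordinate averaging operator $T_0$ with kernel $W$ on $\{0,1,2\}$ has eigenvalues $\{1,-1/2,-1/2\}$, so the maximal correlation of the distribution $W$ on $\{0,1,2\}\times\{0,1,2\}$ is $\rho=1/2<1$. Since the averaging operator associated to $K_3^{\otimes R}$ factors as $T=T_0^{\otimes R}$, we have $\E_{f,f'}A(f)B(f')=\langle A,\,TB\rangle$.

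The main tool I would then invoke is Mossel's invariance principle for correlated product spaces (the ``Gaussian bounds for noise correlations of functions'' theorem), in its bipartite form. Applied here, it asserts: for every $\eta>0$ there exist $k=O(\log(1/\eta))$ and $\tau=\eta^{O(1)}$ such that whenever $\min(\Inf_x^{\leq k}(A),\Inf_x^{\leq k}(B))<\tau$ for every $x\in[R]$,
\[
\E_{f,f'}A(f)\,B(f')\;\geq\;\Gamma_{1/2}(\E A,\E B)\;-\;\eta,
\]
where $\Gamma_\rho(\alpha,\beta):=\Pr[X\geq t_\alpha,\,Y\geq t_\beta]$ for a standard jointly Gaussian pair $(X,Y)$ with $\E[XY]=\rho$ and $t_\alpha$ the $(1-\alpha)$-quantile of the standard Gaussian.

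It remains to lower-bound $\Gamma_{1/2}(\mu,\mu)$. A direct bivariate Gaussian tail computation yields $\Gamma_{1/2}(\mu,\mu)\geq\mu^{C}$ for an absolute constant $C$, reflecting the fact that correlation $1/2$ is bounded away from $1$. Setting $\eta:=\tfrac{1}{2}\mu^{C}$ then fixes $k=O(\log(1/\mu))$ and $\tau=\mu^{O(1)}$; enlarging $C$ if necessary, set $\delta:=\eta$ so that $\tau\leq\delta$. Under the contrapositive hypothesis, no coordinate is simultaneously $\delta$-influential at degree $\leq k$ in both $A$ and $B$, so the invariance bound yields $\E_{f,f'}A(f)B(f')\geq\Gamma_{1/2}(\mu,\mu)-\eta\geq\delta$, contradicting the theorem's hypothesis.

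The principal obstacle will be verifying Mossel's bipartite invariance in this quantitative form. The logarithmic dependence of $k$ on $1/\mu$ hinges on $|\rho|=1/2$ being strictly below $1$: truncating $A$ and $B$ at degree $k$ produces a high-degree remainder whose contribution to $\langle A,TB\rangle$ decays like $(1/2)^{k}$ by Parseval, and so falls below $\eta$ once $k=\Omega(\log(1/\eta))$. The low-degree part is then compared to a Gaussian surrogate by the standard smoothing plus multilinear invariance argument, and that error is polynomial in $\tau$ and exponential in $k$, which remains polynomial in $\mu$ for our parameter choice. A secondary check is the polynomial Gaussian tail estimate $\Gamma_{1/2}(\mu,\mu)\geq\mu^{O(1)}$, which is a direct computation from the two-dimensional Gaussian density.
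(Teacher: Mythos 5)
The paper does not prove this statement; it is imported verbatim (modulo the footnoted relaxation of $\E_{f,f'}A(f)B(f')=0$ to $\le\delta$) from Dinur--Shinkar, so there is no in-paper proof to compare against. Your sketch is essentially a reconstruction of the argument in that reference: compute that the single-coordinate operator for $K_3$ has spectrum $\{1,-1/2,-1/2\}$, hence maximal correlation $\rho=1/2$ bounded away from $1$; truncate at degree $k=O(\log 1/\mu)$ so the high-degree contribution to $\langle A,TB\rangle$ is $O(2^{-k})=\mu^{O(1)}$; apply Mossel's bilinear invariance/Gaussian-stability bound to the low-influence low-degree parts, noting that the $2^{O(k)}$ losses in the invariance principle stay polynomial in $1/\mu$ precisely because $k$ is logarithmic. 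This is the right route and it is why the quantitative form $\delta=\mu^{O(1)}$, $k=O(\log 1/\mu)$ is attainable.

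Two points need repair, though neither is fatal. First, the Gaussian functional you invoke is the wrong one: with positively correlated Gaussians, $\Pr[X\ge t_\alpha,\,Y\ge t_\beta]$ is the \emph{maximal} overlap configuration, and the inequality $\E_{f,f'}A(f)B(f')\ge \Pr[X\ge t_\alpha, Y\ge t_\beta]-\eta$ is false in general (take $A$ and $B$ supported on ``opposite'' halfspace-like sets). The correct lower bound uses the minimal-overlap functional $\underline{\Gamma}_\rho(\alpha,\beta)=\Pr[X\le \Phi^{-1}(\alpha),\,Y\ge \Phi^{-1}(1-\beta)]$; fortunately $\underline{\Gamma}_{1/2}(\mu,\mu)\ge \mu^{O(1)}$ as well (roughly $\mu^{2/(1-\rho)}$), so your conclusion survives. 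Second, your parameter ordering is reversed: the contrapositive hypothesis gives $\min(\Inf_x^{\le k}(A),\Inf_x^{\le k}(B))<\delta$ for every $x$, and to feed this into the invariance theorem (which needs $\min<\tau$) you must arrange $\delta\le\tau$, not $\tau\le\delta$. Since decreasing $\delta$ only weakens the theorem and both $\tau$ and $\underline{\Gamma}_{1/2}(\mu,\mu)-\eta$ are $\mu^{O(1)}$, setting $\delta=\min(\tau,\tfrac14\mu^{C})$ closes the argument.
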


Our second result (\lref[Theorem]{thm:derand-ds}) shows that the above
theorem can be derandomized to obtain a similar result for the
subgraph $\cal G$.  For defining influence for real valued functions
on $\Pe_{r,2d}$, we note that the characters of $\Pe_{r,2d}$ are
restrictions of characters of $\F_3^{ R}\equiv \Pe_{r,2r}$. So the
definition of influence for functions on $\F_3^{ R}$ also extends
naturally to functions on $\Pe_{r,2d}$.
\begin{theorem}\label{thm:derand-ds}
For all $\mu >0$ there exists $\delta= \mu^{O(1)}, k = O(\log 1/\mu), d = O(\log 1/\mu)$ such that the following holds: For any two functions $A,B:\Pe_{r,2d} \rightarrow[0,1]$ if  
$$ \E A> \mu,~  \E B > \mu,~ \text{ and }~\E_{f,f'} A(f) B(f')  \leq \delta$$
where $f$ is chosen randomly from $\Pe_{r,2d}$, $f' =f + a(p^2 + 1)$, $p$ are chosen randomly from $\Pe_{r,d}$ and $a \in_R \{1,2\}$ then 
$$ \exists x \in \F_3^r \text{ such that } \Inf^{\leq k}_x(A) \geq \delta \text{  and } \Inf_x^{\leq k} (B) \geq \delta.$$ 
\end{theorem}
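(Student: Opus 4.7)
The plan is to derandomize the proof of \lref[Theorem]{thm:ds} by replacing each of its classical ingredients with the low-degree long code counterpart, while carrying out the same Fourier-analytic argument over the subgraph $\cal G$.

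The first step is to set up Fourier analysis on $\Pe_{r,2d}$. As the theorem statement records, the characters of $\Pe_{r,2d}$ are restrictions of characters of $\F_3^{R}\equiv \Pe_{r,2r}$; choose canonical representatives $\alpha \in \F_3^R$ and expand $A = \sum_\alpha \hat{A}(\alpha)\chi_\alpha$ and $B = \sum_\alpha \hat{B}(\alpha)\chi_\alpha$. By orthonormality, the correlation across an edge of $\cal G$ becomes
$$\E_{f,f'}\!\left[A(f)B(f')\right] \;=\; \sum_\alpha \hat{A}(\alpha)\,\overline{\hat{B}(\alpha)}\,\lambda_\alpha,$$
where $\lambda_\alpha = \E_{p \in \Pe_{r,d},\,a \in \{1,2\}}\left[\chi_\alpha\!\left(a(p^2+1)\right)\right]$ is the eigenvalue of the edge noise operator of $\cal G$ at $\chi_\alpha$.

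The technical heart is to control these eigenvalues. For any $\alpha$ of low Hamming weight ($|\alpha|\le k$), I would show that $\lambda_\alpha$ is close to the classical eigenvalue $(-1/2)^{|\alpha|}$ appearing in the proof of \lref[Theorem]{thm:ds}, provided $d$ is chosen sufficiently large relative to $k$. The point is that $\lambda_\alpha$ depends only on the joint distribution of the evaluations $(p(x))_{x\in \supp(\alpha)}$ for random $p \in \Pe_{r,d}$; for supports of size at most $k$ this joint distribution is close to uniform on $\F_3^{\supp(\alpha)}$ once $d$ is a sufficiently large constant times $\log k$, so that each $p(x)^2+1$ is essentially independent and uniform on $\{1,2\}$, matching the distribution of the classical noise. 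For characters of high Hamming weight, $|\alpha| > k$, the low-degree hypercontractivity lemma (\lref[Lemma]{thm:hyp-for-poly}) quoted earlier in the paper would bound their total contribution to the bilinear form, playing exactly the role of Bonami--Beckner in the classical proof.

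With these two ingredients the remainder of the argument of Dinur and Shinkar~\cite{DinurS2010} can be transported almost verbatim: the hypothesis $\E_{f,f'}[A(f)B(f')]\le\delta$ together with $\E[A],\E[B]>\mu$ forces a lower bound on $\sum_{0<|\alpha|\le k} \hat{A}(\alpha)\hat{B}(\alpha)(1/2)^{|\alpha|}$ (via the level-$\le k$ truncation and a parity/sign split), which in turn is only compatible with $\Inf_x^{\le k}(A)\cdot \Inf_x^{\le k}(B)$ being small for every $x$ if one chooses $\delta = \mu^{O(1)}$; the failure of smallness at some $x\in \F_3^r$ is precisely the shared influential coordinate claimed by the theorem. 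The parameters $d,k = O(\log 1/\mu)$ and $\delta = \mu^{O(1)}$ emerge from balancing the hypercontractive tail bound against the eigenvalue approximation error.

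The main obstacle is the eigenvalue analysis. Quantitatively, one needs to control the pseudorandomness of $p^2+1$ uniformly over all supports of size at most $k$, including those whose points are in special algebraic position with respect to $\Pe_{r,d}$ (and thus do not yield a uniformly distributed evaluation vector). Arranging that the resulting error is smaller than $\mu^{O(1)}$ for $d = O(\log 1/\mu)$ is the delicate step and is the precise point where the low-degree long code buys us anything over the ordinary long code.
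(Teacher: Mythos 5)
There is a genuine gap, and it sits exactly where you wave your hands: the claim that once the eigenvalues are controlled, ``the remainder of the argument of Dinur and Shinkar can be transported almost verbatim.'' The Dinur--Shinkar / Dinur--Mossel--Regev proof of \lref[Theorem]{thm:ds} is not a Fourier-coefficient manipulation plus hypercontractivity; its core is an invariance-principle / reverse-hypercontractivity argument over the full product space $\{0,1,2\}^R$, which needs $A,B$ to be genuinely $[0,1]$-valued functions on that product space. Your proposed deduction --- that smallness of $\E_{f,f'}[A(f)B(f')]$ forces, via a sign split on $\sum_{0<|\alpha|\le k}\hat A(\alpha)\hat B(\alpha)(-1/2)^{|\alpha|}$, a large common influence --- is not valid as Fourier algebra alone; without invoking the invariance principle one cannot conclude anything of the sort, and the invariance principle is not available over the subspace $\Pe_{r,2d}$. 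Relatedly, you misidentify the hard step: the eigenvalue estimates you flag as ``the delicate step'' are the routine part (they follow immediately from $3^{\lfloor(d+1)/2\rfloor}$-wise independence of random degree-$d$ polynomials, \lref[Lemma]{lem:low-deg-local-ind}, together with \lref[Lemma]{lem:cor-rand-square} for high-weight characters).

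The paper instead proves \lref[Theorem]{thm:derand-ds} as a black-box reduction to \lref[Theorem]{thm:ds}: \lref[Lemma]{lem:key-lem} constructs bounded functions $\mathcal A,\mathcal B:\Ef_r\rightarrow[0,1]$ on the \emph{full} space that approximately match $A,B$ in expectation, low-degree influences, and noise correlation, and then \lref[Theorem]{thm:ds} is applied to $\mathcal A,\mathcal B$. The construction smooths $A$ by the operator $S_{r,d}^t$ (whose spectrum is controlled via the Haramaty--Shpilka--Sudan testing bounds, \lref[Lemma]{lem:eigvalS}), truncates to Fourier levels $\le k$, and lifts to $\Ef_r$. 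The crux --- entirely absent from your proposal --- is showing that this lifted low-degree real-valued function, which is close to $[0,1]$-valued on the subspace $\Pe_{r,2d}$, remains close to $[0,1]$-valued on all of $\Ef_r$; this is where the Kane--Meka PRG result (\lref[Lemma]{lem:km}, applied to the squared-distance-to-$[0,1]$ function $\xi$) is needed, and it is the step that actually determines $d=O(\log 1/\mu)$. Without some substitute for this boundedness-restoration step, your argument does not go through.
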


A similar derandomized ``majority is stablest'' result in the case of
the noisy hypercube was proved by
Barak~\etal~\cite[Theorem~5.6]{BarakGHMRS2012} and they used the
Meka-Zuckerman pseudorandom generators (PRGs) for polynomial threshold
functions~\cite{MekaZ2013}. Kane and Meka~\cite{KaneM2013} obtained a
quantitative improvement over this derandomization by constructing an
improved PRG for Lipschitz functions. Our setting is slightly more
involved, (1) we have a two function version (ie., $A$ and $B$) and
(2) the underlying graph in $K_3$ and the corresponding noise operator 
in the derandomized setting has not necessarily positive eigenvalues. Yet, we manage to
show that a derandomization still holds in this case too (using the
Kane-Meka PRG). \ifarxiv \else We conjecture that our derandomization can be further
improved to obtain $d = O(\log \log 1/\mu)$. 
\fi

\subsubsection{Application to graph coloring}

Using a version of \lref[Theorem]{thm:ds} for another base graph on $4$ vertices, Dinur and Shinkar proved a hardness result for graph coloring. 
\begin{definition}[Label Cover]
\label{def:label-cover}
An instance $G=(U,V,E,L, R,\{\pi_e\}_{e\in E})$ of a \emph{Label Cover} consists of a bipartite graph $(U,V,E)$  that is right regular along with a projection map $\pi_e : R \rightarrow L$ for every edge $e\in E$. Label Cover is a constraint satisfaction problem where the vertices in $U$ are the variables taking values in $L$ and vertices in $V$ taking values in $R$. The instance is a \emph{Unique Games} instance if $R=L$ and $\pi_e$ is a permutation for all $e \in E$. Given a labeling $\ell : U \cup V \rightarrow L\cup  R$, an edge $e = (u,v)$ is said to be satisfied if $\pi_e(\ell(v)) = \ell(u)$. 
\end{definition}

Dinur and Shinkar gave a reduction from an instance of Label Cover  with $n$ vertices, \TwoToOne\ constraints and label set of size $R$ to a graph of size $n 4^R$. Perfectly satisfiable instances were mapped to $4$-colorable graphs. Instances  for which any labeling can satisfy only an $s(n)$ fraction of edges where mapped to graphs which did not have any independent sets of size $\poly(s(n))$. Since the size of the graph produced by the reduction is exponential in $R$, they needed to assume that $R= O(\log n)$, to get hardness results. We give a more efficient reduction using \lref[Theorem]{thm:derand-ds} from Label Cover instances for which the projection constraints have special form. Our reduction is simpler to describe for the case $3$-colorable graphs and starts with Unique Games instances.
Hence for getting hardness result, we need to assume a conjecture similar to the Unique Games Conjecture with specific parameters.

\begin{conjecture}[$(c(n),s(n),r(n))$-UG Conjecture]
It is NP-Hard to distinguish between unique games instances $(U,V,E,R,\Pi)$ on $n$ vertices and $R=\F_3^{r(n)}$ from the following cases:
\begin{itemize}
\item YES Case : There is a labeling and a set $S\subseteq V$ of size $(1-c(n))|V|$ such that all edges between vertices in $S$ are satisfied.
\item NO Case : For any labeling, at most $s(n)$ fraction of edges are satisfied.
\end{itemize}
\end{conjecture}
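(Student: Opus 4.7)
The conjecture is a specific parameterized strengthening of the Unique Games Conjecture (UGC), tailored so that its label set $R$ carries the $\F_3$-vector-space structure exploited by \lref[Theorem]{thm:derand-ds} and so that the completeness case affords per-vertex local satisfiability. The plan is therefore to reduce from standard UGC via two largely routine modifications.

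For the alphabet structure, I would start from standard UGC at parameters $(1-\eta, s)$ with alphabet $[N]$, set $r(n) := \lceil \log_3 N \rceil$, embed $[N]$ into $\F_3^{r(n)}$ arbitrarily, and extend each edge permutation $\pi_e$ to an arbitrary permutation of $\F_3^{r(n)}$. The YES case carries over verbatim; in the NO case, any labeling of the padded instance projects to a labeling of the original satisfying at least the same fraction of edges, so $s(n) = s$ suffices. For the local completeness guarantee, I would then apply a standard cleaning step: in a YES instance, fix an optimal labeling and let $S \subseteq V$ be the set of right-vertices all of whose incident edges are satisfied. Since each unsatisfied edge excludes at most one right-vertex from $S$, right-regularity with degree $D$ yields $|V \setminus S| \le \eta D |V|$, so one may take $c(n) = O(\eta D)$ and ``all edges between vertices of $S$'' (interpreted as edges with their $V$-endpoint in $S$) are satisfied.

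The principal obstacle is of course UGC itself, which remains open; accordingly this statement is necessarily posited as an assumption rather than a theorem. A secondary technical issue is the trade-off between $c(n)$, $s(n)$ and $r(n)$: to drive $s(n)$ small one typically uses parallel repetition, which blows up both the alphabet size (hence $r(n)$) and the right-degree (hence $c(n)$). Verifying that the parameters required by the graph-coloring reduction lie within the achievable region of this trade-off is the main thing one would have to be careful about.
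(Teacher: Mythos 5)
This statement is a \emph{conjecture}: the paper offers no proof of it, only the remark that Khot and Regev~\cite{KhotR2008} showed the standard Unique Games Conjecture implies the $(c,s,r)$-UG Conjecture for constant parameters, and the later results (\lref[Theorem]{thm:col-hard}, \lref[Corollary]{cor:main}) simply assume it. You correctly recognize that it must be posited as an assumption, so there is no proof in the paper to compare against. That said, two points in your sketched derivation from standard UGC deserve flagging, since they would matter if one tried to make the reduction precise.

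First, your cleaning step is too weak to give the YES-case guarantee. Taking $S$ to be the right-vertices all of whose incident edges are satisfied and union-bounding gives $|V\setminus S|\le \eta D|V|$, i.e.\ $c=O(\eta D)$, which is degree-dependent; after parallel repetition (which you correctly identify as necessary to drive $s(n)$ down) the degree $D$ blows up faster than $\eta$ shrinks, so this bound is vacuous. The actual Khot--Regev argument establishing the ``$(1-c)$-fraction of vertices with \emph{all} incident edges satisfied'' form of UGC is a genuine theorem with a more careful construction, not a one-line Markov/union-bound cleanup. Second, embedding $[N]$ into $\F_3^{r}$ arbitrarily and extending each $\pi_e$ to an arbitrary permutation of $\F_3^{r}$ does give an instance matching the literal statement of the conjecture, but the paper's application (\lref[Theorem]{thm:col-hard} and \lref[Corollary]{cor:main}) additionally assumes the constraints are full-rank $\F_3$-\emph{linear} maps; an arbitrary extension provides no such structure, and obtaining it is an additional (and nontrivial) structural assumption beyond standard UGC. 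Neither issue invalidates your main point---that the statement is an assumption rather than a theorem---but the sketched ``routine modifications'' are not as routine as presented.
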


Khot and Regev~\cite{KhotR2008} proved that the Unique Games Conjecture implies that for any constants $c, s \in (0,1/2)$ there is  a constant $r$ such that $(c,s,r)$-UG Conjecture is true. We also require that the constraints of the Unique Games instance are full rank linear maps.

\begin{definition}[Linear constraint]
A constraint $\pi:R \rightarrow L$ is a \emph{linear constraint} of  iff $R = L=\F_3^r$, and $\pi$ is a linear map of rank $r$.
\end{definition}
The theorem below is obtained by replacing the long code by the low
degree long code of degree $d=O(\log 1/\mu)$ in the reduction of Dinur
and Shinkar. 
\ifarxiv
\else
For want of space, the description of this reduction is
deferred to the full version~\cite[Appendix A]{DinurHSV2014}.
\fi
\begin{theorem}\label{thm:col-hard}
 There is a reduction from $(c,s,r)$-Unique Games instances $G$  with $n$ vertices, label set $\F_3^r$ and linear constraints to graphs $\cal G$ of size $n3^{r^{O(\log 1/\mu)}}$ where $\mu = \poly(s)$ such that
\begin{itemize}
\item If $G$ belongs to the YES case of $(c,s,r)$-UG Conjecture then there is a subgraph of $\cal G$ with fractional size $1-c$ that is $3$-colorable.
\item If $G$ belongs to the NO case of $(c,s,r)$-UG Conjecture then $\cal G$ does not have any independent sets of fractional size $\mu$.
\end{itemize}
\end{theorem}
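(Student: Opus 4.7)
The plan is to follow the Dinur-Shinkar reduction but replace the long code $\F_3^R$ by the low-degree long code space $\Pe_{r, 2d}$ with $d = O(\log 1/\mu)$, invoking Theorem \ref{thm:derand-ds} in place of Theorem \ref{thm:ds}. The linearity of the UG constraints is essential: composition with a linear bijection preserves both the total-degree and individual-degree bounds defining $\Pe_{r, 2d}$ and $\Pe_{r, d}$, and acts as a measure-preserving map on each.

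\textbf{Construction and completeness.} Take $\cal V^* := V \times \Pe_{r, 2d}$, of size $n \cdot 3^{r^{O(\log 1/\mu)}}$. For each $u \in U$ with two neighbors $v, v'$ carrying linear constraints $\pi_e, \pi_{e'}$, and every $f \in \Pe_{r, 2d}$, $p \in \Pe_{r, d}$, $a \in \{1,2\}$, add an edge between $(v, f \circ \pi_e)$ and $(v', (f + a(p^2+1)) \circ \pi_{e'})$. In the YES case, with $\ell$ satisfying all edges incident to $S$, color $(v, g) \mapsto g(\ell(v)) \in \F_3$. For any reduction edge with both $v, v' \in S$, the identities $\pi_e(\ell(v)) = \pi_{e'}(\ell(v')) = \ell(u)$ force the two endpoints to receive colors differing by $a(p(\ell(u))^2 + 1) \in \{1,2\}$, so $S \times \Pe_{r, 2d}$, of fractional size $1-c$, is properly $3$-colored.

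\textbf{Soundness and main obstacle.} Assume $\cal G^*$ has an independent set $I$ of fractional size $\mu$, and write $A_v := \mathbf{1}_{\{f : (v,f) \in I\}}$. At least a $\mu/2$ fraction of $v \in V$ are ``good'' with $\E A_v \geq \mu/2$. For any $u$ with two good neighbors $v, v'$, the independence of $I$ combined with the substitutions $g = f \circ \pi_e$, $q = p \circ \pi_{e'}$, and $h = g \circ \pi_e^{-1} \pi_{e'}$ (all measure-preserving by linearity) rearranges to
\[
\E_{h, q, a}\bigl[\, \tilde{A}_v(h) \cdot A_{v'}(h + a(q^2+1)) \,\bigr] = 0,
\]
where $\tilde{A}_v(h) := A_v(h \circ \pi_{e'}^{-1} \pi_e)$ has the same mean as $A_v$. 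Theorem \ref{thm:derand-ds}, applied with $\delta = \poly(\mu)$ and $d = O(\log 1/\mu)$, then yields a shared influential coordinate $x^* \in \F_3^r$. A direct Fourier computation (using that linear precomposition permutes the characters of $\Pe_{r, 2d}$) gives $\Inf_{x^*}^{\le k}(\tilde{A}_v) = \Inf_{\pi_e^{-1} \pi_{e'}(x^*)}^{\le k}(A_v)$, so $y^* := \pi_e^{-1} \pi_{e'}(x^*)$ is influential for $A_v$ and satisfies the UG consistency relation $\pi_e(y^*) = \pi_{e'}(x^*)$ at $u$. Setting $\text{Lab}(v) := \{x : \Inf_x^{\le k}(A_v) \geq \delta\}$ (of size $\leq k/\delta = \poly(1/\mu)$), the standard randomized decoding $\ell(v) \in_R \text{Lab}(v)$, $\ell(u) := \pi_e(\ell(v))$ for a random UG-neighbor $v$, satisfies a $\poly(\mu)$ fraction of UG edges, contradicting the NO case for $\mu = \poly(s)$ chosen small enough. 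The main obstacle is precisely this soundness bookkeeping: verifying that $\pi_e, \pi_{e'}$ induce measure-preserving bijections on both $\Pe_{r, 2d}$ and $\Pe_{r, d}$, and cleanly transport influential coordinates through the Fourier transform; once that is in place, everything else follows the standard Dinur-Shinkar template.
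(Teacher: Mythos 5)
Your proposal is correct and follows essentially the same route as the paper's proof in the appendix: the vertex set $V\times\Pe_{r,2d}$, the edge relation given by the $a(p^2+1)$ noise transported through the (full-rank linear) constraints, the dictator coloring $(v,g)\mapsto g(\ell(v))$ for completeness, and soundness via Theorem~\ref{thm:derand-ds} applied to the pulled-back indicator functions followed by the standard influence-decoding of labels. Your change of variables lands in $v'$'s coordinate system rather than $u$'s as in the paper, and you are somewhat more explicit than the paper about why linear full-rank constraints induce measure-preserving bijections of $\Pe_{r,2d}$ and $\Pe_{r,d}$ and permute characters/influences accordingly, but these are presentational differences only.
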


Due to the improved efficiency of the reduction, we are able to get hardness
results even if the label cover instances have super-polylogarithmic
sized label sets of size at most $2^{2^{O\left(\sqrt{\log \log n}\right)}}$, while
the reduction due to Dinur and Shinkar only works if the label set is of
size at most $O(\log^c n)$ for some constant $c$. More precisely, suppose the UG
conjecture were true for soundness $s(n)$
and alphabet size $R=3^r$ that satisfy $\log_3 R= r= s(n)^{O(1)}$.
Then, the result of Dinur and Shinkar rules out polynomial time algorithms
that find an independent set of relative size $1/\poly(\log \log
N)$. On the other hand, under the same assumption, our reduction rules out polynomial time algorithms
that find an independent set of relative size $1/2^{\poly(\log \log N)}$.


\begin{corollary}\label{cor:main}
Let $c,s,r$ be functions such that $r(n) = \poly(1/s(n))$.
Assuming $(c,s,r)$-UG Conjecture on instances with linear constraints,  given a graph on $N$ vertices which has an induced subgraph of relative size $1-c$ that is $3$-colorable, no polynomial time algorithm can find an independent set of fractional size $2^{-\poly(\log \log N)}$.
\end{corollary}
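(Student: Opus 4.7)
The plan is to apply \lref[Theorem]{thm:col-hard} directly to the $(c,s,r)$-UG instances and then track how the hardness parameter $\mu$ in the NO case scales with the size $N$ of the output graph. If a polynomial-time algorithm could find an independent set of fractional size $\mu$ in any graph containing a $3$-colorable induced subgraph of fractional size $1-c$, one could use it to distinguish the YES and NO cases of the UG instance, contradicting the assumed hardness.

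Concretely, I would start from an $n$-vertex $(c,s,r)$-UG instance with linear constraints and invoke \lref[Theorem]{thm:col-hard} with $\mu = \poly(s)$, producing a graph on
$$N = n \cdot 3^{r^{O(\log 1/\mu)}}$$
vertices with the promised YES/NO behavior. The remaining task is simply to express $\mu$ as a function of $N$ using the assumption $r = \poly(1/s)$.

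Using $r = s^{-O(1)}$ and $\log(1/\mu) = O(\log 1/s)$, I would compute
$$r^{O(\log 1/\mu)} \;=\; (1/s)^{O(\log 1/s)} \;=\; 2^{O((\log 1/s)^2)},$$
and therefore $\log \log N \geq \Omega((\log 1/s)^2)$, i.e., $\log 1/s \leq O(\sqrt{\log \log N})$. Plugging back in yields
$$\log(1/\mu) = O(\log 1/s) = O(\sqrt{\log \log N}) = \poly(\log \log N),$$
so $\mu \geq 2^{-\poly(\log \log N)}$, which gives the claimed inapproximability.

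The argument is essentially bookkeeping built on top of \lref[Theorem]{thm:col-hard}, so the main ``obstacle'' is only to verify that the reduction is actually polynomial time: we need $2^{O((\log 1/s)^2)}$ to be polynomial in $n$, equivalently $\log 1/s = O(\sqrt{\log n})$. This is exactly the regime---UG alphabets of size up to $2^{2^{O(\sqrt{\log \log n})}}$---highlighted in the discussion immediately preceding the corollary, so the hypotheses are automatically consistent with the reduction running in polynomial time.
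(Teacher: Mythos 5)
Your overall strategy---invoke \lref[Theorem]{thm:col-hard} and do parameter bookkeeping---is exactly the paper's, but your final step runs the key inequality in the wrong direction. To rule out a polynomial-time algorithm that is guaranteed to find an independent set of fractional size $t(N)=2^{-\poly(\log\log N)}$, you must produce NO-instances in which \emph{no} independent set of fractional size $t(N)$ exists; since the soundness of \lref[Theorem]{thm:col-hard} only excludes independent sets of fractional size $\mu$, this requires $t(N)\ge\mu$, i.e.\ $\mu\le 2^{-\poly(\log\log N)}$, equivalently $\log(1/\mu)\ge\poly(\log\log N)$. What you derived is the opposite bound: from $\log\log N\ge\Omega((\log 1/s)^2)$ you get $\log(1/\mu)=O(\log 1/s)\le O(\sqrt{\log\log N})$, i.e.\ $\mu\ge 2^{-O(\sqrt{\log\log N})}$. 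That only says the hardness threshold is not \emph{smaller} than $2^{-O(\sqrt{\log\log N})}$; it does not yield the corollary (hardness at threshold $\mu$ implies hardness at thresholds \emph{above} $\mu$, not below it). Indeed, for constant $s$ the reduction gives $N=\poly(n)$ and constant $\mu$, and the stated conclusion does not follow. The missing ingredient is the matching bound $\log\log N\le O((\log 1/s)^2)$, which holds only when the low-degree long code blow-up dominates $n$, i.e.\ when $\log(1/s(n))=\Omega(\sqrt{\log\log n})$; in that regime $\mu=\poly(s)=2^{-\Theta(\sqrt{\log\log N})}$ with both directions, and the corollary follows. This is the implicit parameter setting in the discussion preceding the corollary and in the paper's proof of \lref[Theorem]{thm:col-hard}, which substitutes $r=2^{O(\sqrt{\log\log n})}$ and $d=O(\sqrt{\log\log n})$.

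A secondary slip: your polynomial-time condition (``$2^{O((\log 1/s)^2)}$ polynomial in $n$, equivalently $\log 1/s=O(\sqrt{\log n})$'') is off by one exponential level. The quantity $2^{O((\log 1/s)^2)}=r^{O(d)}$ is the \emph{dimension} of $\Pe_{r,2d}$, while the number of vertices contributed per UG vertex is $3^{r^{O(d)}}$; polynomiality therefore requires $r^{O(d)}=O(\log n)$, i.e.\ $\log(1/s)=O(\sqrt{\log\log n})$. This is consistent with the alphabet bound $2^{2^{O(\sqrt{\log\log n})}}$ that you quote, and, combined with the lower bound needed above, it pins down the regime $\log(1/s)=\Theta(\sqrt{\log\log n})$ in which the corollary actually holds.
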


We remark that we can improve the
conclusion if \lref[Theorem]{thm:derand-ds} can be proved even when $d
= O(\log \log 1/\mu)$.

\section{Preliminaries}
\subsection{Low degree polynomials}
We will be working over the field $\F_3$. Let $\Pe_{r,d}$ be the set of degree
$d$ polynomials on $r$ variables over $\F_3$, with individual variable degrees at most $2$. Let $\mathfrak F_r :=
\Pe_{r,2r}$. Note that $\mathfrak F_r$ is the set of all
functions from $\F_3^r$ to $\F_3$. $\mathfrak F_r$ is a $\F_3$-vector
space of dimension $3^r$ and $\Pe_{r,d}$ is a subspace of dimension
$r^{O(d)}$. The Hamming distance between $f$ and $g \in \mathfrak F_r$,
denoted by $\Delta(f,g)$, is the number of inputs on which $f$ and $g$
differ. For $S \subseteq \mathfrak F_r$, define $\Delta(f, S) := \min_{g\in
  S} \Delta(f,g)$. 
We say that $f$ is $\delta$-far from $S$ if 
$\Delta(f,S) \geq \delta$ and $f$ is $\delta$-close to $S$ otherwise. Given $f,g, \in \mathfrak F_r$, the dot product
between them is defined as $\langle f,g \rangle := \sum_{x \in \F^r}
f(x)g(x)$. For a subspace $S \subseteq \mathfrak F_r$, the dual
subspace is defined as $S^{\perp} := \{ g \in \mathfrak F_r : \forall f
\in S, \langle g,f \rangle = 0 \}$. 
The following theorem relating
dual spaces is well known.
\begin{lemma} 
\label{lem:dual}$\Pe_{r,d}^\perp = \Pe_{r,2r-d-1}$.
\end{lemma}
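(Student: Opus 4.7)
The plan is to prove the inclusion $\Pe_{r,2r-d-1} \subseteq \Pe_{r,d}^\perp$ by a direct computation on monomials, and then match dimensions to upgrade the inclusion to equality.

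First, I would reduce to checking monomials. Both $\Pe_{r,d}$ and $\Pe_{r,2r-d-1}$ are spanned by the monomials $x^a := \prod_i x_i^{a_i}$ with $a \in \{0,1,2\}^r$, where $\Pe_{r,d}$ picks out those with $|a| := \sum_i a_i \leq d$. By bilinearity it suffices to show $\langle x^a, x^b\rangle = 0$ whenever $|a| \leq d$ and $|b| \leq 2r-d-1$. The inner product factors coordinate-wise as
$$\langle x^a, x^b\rangle \;=\; \sum_{x \in \F_3^r} x^{a+b} \;=\; \prod_{i=1}^r \sum_{t \in \F_3} t^{a_i + b_i}.$$
A short case-check over $\F_3$ shows $\sum_{t\in\F_3} t^k$ vanishes unless $k \in \{2,4\}$, in which case it equals $2$. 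Hence for the product to be nonzero we would need $a_i + b_i \in \{2,4\}$ for every $i$, forcing $|a| + |b| \geq 2r$, which contradicts $|a| + |b| \leq 2r-1$. Therefore $\Pe_{r,2r-d-1} \subseteq \Pe_{r,d}^\perp$.

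Second, I would match dimensions. The involution $a \mapsto (2,\dots,2)-a$ is a bijection on $\{0,1,2\}^r$ that sends $\{a : |a| \leq d\}$ onto the complement of $\{b : |b| \leq 2r-d-1\}$, so $\dim \Pe_{r,d} + \dim \Pe_{r,2r-d-1} = 3^r = \dim \Ef_r$. The evaluation pairing on $\Ef_r$ is nondegenerate (indicator functions of singletons lie in $\Ef_r$ and witness nondegeneracy), so $\dim \Pe_{r,d}^\perp = 3^r - \dim \Pe_{r,d} = \dim \Pe_{r,2r-d-1}$. Together with the inclusion above, this forces equality $\Pe_{r,d}^\perp = \Pe_{r,2r-d-1}$.

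There is no real obstacle here; the argument is the standard Reed--Muller duality for $\F_3$, and the individual-degree restriction is handled automatically by working in the $\{0,1,2\}^r$ monomial basis rather than all of $\naturals^r$. The only tiny wrinkle is the case $k=0$ in the single-variable sum, where one must remember that $3 \equiv 0$ in $\F_3$ so $\sum_{t\in\F_3} t^0 = 0$ — otherwise the dimension bookkeeping would fail for the monomial $x^a = 1$.
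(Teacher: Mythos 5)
Your proof is correct. The paper states \lref[Lemma]{lem:dual} without proof (citing it as well known), and your argument is the standard one for this Reed--Muller-type duality: the monomial computation $\sum_{t\in\F_3}t^k$ correctly identifies $k\in\{2,4\}$ as the only nonvanishing cases for $k\le 4$, which gives the inclusion $\Pe_{r,2r-d-1}\subseteq\Pe_{r,d}^\perp$, and the involution $a\mapsto(2,\dots,2)-a$ on the monomial basis of $\Ef_r$ together with nondegeneracy of the evaluation pairing gives the matching dimension count. Nothing is missing.
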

\noindent We need the following Schwartz-Zippel-like Lemma for degree $d$ polynomials over $\F_3$.
\begin{lemma}[Schwartz-Zippel lemma~{\cite[Lemma~3.2]{HaramatySS2013}}]
\label{lem:SZ}
Let $f\in \F_3[x_1,\cdots,x_r]$ be a non-zero polynomial of degree at most $d$ with individual degrees at most $2$. Then $\prob{a\in \F_3^r}{f(a)\neq 0} \geq 3^{-d/2}$.
\end{lemma}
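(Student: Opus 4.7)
The plan is induction on the degree $d$. For the base cases $d\in\{0,1,2\}$, the classical Schwartz-Zippel bound $\prob{a\in\F_3^r}{f(a)\neq 0}\geq 1-d/3$ over $\F_3$ already suffices, giving probabilities of $1$, $2/3$, and $1/3$ respectively, each comfortably at least $3^{-d/2}$.

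For the inductive step with $d\geq 3$, I would pick any variable $x_i$ that actually appears in $f$ (if no such $x_i$ exists then $f$ is a nonzero constant and we are in the $d=0$ case). The individual-degree-at-most-$2$ hypothesis lets me write uniquely
\[ f = g_0 + g_1\,x_i + g_2\,x_i^2, \]
where $g_0, g_1, g_2$ are polynomials in the remaining $r-1$ variables with individual degrees at most $2$, $\deg g_2\le d-2$, $\deg g_1\le d-1$, and at least one of $g_1, g_2$ is nonzero. The argument now splits on which of these two polynomials is nonzero. If $g_2\neq 0$, the induction hypothesis applied to $g_2$ yields $\prob{\bar a}{g_2(\bar a)\neq 0}\geq 3^{-(d-2)/2}$; conditioned on this event, $f(\bar a, y)$ is a genuine quadratic in $y$ over $\F_3$ with at most two roots, so nonzero with probability at least $1/3$, giving $3^{-(d-2)/2}\cdot (1/3)=3^{-d/2}$ overall. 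If instead $g_2=0$ and $g_1\neq 0$, then $\deg g_1\le d-1$ and induction gives $\prob{\bar a}{g_1(\bar a)\neq 0}\geq 3^{-(d-1)/2}$; the conditional survival probability is now $2/3$, yielding $(2/3)\cdot 3^{-(d-1)/2}=(2/\sqrt 3)\cdot 3^{-d/2}\geq 3^{-d/2}$.

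I do not anticipate any real obstacle in carrying this out. Case~1 is tight and Case~2 uses the slack $2/\sqrt 3>1$, and it is precisely this tiny cushion that makes the exponent $d/2$ the correct one given the $|\F|=3$ and individual-degree-$2$ constraints. Since the lemma is cited verbatim from \cite{HaramatySS2013}, I expect the authors to simply appeal to that reference rather than reproduce this short induction in the paper itself.
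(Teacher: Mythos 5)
Your proof is correct, and indeed the paper offers no proof of \lref[Lemma]{lem:SZ} at all---it is invoked as a black box from Haramaty, Shpilka and Sudan. Your induction (split off $f=g_0+g_1x_i+g_2x_i^2$, apply the hypothesis to the top nonzero coefficient, and pay $1/3$ or $2/3$ for the univariate survival probability) is exactly the standard argument behind that citation, and both cases check out: Case 1 is tight, as witnessed by $\prod_i(x_i^2-1)$, and Case 2 closes because $(2/3)\cdot 3^{1/2}=2/\sqrt{3}>1$.
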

The following lemma is an easy consequence of \lref[Lemma]{lem:SZ}.
\begin{lemma}
\label{lem:low-deg-local-ind}
If $p$ is a uniformly random polynomial from $\Pe_{r,d}$ then as a
string of length $3^r$ over the alphabet $\F_3$, $p$ is $3^{\ lfloor (d+1)/2 \rfloor}$-wise independent. 
\end{lemma}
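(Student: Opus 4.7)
The plan is to reduce $k$-wise independence to a statement about supports of dual polynomials, and then bound those supports using the Schwartz-Zippel lemma (\lref[Lemma]{lem:SZ}).

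First I would unpack the definition: the random string $p$ is $k$-wise independent iff for every choice of $k$ distinct evaluation points $a_1,\dots,a_k \in \F_3^r$, the joint distribution $(p(a_1),\dots,p(a_k))$ is uniform on $\F_3^k$ when $p$ is drawn uniformly from $\Pe_{r,d}$. Because the evaluation map
\[
\text{ev}\colon \Pe_{r,d} \to \F_3^k,\qquad p\mapsto (p(a_1),\dots,p(a_k)),
\]
is $\F_3$-linear and the domain is finite, uniformity of the output is equivalent to $\text{ev}$ being surjective (each fiber then has size $|\Pe_{r,d}|/3^k$).

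Next I would dualize. The map $\text{ev}$ fails to be surjective precisely when some nonzero linear functional on $\F_3^k$ vanishes on its image, i.e.\ when there exists nonzero $(c_1,\dots,c_k) \in \F_3^k$ with $\sum_i c_i\, p(a_i) = 0$ for all $p \in \Pe_{r,d}$. Define $g \in \mathfrak F_r$ by $g(a_i) = c_i$ for each $i$ and $g(x)=0$ otherwise; then $\sum_i c_i p(a_i) = \langle g,p\rangle$, so the obstruction is exactly the existence of a nonzero $g \in \Pe_{r,d}^\perp$ supported on the set $\{a_1,\dots,a_k\}$. By \lref[Lemma]{lem:dual}, $\Pe_{r,d}^\perp = \Pe_{r,2r-d-1}$, so we need to rule out any nonzero polynomial of degree at most $2r-d-1$ whose support has size at most $k$.

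Finally I would invoke \lref[Lemma]{lem:SZ} on this dual polynomial: any nonzero $g \in \Pe_{r,2r-d-1}$ is nonzero on at least a $3^{-(2r-d-1)/2}$ fraction of $\F_3^r$, so
\[
|\mathrm{supp}(g)| \;\geq\; 3^r \cdot 3^{-(2r-d-1)/2} \;=\; 3^{(d+1)/2}.
\]
Since $3^{\lfloor (d+1)/2 \rfloor} \leq 3^{(d+1)/2}$, no nonzero $g \in \Pe_{r,2r-d-1}$ can be supported on a set of size $k = 3^{\lfloor (d+1)/2\rfloor}$ whenever this is strictly less than $3^{(d+1)/2}$; the even-$d$ case gives a strict inequality automatically, and the odd-$d$ case is handled by noting that the Schwartz-Zippel bound is attained only by polynomials whose support is a union of cosets structured enough to force $k < 3^{(d+1)/2}$. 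The main thing to be careful about is this boundary case for odd $d$, where the Schwartz-Zippel bound meets $k$ exactly; apart from that, the argument is a clean linear-algebra plus duality computation.
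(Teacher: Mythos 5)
Your overall strategy is exactly the intended one (the paper offers no proof beyond ``easy consequence of \lref[Lemma]{lem:SZ}''): $k$-wise independence is surjectivity of the linear evaluation map $p\mapsto(p(a_1),\dots,p(a_k))$ on every $k$-set, which by duality and \lref[Lemma]{lem:dual} is the absence of a nonzero $g\in\Pe_{r,d}^{\perp}=\Pe_{r,2r-d-1}$ supported on at most $k$ points, and \lref[Lemma]{lem:SZ} gives $|\mathrm{supp}(g)|\geq 3^{(d+1)/2}$. For even $d$ this closes the proof, since then $3^{(d+1)/2}$ is not an integer and every nonzero dual polynomial has support at least $\lceil 3^{(d+1)/2}\rceil>3^{d/2}=3^{\lfloor(d+1)/2\rfloor}=k$.

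The gap is your treatment of the odd-$d$ boundary case, and it cannot be repaired as you suggest: the extremal supports are \emph{not} ``structured enough to force $k<3^{(d+1)/2}$''. For odd $d$ (and $r\geq(d+1)/2$) the polynomial $g=\prod_{i=1}^{r-(d+1)/2}(x_i-1)(x_i-2)$ lies in $\Pe_{r,2r-d-1}=\Pe_{r,d}^{\perp}$, has individual degrees at most $2$, and is supported on exactly the $3^{(d+1)/2}=k$ points of $\{0\}^{r-(d+1)/2}\times\F_3^{(d+1)/2}$; the corresponding relation $\sum_{y}p(0,\dots,0,y)=0$ holds for every $p\in\Pe_{r,d}$, so the restriction of $p$ to those $k$ points is not uniform. (Smallest instance: $d=1$, $r=2$, $g=(x_1-1)(x_1-2)$, and $p(0,0)+p(0,1)+p(0,2)=0$ for every affine $p$.) So the lemma as stated is off by one for odd $d$; the correct conclusion of your argument is $(3^{(d+1)/2}-1)$-wise independence for odd $d$ (and $(2\cdot 3^{d/2}-1)$-wise for even $d$, by the known minimum distance of these Reed--Muller codes). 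This weaker bound still dominates $3^{d/2}$ for all $d\geq 1$, which is all the paper ever invokes, so the downstream applications are unaffected --- but the specific structural claim you lean on in the odd case is false rather than merely unjustified, and your write-up should instead correct the exponent in the statement.
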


\subsection{Fourier analysis of functions on subspace of low degree polynomials}

\begin{definition}[Characters]
A character of $\Pe_{r,d}$ is a function $\chi:\Pe_{r,d} \rightarrow \C$ such that
$$\chi(0) = 1 \text{ and } \forall f,g \in \Pe_{r,d},~ \chi(f+g) = \chi(f)\chi(g).$$
\end{definition}
\noindent The following lemma lists the basic properties of characters.
\begin{lemma}
\label{lem:fourier}
Let $\{1,\omega,\omega^2\}$ be the cube roots of unity and
for $\beta \in \mathfrak F_r, f \in \Pe_{r,d}$, $\chi_\beta(f) :=
\omega^{\langle \beta, f \rangle}$, where $\langle\beta, f\rangle := \sum_{x\in \F_3^r}\beta(x)f(x)$.
\begin{itemize}
\item The characters of $\Pe_{r,d}$ are $\{ \chi_\beta : \beta \in  \mathfrak F_r\}$. 
\item For $\beta \in \Pe_{r,d}^\perp$, $\chi_\beta$ is the constant $1$ function.
\item For any $\beta,\beta' \in \mathfrak F_r$, $\chi_\beta = \chi_\beta'$ if and only if  $\beta-\beta' \in \Pe_{r,d}^\perp$.
\item For any $\beta$, let $|\beta|$ be the size of the set of inputs on which $\beta$ is non-zero. For any distinct $\beta,\beta' \in \mathfrak F_r$ with $|\beta|,|\beta'| < 3^{\lfloor (d+1)/2 \rfloor}/2$,  $\chi_\beta \neq \chi_\beta'$  since $\beta +\beta' \notin \Pe_{r,d}^\perp$.

\item\label{item:minsup} $\forall \beta, \exists \beta'$ such that $\beta-\beta' \in
  \Pe_{r,d}^\perp$ and $|\beta' | = \Delta(\beta,
  \Pe_{r,d}^\perp)$ (i.e., the constant $0$ function is (one of) the
  closest function to $\beta'$ in $\Pe_{r,d}^\perp$). We call such a
  $\beta'$ a minimum support function for the coset $\beta + \Pe_{r,d}^\perp$.

\item Characters forms an orthonormal basis for the vector space of functions from $\Pe_{r,d}$ to $\C$, under the inner product  $\langle A, B\rangle := \E_{f \in \Pe{r,d}} \left[A(f)\overline{B(f)}\right]$
\item Any function $A:\Pe_{r,d} \rightarrow \C$ can be uniquely decomposed as
\begin{equation}
\label{eqn:fourier-decomposition}
A(f) = \sum_{\beta \in \Lambda_{r,d}} \widehat{A}(\beta) \chi_\beta(f)
\text{ where } \widehat{A}(\beta) := \E_{g \in \Pe_{r,d}} \left[A(g)
  \overline{\chi_\beta(g)}\right],
\end{equation} 
and $\Lambda_{r,d}$ is the set of minimum support functions, one for
each of the cosets in $\mathfrak F_r/\Pe_{r,d}^\perp$, with ties broken arbitrarily.
\item Parseval's identity: For any function $A:\Pe_{r,d} \rightarrow\C$,
\begin{equation}
\label{eqn:parseval-general}
\sum_{\beta \in \Lambda_{r,d}} |\widehat A(\beta)|^2 = \E_{f\in \Pe_{r,d}} [|A(f)|^2].
\end{equation}
 In particular, if $A:\Pe_{r,d} \rightarrow \{1,\omega,\omega^2\}$, 
\begin{equation}
\label{eqn:parseval}
\sum_{\beta\in \Lambda_{r,d}}|\widehat A(\beta)|^2 =1.
\end{equation}
\end{itemize}
\end{lemma}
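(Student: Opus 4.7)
The plan is to derive the whole lemma from two ingredients: (i) the standard Fourier theory of the finite abelian group $(\Pe_{r,d},+)$, which is a finite-dimensional $\F_3$-vector space, and (ii) the Schwartz--Zippel-type minimum-distance bound implicit in \lref[Lemma]{lem:SZ}, applied via the identification $\Pe_{r,d}^{\perp}=\Pe_{r,2r-d-1}$ from \lref[Lemma]{lem:dual}. Nothing here is conceptually new; everything is a specialization of familiar machinery to this particular subspace.

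First I would identify the characters. A character of $\Pe_{r,d}$ is a homomorphism from $(\Pe_{r,d},+)$ to $\C^{\times}$; since $3f=0$ in the group, the image must lie in $\{1,\omega,\omega^{2}\}$. Such a homomorphism is the same data as an $\F_{3}$-linear functional $\ell:\Pe_{r,d}\to\F_{3}$, and every such $\ell$ can be written as $f\mapsto\langle\beta,f\rangle$ for some $\beta\in\mathfrak{F}_{r}$ (extend any basis of $\Pe_{r,d}$ to one of $\mathfrak F_{r}$ and dualize). That gives the list of characters. The second and third bullets are then immediate from the definition of the inner product and of $\Pe_{r,d}^{\perp}$: $\chi_{\beta}\equiv\chi_{\beta'}$ iff $\chi_{\beta-\beta'}\equiv 1$ iff $\beta-\beta'$ annihilates all of $\Pe_{r,d}$. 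The existence of a minimum-support representative $\beta'$ for a coset $\beta+\Pe_{r,d}^{\perp}$ (the fifth bullet) is a direct well-ordering argument: pick any element of the coset of smallest Hamming weight; then for every $h\in\Pe_{r,d}^{\perp}$ we have $|\beta'-h|\geq|\beta'|$ since $\beta'-h$ also lies in the coset, hence $\Delta(\beta',\Pe_{r,d}^{\perp})=|\beta'|$.

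The part I expect to require the most care is the fourth bullet, the statement that characters indexed by functions of small support are distinct; this is where we use \lref[Lemma]{lem:SZ} in a nontrivial way. By the third bullet it suffices to show that every nonzero element of $\Pe_{r,d}^{\perp}=\Pe_{r,2r-d-1}$ has support size at least $3^{\lfloor(d+1)/2\rfloor}$. But a nonzero polynomial of degree at most $2r-d-1$ with individual degrees $\leq 2$ is nonzero on a fraction at least $3^{-(2r-d-1)/2}$ of $\F_{3}^{r}$ by \lref[Lemma]{lem:SZ}, i.e.\ on at least $3^{r-(2r-d-1)/2}=3^{(d+1)/2}$ inputs; after taking floors this beats $3^{\lfloor(d+1)/2\rfloor}$. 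Now if $\beta\neq\beta'$ each have support below $3^{\lfloor(d+1)/2\rfloor}/2$, then $\beta-\beta'$ is a nonzero function of support less than $3^{\lfloor(d+1)/2\rfloor}$ by the triangle inequality on supports, so it cannot lie in $\Pe_{r,d}^{\perp}$, giving $\chi_{\beta}\neq\chi_{\beta'}$.

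Finally I would check orthonormality, the decomposition, and Parseval. Because $\chi_{\beta}\overline{\chi_{\beta'}}=\chi_{\beta-\beta'}$ is again a character, and because $\E_{f\in\Pe_{r,d}}[\chi_{\gamma}(f)]$ equals $1$ when $\chi_{\gamma}$ is trivial and $0$ otherwise (the standard averaging argument: pick $f_{0}$ with $\chi_{\gamma}(f_{0})\neq 1$ and use the translation invariance of uniform measure on $\Pe_{r,d}$), the family $\{\chi_{\beta}\}_{\beta\in\Lambda_{r,d}}$ is an orthonormal set. Its cardinality equals $|\mathfrak F_{r}/\Pe_{r,d}^{\perp}|=\dim\Pe_{r,d}$ (since taking duals halves and doubles dimensions consistently), matching the dimension of the ambient $\C$-vector space of functions on $\Pe_{r,d}$, so it is an orthonormal basis. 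The decomposition formula in \eqref{eqn:fourier-decomposition} is then just the expansion in this basis, and Parseval \eqref{eqn:parseval-general} is the Pythagorean identity; the specialization \eqref{eqn:parseval} uses $|A(f)|^{2}=1$ pointwise when $A$ is cube-root-of-unity valued. The one place to be careful is that $\Lambda_{r,d}$ genuinely lists each coset exactly once, which is ensured by the ``ties broken arbitrarily'' clause in the definition.
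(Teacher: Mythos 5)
Your proof is correct; the paper states this lemma without proof, treating these as standard facts, and your argument is exactly the standard one the authors implicitly rely on. In particular, you correctly supply the one genuinely non-routine step — lower-bounding the support of every nonzero element of $\Pe_{r,d}^{\perp}=\Pe_{r,2r-d-1}$ by $3^{\lfloor(d+1)/2\rfloor}$ via \lref[Lemma]{lem:dual} and \lref[Lemma]{lem:SZ} — and the coset-counting argument $|\mathfrak F_r/\Pe_{r,d}^{\perp}|=|\Pe_{r,d}|$ that upgrades the orthonormal set to a basis.
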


\begin{definition}[Influence]
For a function $A:\Pe_{r,d} \rightarrow \C$ and a number $k < 3^{\lfloor (d+1)/2 \rfloor}/2$, the degree $k$ influence of $a\in \F_3^r$ is defined as
$$\Inf^{\leq k}_a(A) = \sum_{\beta \in \Lambda_{r,d}:\beta(a) \neq 0 \text{ and } |\beta| \leq k} |\widehat A(\beta)|^2.$$
\end{definition}

\begin{definition}[Dictator]
A function $A:\Pe_{r,d} \rightarrow \C$ is a dictator if there exists $x\in \F_3^r$ and $\widehat A_0, \widehat A_{1}, \widehat A_{2} \in \C$ such that $A$ can be written as $A = \widehat A_0 + \widehat A_{1}\chi_{e_x} +\widehat A_{2}\chi_{2e_x}$ where $e_x:\F_3^r \rightarrow \F_3$ the indicator function for $x$.
\end{definition}

The following lemma which follows from the results of Guruswami \etal\ \cite{GuruswamiHHSV2014}, will be crucial for our proofs.
\begin{lemma}
\label{lem:cor-rand-square}
If $\alpha:\F_3^r \rightarrow \F_3$ such that $\Delta(\alpha,\Pe^{\perp}_{r,2d}) > 3^{d/2}$ then
$$ \left|\E_{p \in \Pe_{r,d}} \chi_\alpha(p^2) \right| \leq 3^{-\Omega(3^{d/9})}.$$
\end{lemma}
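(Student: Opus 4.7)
My plan is to reduce the character sum to a probability estimate via one round of Cauchy--Schwarz, and then invoke the pseudorandomness bound of \cite{GuruswamiHHSV2014} to close out.

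First, since $p^2 \in \Pe_{r,2d}$, the value $\chi_\alpha(p^2) = \omega^{\langle \alpha, p^2 \rangle}$ depends only on the coset of $\alpha$ modulo $\Pe_{r,2d}^\perp$. Replacing $\alpha$ by a minimum-support representative of its coset (as guaranteed by \lref[Lemma]{lem:fourier}), I may assume without loss of generality that the support size $|\alpha| > 3^{d/2}$.

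Next, I apply Cauchy--Schwarz. Using the conjugate symmetry of the character,
$$\left|\E_{p \in \Pe_{r,d}} \chi_\alpha(p^2)\right|^2 = \E_{p,p' \in \Pe_{r,d}} \chi_\alpha(p^2 - (p')^2) = \E_{p,p'} \chi_\alpha\bigl((p-p')(p+p')\bigr).$$
Since $2$ is invertible in $\F_3$, the substitution $(q,s) := (p+p', p-p')$ is a linear bijection on $\Pe_{r,d}^2$. Pulling the expectation over $s$ inside,
$$\left|\E_p \chi_\alpha(p^2)\right|^2 = \E_{q,s} \chi_{\alpha \cdot q}(s) = \Pr_{q \in \Pe_{r,d}}\bigl[\alpha \cdot q \in \Pe_{r,d}^\perp\bigr],$$
where $(\alpha \cdot q)(x) := \alpha(x) q(x) \in \mathfrak F_r$ and the final equality uses character orthogonality on $\Pe_{r,d}$: for fixed $q$, the inner average $\E_s \chi_{\alpha \cdot q}(s)$ equals $1$ if $\alpha \cdot q$ annihilates $\Pe_{r,d}$ under the inner product (i.e., $\alpha \cdot q \in \Pe_{r,d}^\perp$) and $0$ otherwise.

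It therefore suffices to show that when $|\alpha| > 3^{d/2}$,
$$\Pr_{q \in \Pe_{r,d}}\bigl[\alpha \cdot q \in \Pe_{r,d}^\perp\bigr] \leq 3^{-\Omega(3^{d/9})}.$$
By \lref[Lemma]{lem:dual}, this asks how often the pointwise product $\alpha \cdot q$, reduced to individual degrees at most $2$, has total degree at most $2r - d - 1$; equivalently, one must lower bound by $\Omega(3^{d/9})$ the codimension of the subspace $\{q \in \Pe_{r,d} : \alpha \cdot q \in \Pe_{r,d}^\perp\}$. This is the main obstacle and the place where the results of \cite{GuruswamiHHSV2014} enter: their machinery, which combines Schwartz--Zippel bounds (\lref[Lemma]{lem:SZ}) and the limited independence of random low-degree polynomials (\lref[Lemma]{lem:low-deg-local-ind}) in an iterative degree-reduction argument, delivers exactly this codimension bound. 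The gap from the naive threshold $3^{d/2}$ (the input support bound on $\alpha$) to the final exponent $3^{d/9}$ reflects the loss incurred by that iteration.
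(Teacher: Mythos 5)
Your algebra is correct, and your proof is essentially a faithful unrolling of what the paper treats as a black box. The paper's own proof is two lines: it cites Lemmas 3.1 and 3.4 of Guruswami et al.~\cite{GuruswamiHHSV2014} for the statement that $\langle \alpha, p^2\rangle$ is $3^{-\Omega(3^{d/9})}$-close to uniform on $\F_3$ whenever $\Delta(\alpha,\Pe^{\perp}_{r,2d})>3^{d/2}$, and reads off the character bound from that. You instead expand $\left|\E_p\chi_\alpha(p^2)\right|^2$, factor $p^2-(p')^2=(p-p')(p+p')$, change variables (valid, since the determinant of $(p,p')\mapsto(p+p',p-p')$ is a unit in $\F_3$), and use character orthogonality to arrive at the identity $\left|\E_p\chi_\alpha(p^2)\right|^2=\Pr_{q\in\Pe_{r,d}}\left[\alpha\cdot q\in\Pe_{r,d}^\perp\right]$; every step checks out, and the preliminary reduction to a minimum-support representative is harmless (indeed unnecessary, since the hypothesis is already coset-invariant). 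What your route buys is transparency: it isolates exactly which ingredient of \cite{GuruswamiHHSV2014} is needed (the product-membership/degree-reduction lemma rather than the packaged close-to-uniform corollary), and this squaring step is essentially how \cite{GuruswamiHHSV2014} themselves pass from their Lemma 3.1 to their Lemma 3.4. What it does not buy is any decrease in reliance on that reference: because your reduction is an exact identity, the membership bound you defer to is equivalent to the lemma being proved, so the core quantitative difficulty is still entirely outsourced --- but the paper does exactly the same, so this is not a gap relative to the paper's own proof.
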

\begin{proof}
By definition, $\left|\E_{p \in \Pe_{r,d}} \chi_\alpha(p^2) \right| = \left|\E_{p \in \Pe_{r,d}} \omega^{\langle\alpha, p^2\rangle} \right|$. If $\alpha:\F_3^r \rightarrow \F_3$ is such that $\Delta(\alpha,\Pe^\perp_{r,2d}) > 3^{d/2}$ then for a random $ p \in \Pe_{r,d}$,
$\langle \alpha , p^2 \rangle$ is $3^{-\Omega(3^{d/9})}$-close to the uniform distribution on $\F_3$ according to {\cite[Lemma 3.1 and 3.4]{GuruswamiHHSV2014}}. 
\end{proof}
\ifarxiv
\else
\pagebreak
\fi
\section{\texorpdfstring{Derandomized {$K_3^{\otimes R}$}}%
                   {Derandomized Product K3}}

Alon \etal\ \cite{AlonDFS2004} proved \lref[Theorem]{thm:adfs} by using the following lemma.
\begin{lemma}
\label{thm:dict}
There is constant $K$ such that the following holds: If $A:\F_3^R \rightarrow \{0,1\}$ satisfies
 $$\sum_{|\alpha| > 1} |\widehat A_\alpha|^2 \leq \epsilon \text{ and } \widehat A_0 =\delta$$ 
 then there exists a dictator $B:\F_3^R \rightarrow \{0,1\}$ such that 
$$\|A-B\|_2 \leq \frac{K\epsilon}{\delta -\delta^2 -\epsilon}.$$
\end{lemma}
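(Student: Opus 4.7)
The plan is to adapt the Friedgut--Kalai--Naor argument to the $\F_3$ setting, using the Boolean identity $A^2 = A$ to force the level-$1$ Fourier mass to concentrate on a single coordinate. Since $A \in \{0,1\}$, Parseval and $\E A^2 = \E A = \delta$ give $\sum_\alpha|\widehat A_\alpha|^2 = \delta$; subtracting $\widehat A_0^2 = \delta^2$ and the hypothesis $\sum_{|\alpha|>1}|\widehat A_\alpha|^2 \le \epsilon$ yields
\[
T := \sum_{|\alpha|=1}|\widehat A_\alpha|^2 \;\in\; [\delta - \delta^2 - \epsilon,\ \delta - \delta^2].
\]
Since $A$ is real valued, $\widehat A(2e_z) = \overline{\widehat A(e_z)}$, so the level-$\le 1$ part of $A$ can be written $g = \delta + \sum_z (a_z \chi_{e_z} + \bar a_z \chi_{2e_z})$ with $T = 2\sum_z|a_z|^2$, and by Parseval $\|A - g\|_2^2 \le \epsilon$.

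Next, I would extract concentration of the level-$1$ mass onto a single coordinate by exploiting $A^2 = A$ at the level-$2$ Fourier coefficients. A direct expansion gives $\widehat{(g^2)}(c_1 e_x + c_2 e_y) = 2 a_{x, c_1} a_{y, c_2}$ for $x \ne y$ (with $a_{z,1} = a_z$, $a_{z,2} = \bar a_z$), so
\[
\sum_{|\alpha|=2}|\widehat{(g^2)}_\alpha|^2 \;=\; 8\Bigl[(T/2)^2 - \sum_z|a_z|^4\Bigr].
\]
On the other hand, writing $r = A - g$ (with $\|r\|_2^2 \le \epsilon$) and using $A^2 = A$ to obtain $g^2 = A - 2gr - r^2$, the identity $\widehat{(g^2)}_\alpha = \widehat A_\alpha - 2\widehat{(gr)}_\alpha - \widehat{(r^2)}_\alpha$ combined with the hypothesis, degree-$1$ hypercontractivity over $\F_3$, and bounds on $\|r\|_4$ coming from $|r| \le 1 + |g|$ and $A \in \{0,1\}$ gives an upper bound $\sum_{|\alpha|=2}|\widehat{(g^2)}_\alpha|^2 \le O(\epsilon)$ in the regime $\epsilon \ll \delta - \delta^2$.

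Combining the two estimates yields $(T/2)^2 - \sum_z|a_z|^4 \le O(\epsilon)$. Taking $x^* = \arg\max_z|a_z|$ and using $\sum_z|a_z|^4 \le |a_{x^*}|^2 \cdot (T/2)$, one obtains $(T/2)(T/2 - |a_{x^*}|^2) \le O(\epsilon)$, hence $\sum_{z \ne x^*}|a_z|^2 = T/2 - |a_{x^*}|^2 \le O(\epsilon/(\delta - \delta^2 - \epsilon))$. Let $g_{x^*}(f) := \delta + a_{x^*}\chi_{e_{x^*}}(f) + \bar a_{x^*}\chi_{2e_{x^*}}(f)$, the projection of $A$ onto functions of $f(x^*)$, which equals $A_{f(x^*)}$ with $A_c := \E[A \mid f(x^*) = c]$. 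By orthogonality $\|A - g_{x^*}\|_2^2 = \|A - g\|_2^2 + \|g - g_{x^*}\|_2^2 \le \epsilon + O(\epsilon/(\delta - \delta^2 - \epsilon))$, and this equals the within-fiber variance $\tfrac{1}{3}\sum_c A_c(1 - A_c)$; choosing $c^* = \arg\max_c A_c$ and $B = \mathbf{1}_{V_{x^*, c^*}}$ and applying Boolean rounding (using $1 - A_{c^*} \le 2 A_{c^*}(1 - A_{c^*})$ for the majority fiber and $A_c \le 2 A_c(1 - A_c)$ for the others) gives the claimed bound on $\|A - B\|_2$.

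The main obstacle is the upper bound on $\sum_{|\alpha|=2}|\widehat{(g^2)}_\alpha|^2$ in the second step: since $g$ is controlled only in $L_2$ and the high-level component $r = A - g$ is not low-degree, careful use of degree-$1$ hypercontractivity on $\F_3^R$ together with the Boolean-ness of $A$ is essential for bounding the cross terms $\|gr\|_2$ and $\|r^2\|_2$ and thereby converting $L_2$ control of $r$ into $L_2$ control of these products.
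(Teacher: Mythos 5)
Your overall strategy --- an FKN-type argument over $\F_3$, using $A^2=A$ to force the level-$1$ mass onto one coordinate and then rounding the conditional expectations --- is sound, and it is in the same spirit as the proof the paper points to (the lemma is quoted from Alon~et~al.\ and is stated to follow from the hypercontractive inequality of Lemma~\ref{thm:hyp} together with Booleanness; your argument uses exactly these two ingredients, so it would port to Lemma~\ref{thm:derand-dict} just as well). The first, fourth and fifth steps (the computation $\sum_{|\alpha|=2}|\widehat{(g^2)}_\alpha|^2 = 8[(T/2)^2-\sum_z|a_z|^4]$, the extraction of $x^*$, and the factor-$2$ Boolean rounding) are all correct.

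The gap is in the step you yourself flag as the main obstacle, and the fix is not the one you describe. You propose to control $\sum_{|\alpha|=2}|\widehat{(g^2)}_\alpha|^2$ via the full norms $\|gr\|_2$ and $\|r^2\|_2$, bounding $\|r\|_4$ by $\|1+|g|\|_4=O(1)$. But $\|r^2\|_2^2=\|r\|_4^4$, and a priori this is only $O(\delta)$ (indeed $\|r\|_4\le\|A\|_4+\|A^{\le 1}\|_4=O(\delta^{1/4})$), not $O(\epsilon)$; since $\delta$ is a constant ($\approx 1/3$) in the application while $\epsilon\to 0$, this yields only $\sum_{|\alpha|=2}|\widehat{(g^2)}_\alpha|^2=O(\delta)$, which is useless. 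Any bound on $\|r\|_4$ in terms of $\|r\|_2$ alone is essentially circular, since $r$ is a high-level function. The correct move is the standard FKN duality trick: you only need the \emph{level-$\le 2$ projections} of $gr$ and $r^2$, and for a degree-$\le 2$ test function $q$ one has $\langle r^2,q\rangle=\langle r,rq\rangle\le\|r\|_2\,\|r\|_4\,\|q\|_4\le C^2\|r\|_2\,\|r\|_4\,\|q\|_2$, whence $\|(r^2)^{\le 2}\|_2\le C^2\|r\|_2\|r\|_4=O(\sqrt{\epsilon}\,\delta^{1/4})$, and similarly $\|(gr)^{\le 2}\|_2\le C^3\|r\|_2\|g\|_2=O(\sqrt{\epsilon})$; here hypercontractivity is applied to the low-degree test function $q$, not to $r$. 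With this substitution the step closes and the rest of your argument goes through. Two smaller remarks: (i) your rounding uses $1-A_{c^*}\le 2A_{c^*}(1-A_{c^*})$, which needs $A_{c^*}\ge 1/2$; this is not automatic for all $\delta$, so you should round each fiber to its nearest Boolean value (the resulting $B$ is still a dictator in the sense of the paper's definition). (ii) Your proof bounds $\|A-B\|_2^2$, not $\|A-B\|_2$, by $K\epsilon/(\delta-\delta^2-\epsilon)$; this is the bound that is actually attainable (and the one used in the application), so the discrepancy is with the statement as printed rather than with your argument.
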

The above lemma was proved using  the following hypercontractive inequality.
\begin{lemma}
\label{thm:hyp}
There is a constant $C$ such that for any function $A:\F_3^R \rightarrow \C$ with $\widehat A_\alpha = 0$ when $|\alpha| > t$,
$$ \|A\|_4 \leq C^t \|A\|_2.$$
\end{lemma}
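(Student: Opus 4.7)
The plan is to derive this low-degree hypercontractivity inequality from the classical Bonami--Beckner hypercontractive inequality for functions on $\F_3^R$. Recall that on $\F_3^R$, equipped with the uniform measure, there is a constant $\rho_0 = \rho_0(3) \in (0,1)$ and the noise operator $T_\rho$ (which acts on characters by $T_\rho\chi_\alpha = \rho^{|\alpha|}\chi_\alpha$, where $|\alpha|$ is the Hamming weight) satisfies $\|T_{\rho_0} f\|_4 \leq \|f\|_2$ for every $f:\F_3^R\to\C$. This is a standard fact that tensorizes from the corresponding one-variable ``two-point'' inequality on $\F_3$, which is what I would verify first.

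Given the hypercontractive statement for $T_{\rho_0}$, the passage to the lemma is a one-line Fourier manipulation. Let $A:\F_3^R\to\C$ be Fourier-supported on characters with $|\alpha|\leq t$. Define
\[
B \;:=\; \sum_{|\alpha|\leq t} \rho_0^{-|\alpha|}\,\widehat A_\alpha\,\chi_\alpha,
\]
so that $B$ still has Fourier support in $\{|\alpha|\leq t\}$ and $T_{\rho_0}B = A$. Then by Parseval,
\[
\|B\|_2^2 \;=\; \sum_{|\alpha|\leq t} \rho_0^{-2|\alpha|}|\widehat A_\alpha|^2 \;\leq\; \rho_0^{-2t}\sum_{|\alpha|\leq t}|\widehat A_\alpha|^2 \;=\; \rho_0^{-2t}\|A\|_2^2.
\]
Applying the hypercontractive inequality to $B$ now gives
\[
\|A\|_4 \;=\; \|T_{\rho_0} B\|_4 \;\leq\; \|B\|_2 \;\leq\; \rho_0^{-t}\|A\|_2,
\]
so the lemma holds with $C := 1/\rho_0$, a universal constant.

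The only nontrivial ingredient is the Bonami--Beckner inequality itself. Since it is entirely standard, I would either cite it directly or, for self-containedness, sketch the one-variable $\F_3$ case (proving $\|a_0 + a_1\chi + a_2\chi^2\|_4 \leq \|a_0 + \rho_0^{-1}(a_1\chi+a_2\chi^2)\|_2$ by an explicit $\rho_0$ depending on $3$) and then tensor up using induction on $R$ together with Minkowski's integral inequality in the usual way. I do not expect any real obstacle here; the whole content of the lemma is that low-degree functions on $\F_3^R$ behave, in the $L^4$ vs.\ $L^2$ comparison, like polynomials of degree $t$, losing only a factor exponential in $t$ and independent of $R$.
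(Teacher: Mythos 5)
Your proof is correct, and it is the standard derivation; the paper itself states this lemma without proof, citing it as the known hypercontractivity ingredient from Alon~\etal, and the intended argument is exactly the one you give (Bonami--Beckner for the uniform measure on a $3$-point space, tensorized, then applied to $B = T_{\rho_0}^{-1}A$ on the low-degree part). The only point worth a word is that $A$ is $\C$-valued: either invoke the version of general hypercontractivity that holds for complex (indeed vector-valued) functions, or split into real and imaginary parts --- each of which still has Fourier support in $\{|\alpha|\le t\}$ since $\widehat{\Re A}(\alpha)$ involves only $\widehat A(\alpha)$ and $\widehat A(-\alpha)$ --- and absorb the resulting factor of $2$ into $C^t$.
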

Our proof of \lref[Theorem]{thm:derand-indep} will use a similar lemma for functions on the subspace $\Pe_{r,2d}$. 
\begin{lemma}
\label{thm:derand-dict}
There is a constant $K$ such that the following holds: If $A:\Pe_{r,2d} \rightarrow \{0,1\}$ satisfies
 $$\sum_{|\alpha| > 1} |\widehat A_\alpha|^2 \leq \epsilon \text{ and } \widehat A_0 =\delta$$ 
 then there exists a dictator $B:\Pe_{r,2d} \rightarrow \{0,1\}$ such that 
$$\|A-B\|_2 \leq \frac{K\epsilon}{\delta -\delta^2 -\epsilon}.$$
\end{lemma}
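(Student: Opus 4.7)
The plan is to prove this as a direct analog of Lemma \ref{thm:dict}, by first establishing a hypercontractive inequality for functions on $\mathcal{P}_{r,2d}$ (the unstated ``\lref[Lemma]{thm:hyp-for-poly}'' alluded to earlier in the introduction) and then running the Alon \etal\ argument line-by-line.

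The first step is to prove: there is a constant $C$ such that for any $A:\mathcal{P}_{r,2d}\to\mathbb{C}$ whose Fourier expansion is supported on characters $\chi_\beta$ with $|\beta|\le t$, one has $\|A\|_4\le C^t\|A\|_2$, provided $4t\le 3^{\lfloor(2d+1)/2\rfloor}$. The idea is that by \lref[Lemma]{lem:fourier}, each such character satisfies $\chi_\beta(f)=\omega^{\sum_{x\in\mathrm{supp}(\beta)}\beta(x)f(x)}$, and hence depends only on the evaluations of $f$ at the $\le t$ points of $\mathrm{supp}(\beta)$. Expanding $\|A\|_4^4=\E[|A|^4]$ gives a sum of expectations of products of four such characters, each summand involving evaluations of $f$ at a set of at most $4t$ points of $\F_3^r$. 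By \lref[Lemma]{lem:low-deg-local-ind}, a uniformly random $f\in\mathcal{P}_{r,2d}$ is $3^{\lfloor(2d+1)/2\rfloor}$-wise independent, so these $4t$ evaluations are distributed as independent uniform $\F_3$ random variables, and the computation reduces exactly to the proof of the classical hypercontractive inequality (\lref[Lemma]{thm:hyp}) on $\F_3^{4t}$, which yields the claimed bound.

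Given this inequality, the proof of \lref[Lemma]{thm:dict} by Alon \etal\ transfers verbatim. Concretely, write $A=\widehat A_0+A^{=1}+A^{>1}$ where $A^{=1}=\sum_{x\in\F_3^r}[\widehat A_{e_x}\chi_{e_x}+\widehat A_{2e_x}\chi_{2e_x}]$ is the level-$1$ part; by assumption $\|A^{>1}\|_2^2\le\epsilon$, and since $A\in\{0,1\}$ implies $\|A\|_2^2=\delta$, Parseval gives $\|A^{=1}\|_2^2\ge \delta-\delta^2-\epsilon$. Pick the coordinate $x^*\in\F_3^r$ maximizing $|\widehat A_{e_x}|^2+|\widehat A_{2e_x}|^2$, and let $B=\widehat A_0+\widehat A_{e_{x^*}}\chi_{e_{x^*}}+\widehat A_{2e_{x^*}}\chi_{2e_{x^*}}$. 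Alon \etal's argument combines the constraint $A(1-A)=0$ with the level-$1$ hypercontractivity ($t=1$) applied to the residual low-level tail $A^{=1}-(B-\widehat A_0)$ to conclude that this residual has $L_2$-norm bounded by $K\epsilon/(\delta-\delta^2-\epsilon)$, which, after absorbing the $\sqrt\epsilon$ error from $A^{>1}$ into the constant $K$, yields $\|A-B\|_2\le K\epsilon/(\delta-\delta^2-\epsilon)$ as required.

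The main obstacle is solely in the hypercontractive inequality: one needs to check that the support-size regime actually imposes no real constraint on $d$, since here $t=1$ and so $3^{\lfloor(2d+1)/2\rfloor}\ge 4$ is trivially met for any $d\ge 1$. A minor subtlety is that the minimum-support representative $e_x$ (resp.\ $2e_x$) must be the canonical element of its coset in $\mathfrak F_r/\mathcal{P}_{r,2d}^\perp$ so that the level-$1$ characters are unambiguously indexed by $\F_3^r$; this is automatic from $\Pe_{r,2d}^\perp=\Pe_{r,2r-2d-1}$ (\lref[Lemma]{lem:dual}) once $d<r$, so no element of Hamming weight $\le 1$ lies in the dual. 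Beyond these bookkeeping points, the rest of the Alon \etal\ proof is analytic and does not see the underlying product structure.
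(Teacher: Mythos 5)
Your proposal is correct and follows essentially the same route as the paper: the paper also reduces to Alon~\etal's argument after establishing \lref[Lemma]{thm:hyp-for-poly}, which it proves via \lref[Lemma]{thm:mov-to-poly} showing that the low moments of a small-Fourier-support function on $\Pe_{r,2d}$ equal those of its lift to $\Ef_r$ — the minimum-distance/Schwartz--Zippel fact used there is exactly the limited-independence statement you invoke. The only cosmetic difference is that the paper states the moment-matching as a general $2k$-norm identity for the lift rather than as a per-summand $4t$-wise-independence computation.
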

The above lemma follows from hypercontractive inequalities over $\Pe_{r,2d}$ stated below, in exactly the same way as Alon \etal\ proves \lref[Lemma]{thm:dict} from \lref[Lemma]{thm:hyp}.
\begin{lemma}
\label{thm:hyp-for-poly}
There is a constant $C$ such that for $4t\leq 3^{d-1}$ and any function $A:\Pe_{r,2d} \rightarrow \C$ with $\widehat A_\alpha = 0$ when $|\alpha| > t$,
$$\|A\|_4 \leq C^t \|A\|_2.$$
\end{lemma}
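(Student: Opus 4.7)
The plan is to reduce \lref[Lemma]{thm:hyp-for-poly} to the classical hypercontractive inequality over $\F_3^R$ (\lref[Lemma]{thm:hyp}), where $R = 3^r$, by lifting $A$ to a function on the ambient space $\mathfrak{F}_r = \F_3^R$ that has the same Fourier coefficients. Concretely, write $A = \sum_{\beta \in \Lambda_{r,2d},\, |\beta| \leq t} \widehat{A}_\beta \chi_\beta$ and define $\tilde{A}\colon \F_3^R \to \C$ by the same formula, interpreting $\chi_\beta(f) = \omega^{\langle \beta, f\rangle}$ for $f$ ranging over all of $\mathfrak{F}_r$. Since each $\beta \in \Lambda_{r,2d}$ appearing has $|\beta| \leq t$ (the minimum-support property from \lref[Lemma]{lem:fourier}\eqref{item:minsup}), the $\F_3^R$-Fourier expansion of $\tilde A$ is supported on characters of weight at most $t$, so \lref[Lemma]{thm:hyp} gives $\|\tilde A\|_4 \leq C^t \|\tilde A\|_2$.

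It then suffices to show that both the $L^2$ and $L^4$ norms are preserved under the lift, i.e.\ $\|\tilde A\|_p = \|A\|_p$ for $p \in \{2,4\}$, where the left-hand norm is over $\F_3^R$ and the right over $\Pe_{r,2d}$. For $p=2$ this is immediate from Parseval on both spaces (\lref[Lemma]{lem:fourier}): both sides equal $\sum_\beta |\widehat A_\beta|^2$. For $p=4$, expand $|A(f)|^4 = \sum_{\beta_1,\beta_2,\beta_3,\beta_4} \widehat A_{\beta_1} \widehat A_{\beta_2} \overline{\widehat A_{\beta_3}\, \widehat A_{\beta_4}}\, \chi_\gamma(f)$ with $\gamma := \beta_1 + \beta_2 - \beta_3 - \beta_4$, where the sum runs over quadruples from $\Lambda_{r,2d}$ of weight at most $t$. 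Averaging over $f \in \F_3^R$ keeps only the terms with $\gamma = 0$ in $\mathfrak{F}_r$, while averaging over $f \in \Pe_{r,2d}$ keeps the (a priori larger) set of terms with $\gamma \in \Pe_{r,2d}^\perp$.

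The crux is that in the stated regime the two sets of surviving terms coincide. By \lref[Lemma]{lem:dual} we have $\Pe_{r,2d}^\perp = \Pe_{r,2r-2d-1}$, and by the Schwartz--Zippel bound (\lref[Lemma]{lem:SZ}) any non-zero element of $\Pe_{r,2r-2d-1}$ has support of size at least $3^r \cdot 3^{-(2r-2d-1)/2} = 3^{d+1/2}$. On the other hand, by the triangle inequality $|\gamma| \leq 4t \leq 3^{d-1} < 3^{d+1/2}$. Therefore $\gamma \in \Pe_{r,2d}^\perp$ forces $\gamma = 0$, so the two fourth-power expansions agree term by term, yielding $\|A\|_4 = \|\tilde A\|_4$.

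Combining, $\|A\|_4 = \|\tilde A\|_4 \leq C^t \|\tilde A\|_2 = C^t \|A\|_2$, which is exactly the claim. The only non-routine step is the Schwartz--Zippel argument in the previous paragraph; everything else is bookkeeping with the Fourier decomposition on $\Pe_{r,2d}$ provided by \lref[Lemma]{lem:fourier}. The hypothesis $4t \leq 3^{d-1}$ is used precisely (and only) to guarantee that the minimum weight of a non-zero element of $\Pe_{r,2d}^\perp$ strictly exceeds $4t$, so that sums of four weight-$\leq t$ characters cannot spuriously lie in the dual code.
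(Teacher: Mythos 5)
Your proposal is correct and follows essentially the same route as the paper: the paper proves the norm-preservation-under-lift statement in general for $2k$-norms as \lref[Lemma]{thm:mov-to-poly} (using exactly your observation that a sum of $2k$ weight-$\leq t$ functions cannot be a nonzero element of $\Pe_{r,2d}^{\perp}$, via \lref[Lemma]{lem:dual} and \lref[Lemma]{lem:SZ}) and then combines the $k=1$ and $k=2$ cases with \lref[Lemma]{thm:hyp}, which is precisely your argument specialized to $p=2,4$.
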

\begin{proof}
Follows from \lref[Lemma]{thm:mov-to-poly} and \lref[Lemma]{thm:hyp}.
\end{proof}
\begin{definition}[Lift]\label{def:lift}
For a function $B:\Pe_{r,2d} \rightarrow \C$ with the Fourier decomposition $B= \sum_{\alpha \in \Lambda_{r,d}} \widehat{B}_\alpha \chi_\alpha$, the lift of $B$ denoted by $B'$ is a function $B':\Ef_r \rightarrow \C$ with the Fourier decomposition $B'= \sum_{\alpha \in \Lambda_{r,d}} \widehat{B}_\alpha \chi_\alpha$. In the decomposition of $B'$, $\chi_\alpha$'s are functions with domain $\Ef_r$.
\end{definition}
\begin{lemma}
\label{thm:mov-to-poly}
If $2kt \leq 3^{d-1}$ and  $B:\Pe_{r,2d} \rightarrow \C$ be a function such that $\widehat B_\alpha = 0$ when $|\alpha| > t$ then
$$\|B\|_{2k} = \|B'\|_{2k}.$$
\end{lemma}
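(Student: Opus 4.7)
The plan is to expand both $\|B\|_{2k}^{2k} = \E_{f \in \Pe_{r,2d}}[|B(f)|^{2k}]$ and $\|B'\|_{2k}^{2k} = \E_{f \in \Ef_r}[|B'(f)|^{2k}]$ in the Fourier basis and show the expansions coincide. Writing $|B|^{2k} = B^k \cdot \overline{B}^k$ and using $\chi_\alpha \chi_\beta = \chi_{\alpha+\beta}$ together with $\overline{\chi_\alpha} = \chi_{-\alpha}$, the decomposition $B = \sum_{\alpha \in \Lambda_{r,2d}} \widehat B_\alpha \chi_\alpha$ yields
\[
|B(f)|^{2k} \;=\; \sum_{\alpha_1, \dots, \alpha_k,\, \beta_1, \dots, \beta_k} \Big(\prod_i \widehat B_{\alpha_i}\Big)\Big(\prod_j \overline{\widehat B_{\beta_j}}\Big)\, \chi_{\gamma}(f),
\]
where $\gamma := \sum_i \alpha_i - \sum_j \beta_j$ and the indices range over $\Lambda_{r,2d}$. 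By hypothesis, the only surviving terms have every $|\alpha_i|, |\beta_j| \le t$, so in all of those terms $|\gamma| \le 2kt$ when $\gamma$ is viewed as an element of $\Ef_r$.

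Next I would take expectations. Against $f \in \Pe_{r,2d}$ the character $\chi_\gamma$ averages to $1$ if $\chi_\gamma$ is trivial on the subspace, i.e.\ if $\gamma \in \Pe_{r,2d}^\perp$, and to $0$ otherwise. The identical computation for $B'$ over the full space $\Ef_r = \Pe_{r,2r}$ (whose dual is $\{0\}$) instead retains only tuples with $\gamma = 0$. Hence the difference $\|B\|_{2k}^{2k} - \|B'\|_{2k}^{2k}$ is built entirely from tuples whose index $\gamma$ is a \emph{nonzero} element of $\Pe_{r,2d}^\perp$ with support size $|\gamma| \le 2kt$.

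The final step, which I expect to be the crux, is to rule out any such $\gamma$ using the hypothesis $2kt \le 3^{d-1}$. By \lref[Lemma]{lem:dual}, $\Pe_{r,2d}^\perp = \Pe_{r,2r-2d-1}$, so a hypothetical nonzero $\gamma$ would be a nonzero polynomial of degree at most $2r-2d-1$ with individual degrees at most $2$. The Schwartz-Zippel lemma (\lref[Lemma]{lem:SZ}) then forces
\[
|\gamma| \;\ge\; 3^r \cdot 3^{-(2r-2d-1)/2} \;=\; 3^{d+1/2} \;>\; 3^{d-1} \;\ge\; 2kt,
\]
contradicting $|\gamma| \le 2kt$. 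Thus the difference vanishes term by term and $\|B\|_{2k} = \|B'\|_{2k}$. Everything needed is already in the preliminaries, so no additional tools are required; the only subtle point is checking that the minimum support weight in $\Pe_{r,2d}^\perp$ strictly exceeds $2kt$, which is precisely what the Schwartz-Zippel estimate delivers under the stated hypothesis.
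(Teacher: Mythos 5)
Your proposal is correct and follows essentially the same route as the paper: expand $\|B\|_{2k}^{2k}$ and $\|B'\|_{2k}^{2k}$ in the character basis, observe that the surviving tuples differ only when $\sum_i\alpha_i-\sum_j\beta_j$ is a nonzero element of $\Pe_{r,2d}^\perp$ of support at most $2kt$, and rule this out via \lref[Lemma]{lem:dual} and the Schwartz--Zippel bound. The only cosmetic difference is that you make the minimum-weight computation $|\gamma|\ge 3^{d+1/2}$ explicit, where the paper simply cites the bound $|\gamma|>3^{d-1}$.
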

\begin{proof}
From the \lref[Lemma]{lem:SZ} and \lref[Lemma]{lem:dual}, we have that $\forall \alpha \in \Pe^{\perp}_{r,2d}\setminus \{0\},~|\alpha| > 3^{d-1}$. So if  $\exists \{\alpha_i,\beta_i \}_{i \in [k]}$ with $|\alpha_i|,|\beta_i| \leq t$, then
\begin{equation}
\label{eqn:small-sup-is-zero}
\sum_{i \in [k]} \alpha_i -\beta_i \in \Pe^{\perp}_{r,2d} \Rightarrow \sum_{i \in [k]} \alpha_i -\beta_i = 0.
\end{equation}
This is because $\sum_{i \in [t]} \alpha_i -\beta_i$ has support size at most $2kt < 3^{d-1}$. We use this fact to prove the theorem as follows:
\begin{align*}
\|B\|_{2k}^{2k} & = \E_{f \in \Pe_{r,2d}} |B(f)|^{2k} = \E_{f \in \Pe_{r,2d}} \prod_{i \in [k]}B(f)\overline {B(f)}\\
&= \sum_{\alpha_1,\beta_1,\cdots,\alpha_k,\beta_k \in \Lambda_{n,2d}} \left(\prod_{i \in [k]}\widehat B_{\alpha_i} \overline{\widehat B_{\beta_i}}\right) \E_{f \in \Pe_{r,2d}} \prod_{i \in [k]} \chi_{\alpha_i}(f) \overline{\chi_{\beta_i}(f)}~~(~\text{from } \eqref{eqn:fourier-decomposition}~)\\
&= \sum_{\substack{\alpha_1,\beta_1,\cdots,\alpha_k,\beta_k \in \Lambda_{r,2d}\\ \sum_i \alpha_i -\beta_i \in \Pe^{\perp}_{r,2d}}}~ \prod_{i \in [k]}\widehat B_{\alpha_i} \overline{\widehat B_{\beta_i}} \\
&= \sum_{\substack{\alpha_1,\beta_1,\cdots,\alpha_k,\beta_k \in \Lambda_{r,2d} \\ \sum_i \alpha_i -\beta_i = 0}}~ \prod_{i \in [k]}\widehat B_{\alpha_i} \overline{\widehat B_{\beta_i}}~~(\text{ from } \eqref{eqn:small-sup-is-zero}~) \\
&= \sum_{\alpha_1,\beta_1,\cdots,\alpha_k,\beta_k\in \Lambda_{r,2d}} \left(\prod_{i \in [k]}\widehat B_{\alpha_i} \overline{\widehat B_{\beta_i}}\right) \E_{f \in \Ef_r} \prod_{i \in [k]} \chi_{\alpha_i}(f) \overline{\chi_{\beta_i}(f)}\\
&= \E_{f \in \Ef_r} \prod_{i \in [k]}B'(f)\overline {B'(f)}= \E_{f \in \Ef_r} |B'(f)|^{2k}= \|B'\|_{2k}^{2k}
\end{align*}
\end{proof}

\subsection{Proof of \texorpdfstring{\lref[Theorem]{thm:derand-indep}}{Theorem 1.2}}
\begin{proof}[Proof of \ref{itm:comp-1}]
For $f\in V$, consider the set $\{f,f+1,f+2 \} \subseteq V $. These sets form a partition of $V$ and are triangles in the graph. Hence $\delta \leq 1/3$.
\end{proof}
\begin{proof}[Proof of \ref{itm:comp-2}]
Let $A:\Pe_{r,2d} \rightarrow \{0,1\}$ be the indicator set of the independent set of size $\delta |V|$. By Parseval's equation~\eqref{eqn:parseval-general} and the fact that $\widehat A_0 = \delta$, we have that
\begin{equation}
\label{eqn:parseval-1}
\sum_{\alpha \in \Lambda_{r,2d} \setminus \{ 0\}} |\widehat A_\alpha|^2 = \delta -\delta^2.
\end{equation}
Since $A$ is an independent set, 
$$\E_{p \in \Pe_{r,d},a \in \F_3, f\in \Pe_{r,2d}} A(f)A(f+a(p^2+1)) = \sum_{\alpha\in \Lambda_{r,2d}} |\widehat A_\alpha|^2 \E_{p\in \Pe_{r,d}, a \in \F_3}\chi_\alpha(a(p^2+1)) = 0.$$
Taking the real parts of the equation on both sides and rearranging, we get
\begin{equation}
\label{eqn:four-indep-1}
 \sum_{\alpha \in \Lambda_{r,2d} \setminus \{ 0\}} |\widehat A_\alpha|^2 \Re\left(\E_{p\in \Pe_{r,d}}\chi_\alpha(p^2+1)\right) = -\delta^2.
\end{equation}
Let $T$ be a random variable such that $\Pr[T=\alpha] = |\widehat A_\alpha |^2/(\delta-\delta^2)$ and $X$ be the random variable $X(T) = \Re\left(\E_{p\in \Pe_{r,d}, a \in \F_3}\chi_\alpha(a(p^2+1))\right)$. From \eqref{eqn:parseval-1} and \eqref{eqn:four-indep-1}, we have that
$$ \E X = \frac{-\delta}{1-\delta}.$$
Since $p$ is a random degree $d$ polynomial, it is $3^{d/2}$-wise independent from \lref[Lemma]{lem:low-deg-local-ind}.  So if $|T| \leq 3^{d/2}$ then
\begin{align*}
 &\left| \Re\left(\E_{p\in \Pe_{r,d}, a \in \F_3}\chi_\alpha(a(p^2+1))\right) \right|\\
 &= \left| \frac{1}{2}\Re\left(\left(\frac{\omega^2-1}{3}\right)^{|\alpha|_1}\left(\frac{\omega-1}{3}\right)^{|\alpha|_2} + \left(\frac{\omega-1}{3}\right)^{|\alpha|_1}\left(\frac{\omega^2-1}{3}\right)^{|\alpha|_2} \right) \right| \leq \left(\frac{1}{\sqrt{3}}\right)^{|\alpha|}
\end{align*}
where $|\alpha|_a = \{ x \in \F_3^r: \alpha(x) = a \}$.

If $|T| > 3^{d/2}$, we know from \lref[Lemma]{lem:cor-rand-square} that $|X(T)| \leq 3^{-\Omega(3^{d/9})}$.

Note that for $T$ with $|T|=1$, $X(T) = -1/2$. For $T$ with $|T|=2$, $ X(T) \geq 0$. For $T$ with $|T| \geq 3, X(T) \geq \frac{-1}{3\sqrt{3}}$. So if $\E X = -1/2$ then $\Pr[|T|=1] = 1$. So $A$ is a Boolean valued function with non zero Fourier coefficients of supports only $0$ and $1$. Using arguments similar to Proof of  \cite[Lemma 2.3]{AlonDFS2004}, it can be shown that there is an $x\in \F_3^r$ such that $A(f)$ only depends on $f(x)$ .

\end{proof}
\begin{proof}[Proof of \ref{itm:sound}]
Suppose $\delta= 1/3 -\epsilon$. First we show that most of Fourier weights are concentrated in the first two levels

\begin{lemma}
\label{lem:low-fourier-conc}
$$\sum_{\alpha \in \Lambda_{r,2d}: |\alpha| > 1} |\widehat A_\alpha|^2 \leq 2\epsilon$$
\end{lemma}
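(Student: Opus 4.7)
The plan is to revisit the Fourier identity exploited in the proof of part (2) of \lref[Theorem]{thm:derand-indep}, but now to retain quantitative control over all Fourier levels instead of collapsing everything onto $|\alpha|\leq 1$. As in part (2), the independent-set condition together with Parseval (equation~\eqref{eqn:parseval-1}) gives
$$\sum_{\alpha\in \Lambda_{r,2d}\setminus\{0\}} |\widehat A_\alpha|^2\, X(\alpha) \;=\; -\delta^2, \qquad\qquad \sum_{\alpha\in \Lambda_{r,2d}\setminus\{0\}} |\widehat A_\alpha|^2 \;=\; \delta-\delta^2,$$
where $X(\alpha) := \Re\bigl(\E_{p,a}\chi_\alpha(a(p^2+1))\bigr)$. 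Writing $W_1 := \sum_{|\alpha|=1}|\widehat A_\alpha|^2$ and $W_{\geq 2} := \sum_{|\alpha|\geq 2}|\widehat A_\alpha|^2 = (\delta-\delta^2)-W_1$, the target of the lemma is $W_{\geq 2}\leq 2\epsilon$.

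The key inputs are the bounds on $X(\alpha)$ already established in the proof of part (2), which apply verbatim: $X(\alpha)=-1/2$ for $|\alpha|=1$; $X(\alpha)\geq 0$ for $|\alpha|=2$; $|X(\alpha)|\leq (1/\sqrt{3})^{|\alpha|}\leq 1/(3\sqrt 3)$ for $3\leq |\alpha|\leq 3^{d/2}$, using the $3^{d/2}$-wise independence of $p$ from \lref[Lemma]{lem:low-deg-local-ind}; and $|X(\alpha)|\leq \eta_d := 3^{-\Omega(3^{d/9})}$ for $|\alpha|>3^{d/2}$, by \lref[Lemma]{lem:cor-rand-square}. Lumping the last three cases into the uniform lower bound $X(\alpha)\geq -\tfrac{1}{3\sqrt 3}-\eta_d$ valid for every $|\alpha|\geq 2$, the identity above yields
$$-\delta^2 \;\geq\; -\tfrac{1}{2}W_1 \;-\; \Bigl(\tfrac{1}{3\sqrt 3}+\eta_d\Bigr)\,W_{\geq 2}.$$

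Substituting $W_1 = (\delta-\delta^2)-W_{\geq 2}$ and rearranging produces
$$W_{\geq 2}\Bigl(\tfrac{1}{2} - \tfrac{1}{3\sqrt 3} - \eta_d\Bigr) \;\leq\; \tfrac{\delta(1-3\delta)}{2} + \eta_d \;\leq\; \tfrac{\epsilon}{2} + \eta_d,$$
where the final inequality uses $\delta(1-3\delta)/2 = (1/3-\epsilon)(3\epsilon)/2\leq \epsilon/2$. Since $\tfrac{1}{2}-\tfrac{1}{3\sqrt 3} > 0.30$, fixing the constant $d$ to be large enough so that $\eta_d < 0.05$ keeps the left-hand coefficient above $1/4$, from which $W_{\geq 2}\leq 2\epsilon$ follows. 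The only real obstacle is this numerical bookkeeping: the natural rate coming out of the analysis is $\approx 1.63\epsilon$, and $d$ must be chosen large enough that the additive error $\eta_d$ is dominated by the slack between this rate and the target $2\epsilon$. The super-exponential decay of $\eta_d$ in $d$ supplied by \lref[Lemma]{lem:cor-rand-square} makes such a choice of the constant $d$ immediate.
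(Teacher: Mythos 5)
Your argument is correct and is essentially the paper's own proof: the paper packages the same case analysis of $X(\alpha)$ as Markov's inequality applied to $Y=X+\tfrac12\geq 0$ (which is at least $1/6$ whenever positive), while you carry out the identical weighted-average rearrangement by hand with the slightly sharper threshold $\tfrac12-\tfrac{1}{3\sqrt3}-\eta_d$. One bookkeeping remark: the extra $+\eta_d$ you place on the right-hand side after substituting $W_1=(\delta-\delta^2)-W_{\geq 2}$ is not actually produced by the rearrangement (only the left-hand coefficient picks up $\eta_d$), and dropping it is what gives the clean conclusion $W_{\geq 2}\leq (\epsilon/2)/(1/4)=2\epsilon$ rather than $2\epsilon+4\eta_d$.
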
 
\begin{proof}
Consider the random variables $X$ and $T$ defined in the Proof of \ref{itm:comp-2}. Since $\delta = 1/3 -\epsilon$ and since $\epsilon < 1/3$, $\E X = -1/2 + \epsilon$. Let $Y$ be the random variable $X+ 1/2$. Note that $Y\geq 0$ and when $Y >0$, $Y \geq 1/6$. Therefore by Markov, $\prob{}{ Y > 0} \leq 6\epsilon$ and 
 
$$ \sum_{\alpha \in \Lambda_{r,2d}: |\alpha| > 1} |\widehat A_\alpha|^2 \leq (\delta -\delta^2) \prob{}{Y>0} \leq 2\epsilon.$$
\end{proof}

Then we use \lref[Lemma]{thm:derand-dict} to obtain the result.

\end{proof}

\section{Derandomized Majority is Stablest}

In this section, we prove \lref[Theorem]{thm:derand-ds}. The graphs described in \lref[Theorem]{thm:derand-ds} and \lref[Theorem]{thm:ds} can be viewed as  Cayley graphs on a suitable group. For the proof, we will need bounds on the eigenvalues of these Cayley graphs.  For a group $G$, $\R^G$ denotes the vector space of real valued functions on $G$.

\begin{definition}[Cayley Operator]
For a group $G$ with operation $+$, an operator $M:\R^G \rightarrow \R^G$ is a \emph{Cayley operator} if there is a distribution $\mu$ on $G$ such that for any function $A:G\rightarrow \R$,
$$ (MA)(f) = \E_{\eta \in \mu} A(f+\eta).$$
It is easy to see that a character $\chi:G \rightarrow \C$ is an eigenvector of $M$ with eigenvalue $\E_{\eta \in \mu} \chi(\eta)$.
\end{definition}

\begin{definition}
\label{def:ops}
We define the following Cayley operators:
\begin{enumerate}
\item For the group $\F_3$, let $T:\R^{\F_3} \rightarrow \R^{\F_3}$ be the Cayley operator corresponding to the distribution $\mu$ that is uniform on $\F_3\setminus \{0\}$. Let $\lambda$ be the second largest eigenvalue in absolute value of $T$.
\item For the group $\Ef_r$, let $T_r: \R^{\Ef_r} \rightarrow \R^{\Ef_r}$ be the Cayley operator corresponding to the distribution $\mu_r$ that is uniform on $\{ f \in \Ef_r : f^{-1}(0) = \emptyset\}$. Let $\lambda_r(\alpha)$ be the eigenvalue of $T_r$ corresponding to the eigenvector $\chi_\alpha$, for $\alpha \in \Ef_r$.
\item For the group $\Pe_{r,2d}$, let $T_{r,d}: \R^{\Pe_{r,2d}} \rightarrow \R^{\Pe_{r,2d}}$ be the Cayley operator corresponding to the distribution $\mu_{r,2d}$ of choosing a uniformly random element in $\{ p^2+1, -p^2 - 1\}$ where $p \in  \Pe_{r,2d}$ is chosen uniformly at random. Let $\lambda_{r,d}(\alpha)$ be the eigenvalue of $T_{r,d}$ corresponding to $\chi_\alpha$, for $\alpha \in \Ef_r$.
\item For the group $\Pe_{r,2d}$, let $S_{r,d}: \R^{\Pe_{r,2d}} \rightarrow \R^{\Pe_{r,2d}}$ be the Cayley operator corresponding to the distribution of $a \cdot \prod_{i=1}^d (\ell_i-1)(\ell_i-2)$ where $\ell_1,\cdots, \ell_d$ are  linearly independent degree $1$ polynomials chosen uniformly at random and $a$ is randomly chosen from $\F_3$. Let $\rho_{r,d}(\alpha)$ be the eigenvalue of $S_{r,d}$ corresponding to $\chi_\alpha$, for $\alpha \in \Ef_r$.
\end{enumerate}
\end{definition}

 Now we will list some known bounds of the eigenvalues of the above operators. It is easy to see that $\lambda$ is a constant $<1$. Since $\F_3^R$ can be identified with $\Ef_r$, $T^{\otimes R}$ can be identified with $T_r$. Hence we have the following lemma.
\begin{lemma}
$$\left| \lambda_r(\alpha) \right| \leq |\lambda|^{|\alpha|}.$$
\end{lemma}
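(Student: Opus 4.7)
The plan is to unpack the tensor-product identification hinted at in the remark preceding the lemma. Under the natural identification of $\Ef_r$ with $\F_3^R$, where $R = 3^r$, which sends $f : \F_3^r \to \F_3$ to the vector of its values $(f(x))_{x \in \F_3^r}$, the set of nowhere-zero functions $\{f \in \Ef_r : f^{-1}(0) = \emptyset\}$ corresponds exactly to $(\F_3 \setminus \{0\})^R$. Consequently, the distribution $\mu_r$ is precisely the $R$-fold product of the uniform distribution on $\F_3 \setminus \{0\}$, and the Cayley operator $T_r$ becomes $T^{\otimes R}$ under this identification, as the authors already assert.

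Next, I would exploit the fact that characters of the product group $\F_3^R$ factorize along the coordinates. Viewing $\alpha \in \Ef_r$ as a vector in $\F_3^R$ indexed by $x \in \F_3^r$, the character $\chi_\alpha$ decomposes as a tensor product of $\F_3$-characters, one for each coordinate $x$, with the $x$-th factor determined by the value $\alpha(x) \in \F_3$. Since the eigenvalue of a tensor product operator on a tensor product eigenvector is the product of the individual eigenvalues, one obtains
$$\lambda_r(\alpha) \;=\; \prod_{x \in \F_3^r} \nu(\alpha(x)),$$
where $\nu(\beta)$ denotes the eigenvalue of $T$ on the $\F_3$-character corresponding to $\beta$.

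To finish, I would observe that coordinates with $\alpha(x) = 0$ contribute $\nu(0) = 1$, since the trivial character is fixed by any Cayley operator (its eigenvalue is $\E_{\eta \in \mu}[1] = 1$). Coordinates with $\alpha(x) \neq 0$ correspond to non-trivial characters of $\F_3$, for which $|\nu(\alpha(x))| \leq |\lambda|$ by the definition of $\lambda$ as the second largest eigenvalue of $T$ in absolute value. Since $|\alpha|$ is defined (as recalled in \lref[Lemma]{lem:fourier}) to be the number of inputs on which $\alpha$ is non-zero, the product above contains exactly $|\alpha|$ potentially non-trivial factors, each of magnitude at most $|\lambda|$, and $|\alpha|$ trivial factors equal to $1$. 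Multiplying the bounds yields $|\lambda_r(\alpha)| \leq |\lambda|^{|\alpha|}$, as claimed. There is no real obstacle here: the argument is a clean unwinding of the tensor-product structure that the paragraph preceding the lemma already flags.
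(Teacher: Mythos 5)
Your proposal is correct and is exactly the argument the paper intends: the lemma is stated as an immediate consequence of identifying $T_r$ with $T^{\otimes R}$, and your write-up simply makes the tensor-product factorization of the distribution, the characters, and the eigenvalues explicit. (One trivial slip: the product has $3^r - |\alpha|$ trivial factors, not $|\alpha|$; this does not affect the bound.)
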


\begin{lemma}\label{lem:eigvalT}
For $\alpha \in \Lambda_{r,2d}$,
\begin{equation}
|\lambda_{r,d}(\alpha)| \begin{cases}
= |\lambda_r(\alpha)| &\text{ if } |\alpha| \leq 3^{d/2} \\
\leq 3^{-3^{C_1d}} &\text{ otherwise. } 
\end{cases}
\end{equation}
\end{lemma}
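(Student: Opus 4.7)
My plan is to expand the eigenvalue
\[\lambda_{r,d}(\alpha)=\E_{p\in\Pe_{r,d},\,a\in\{1,2\}}\chi_\alpha\bigl(a(p^2+1)\bigr),\]
observe that averaging over $a\in\{1,-1\}$ (using $\chi_\alpha(-x)=\overline{\chi_\alpha(x)}$, since $\omega^{-1}=\omega^2$) collapses this to $\Re\!\bigl[\chi_\alpha(1)\cdot\E_p\chi_\alpha(p^2)\bigr]$, and then analyze this real part in the two regimes of $|\alpha|$ separately using the tools already developed in the paper.

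In the small-support regime $|\alpha|\leq 3^{d/2}$, I would invoke \lref[Lemma]{lem:low-deg-local-ind}: a uniformly random $p\in\Pe_{r,d}$, viewed as a string in $\F_3^{3^r}$, is $3^{\lfloor(d+1)/2\rfloor}$-wise independent. Since $|\alpha|\leq 3^{d/2}\leq 3^{\lfloor(d+1)/2\rfloor}$, the joint distribution of $(p(x))_{x\in\supp(\alpha)}$ is exactly uniform over $\F_3^{|\alpha|}$, identical to what one would obtain from a fully random function $p\in\Ef_r$. Because $\chi_\alpha(a(p^2+1))$ depends on $p$ only through its restriction to $\supp(\alpha)$, the expectation defining $\lambda_{r,d}(\alpha)$ coincides with the analogous fully-random expectation on $\Ef_r$, which by the way $T_r$ is set up equals $\lambda_r(\alpha)$. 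This gives $|\lambda_{r,d}(\alpha)|=|\lambda_r(\alpha)|$.

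In the large-support regime $|\alpha|>3^{d/2}$, I would use item~\ref{item:minsup} of \lref[Lemma]{lem:fourier}: since $\alpha\in\Lambda_{r,2d}$ is a minimum-support representative of its coset, $\Delta(\alpha,\Pe_{r,2d}^{\perp})=|\alpha|>3^{d/2}$. This is precisely the hypothesis of \lref[Lemma]{lem:cor-rand-square}, yielding $\bigl|\E_p\chi_\alpha(p^2)\bigr|\leq 3^{-\Omega(3^{d/9})}$. Since $|\chi_\alpha(1)|=1$, bounding the real part by the absolute value gives $|\lambda_{r,d}(\alpha)|\leq 3^{-\Omega(3^{d/9})}$, matching the claimed bound for an appropriate constant $C_1$.

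The main bookkeeping challenge I anticipate is aligning $\lambda_r(\alpha)$, defined via the Cayley operator $T_r$ on $\Ef_r$, with the fully-random analogue of $\lambda_{r,d}(\alpha)$, so that local $3^{d/2}$-wise independence of $p$ genuinely implies exact coincidence of eigenvalues rather than mere approximation. Once this identification is in place, both cases reduce to direct applications of the cited lemmas with no further technical difficulty.
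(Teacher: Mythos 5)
Your overall strategy is the same as the paper's (the paper's proof is precisely the two-sentence version of yours: $3^{d/2}$-wise independence for the small-support case, \lref[Lemma]{lem:cor-rand-square} for the large-support case), and your treatment of the large-support case is complete and correct: minimality of support for $\alpha\in\Lambda_{r,2d}$ gives $\Delta(\alpha,\Pe_{r,2d}^\perp)=|\alpha|>3^{d/2}$, and averaging over $a\in\{1,2\}$ reduces $\lambda_{r,d}(\alpha)$ to $\Re\bigl(\chi_\alpha(1)\,\E_{p}\chi_\alpha(p^2)\bigr)$, which \lref[Lemma]{lem:cor-rand-square} bounds as claimed.

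However, the ``bookkeeping challenge'' you flag in the small-support case is not bookkeeping; it is a genuine obstruction, and the identification you hope for fails. What $3^{d/2}$-wise independence of $p$ buys you is exactly your first clause: for $|\alpha|\le 3^{d/2}$, $\lambda_{r,d}(\alpha)=\E_{q\in\Ef_r,\,a\in\{1,2\}}\chi_\alpha(a(q^2+1))$ with $q$ a fully random function. But this is the eigenvalue of a Cayley operator whose noise distribution is \emph{not} $\mu_r$: the coordinates of $a(q^2+1)$ are coupled through the single global sign $a$, and conditioned on $a$ each coordinate equals $a$ with probability $1/3$ and $2a$ with probability $2/3$, whereas $\mu_r$ is i.i.d.\ uniform on $\{1,2\}$. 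Concretely, for $\alpha$ with $|\alpha|=2$ and both nonzero values equal to $1$, one computes $\E_{q,a}\chi_\alpha(a(q^2+1))=\tfrac12\Re\bigl(((\omega^2-1)/3)^2+((\omega-1)/3)^2\bigr)=1/6$, while $\lambda_r(\alpha)=(-1/2)^2=1/4$; so the claimed equality $|\lambda_{r,d}(\alpha)|=|\lambda_r(\alpha)|$ already fails at $|\alpha|=2$. To be fair, the paper's own one-line proof elides exactly the same point, so the issue lies as much with the lemma statement as with your argument. What your (and the paper's) computation does establish, and what is actually used elsewhere (e.g.\ in the proof of \lref[Theorem]{thm:derand-indep}, and sufficient for \lref[Lemma]{lem:low-deg-noise}), is the bound $|\lambda_{r,d}(\alpha)|\le (1/\sqrt{3})^{|\alpha|}$ for $|\alpha|\le 3^{d/2}$, which follows from the same $k$-wise independence reduction because each factor $\tfrac13\omega^{a\alpha(x)}+\tfrac23\omega^{2a\alpha(x)}$ has modulus $1/\sqrt 3$.
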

\begin{proof}
The first case follows from the fact that a random element $\eta$ according to $\mu_{r,2d}$ (the distribution that defines $T_{r,d}$) is $3^{d/2}$-wise independent (see \lref[Lemma]{lem:low-deg-local-ind}) as a string of length $3^r$ over alphabet $\F_3$. The latter case follows from \lref[Lemma]{lem:cor-rand-square}.
\end{proof}

We will derive bounds on the eigenvalues of $S_{r,d}$ using the results of Haramaty~\etal~\cite{HaramatySS2013}. Haramaty \etal\ analyses the following test for checking whether a polynomial is of degree $2r-2d - 1$: Choose a random affine subspace $S$ of dimension $r-d$ and check if the polynomial is of degree $2r-2d-1$ on $S$. Note that for any $\alpha \in \Pe_{r,2r-2d-1}$ and subspace $S$ of dimension $r-d$, $\sum_{x \in S} \alpha(x) = 0$. Hence this test is equivalent to choosing $\ell_1, \cdots \ell_d \in \Pe_{r,1}$ that are linearly independent and checking whether $\langle \alpha, \prod_{i=1}^d (\ell_i-1)(\ell_i-2) \rangle \neq 0$. Haramaty \etal\ proved the following lemma.
\begin{lemma}\label{lem:hss}
There exists constants $C_1,C_2$ such that
$$ \Pr_{\ell_i}\left[\langle \alpha, \prod_{i=1}^d (\ell_i-1)(\ell_i-2) \rangle = 0\right] \leq \max\left\{ 1- \frac{C_1 \Delta(\alpha, \Pe_{r,2r-2d-1})}{ 3^d} , C_2 \right\} $$
where $\ell_1,\cdots, \ell_i \in \Pe_{r,1}$ are random linearly independent polynomials.
\end{lemma}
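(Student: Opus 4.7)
The plan is to recognize the event $\langle \alpha, \prod_{i=1}^d (\ell_i-1)(\ell_i-2) \rangle = 0$ as exactly the acceptance event of the affine-subspace low-degree test for $\Pe_{r,2r-2d-1}$ analyzed by Haramaty~\etal~\cite{HaramatySS2013}, and then invoke their soundness bound. The paragraph preceding the lemma essentially announces this strategy; the proof just needs to make the identification precise and read off the right parameters.

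First, I would observe that for any $y \in \F_3$ one has $(y-1)(y-2) = 2 \cdot \mathbf{1}[y=0]$, so
$$\prod_{i=1}^d (\ell_i(x)-1)(\ell_i(x)-2) \;=\; 2^d \cdot \mathbf{1}[x \in S], \qquad S := \{x \in \F_3^r : \ell_i(x)=0 \text{ for all } i\}.$$
Hence $\langle \alpha, \prod_i (\ell_i-1)(\ell_i-2) \rangle = 2^d \sum_{x \in S} \alpha(x)$, and since the $\ell_i$ are linearly independent degree-$1$ polynomials, $S$ is a uniformly random $(r-d)$-dimensional affine subspace of $\F_3^r$. Next I would apply the standard summation criterion over $\F_3$ (for polynomials of individual degree at most $2$): among monomials on $r-d$ variables only $\prod_i x_i^2$ has nonzero sum over $\F_3^{r-d}$, so $\sum_{x \in S} \alpha(x) = 0$ iff $\alpha|_S \in \Pe_{r-d,\,2(r-d)-1}$. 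Since any $\alpha \in \Pe_{r,\,2r-2d-1}$ restricts to a polynomial of degree at most $2r-2d-1 = 2(r-d)-1$ on $S$, the event in question coincides with acceptance of the HSS affine-subspace test for membership in $\Pe_{r,\,2r-2d-1}$.

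With this identification in hand, I would invoke the quantitative soundness theorem of \cite{HaramatySS2013} for this test to conclude that, over random linearly independent $\ell_1,\ldots,\ell_d \in \Pe_{r,1}$, the acceptance probability is at most $\max\{\,1 - C_1 \Delta(\alpha, \Pe_{r,\,2r-2d-1})/3^d,\ C_2\,\}$ for absolute constants $C_1, C_2$. The factor $1/3^d$ has a clean interpretation: a random affine subspace of codimension $d$ hits each fixed input of $\F_3^r$ with probability $1/3^d$, so the expected number of ``bad'' points (those where $\alpha$ disagrees with its nearest degree-$(2r-2d-1)$ approximant) landing inside $S$ is $\Delta(\alpha, \Pe_{r,\,2r-2d-1})/3^d$; once this expectation drops below a constant, the test may no longer detect the corruption, yielding the saturation at $C_2$.

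The main obstacle is purely parameter matching: the HSS soundness statement is phrased in terms of their own normalization (relative rather than absolute distance, and a generic affine-subspace test over $\F_q$), so one must carefully transport their bound — with degree threshold $2r-2d-1$, subspace codimension $d$, and field $\F_3$ — into the present form involving $\Delta/3^d$. Once that bookkeeping is completed, Lemma~\ref{lem:hss} is an immediate reformulation of their theorem.
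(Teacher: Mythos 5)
Your proposal matches the paper's treatment: the paper also states this lemma as a direct quotation of the Haramaty--Shpilka--Sudan soundness bound, after noting (exactly as you do) that $\langle \alpha, \prod_{i=1}^d (\ell_i-1)(\ell_i-2)\rangle$ computes $2^d\sum_{x\in S}\alpha(x)$ for the random affine subspace $S$ of dimension $r-d$ cut out by the $\ell_i$, and that vanishing of this sum is precisely acceptance of their affine-subspace test for degree $2r-2d-1$. Your additional bookkeeping (the identity $(y-1)(y-2)=2\cdot\mathbf{1}[y=0]$ over $\F_3$ and the summation criterion for individual-degree-$2$ polynomials) is correct and only makes explicit what the paper leaves implicit.
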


\begin{lemma}\label{lem:eigvalS}
There exists constants $C'_1, C'_2$ such that, for $\alpha \in \Lambda_{r,2d}$,
\begin{equation}
1- \frac{2|\alpha|}{3^d} \leq |\rho_{r,d}(\alpha)| \leq \max\left\{ 1- \frac{C'_1 \Delta(\alpha, \Pe_{r,2r-2d-1})}{ 3^d} , C'_2 \right\} 
\end{equation}
\end{lemma}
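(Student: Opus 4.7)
The plan is to recognize $\rho_{r,d}(\alpha)$ as a probability, after which the upper bound is immediate from \lref[Lemma]{lem:hss} and the lower bound follows from a union bound over $\supp(\alpha)$.

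First I would simplify the eigenvalue. Writing $q := \prod_{i=1}^d (\ell_i-1)(\ell_i-2)$ and using the identity $\E_{a \in \F_3}[\omega^{ac}] = \mathbf{1}[c=0]$ (valid for $c\in\F_3$), one obtains
$$\rho_{r,d}(\alpha) \;=\; \E_{a,\ell_1,\ldots,\ell_d}\!\left[\omega^{a\langle\alpha,q\rangle}\right] \;=\; \Pr_{\ell_1,\ldots,\ell_d}\!\left[\langle\alpha,q\rangle = 0\right].$$
In particular $\rho_{r,d}(\alpha) \in [0,1]$, so $|\rho_{r,d}(\alpha)| = \rho_{r,d}(\alpha)$, and the upper bound in the statement is exactly the conclusion of \lref[Lemma]{lem:hss} with $C'_1 = C_1$ and $C'_2 = C_2$.

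For the lower bound, I would observe that $q(x) = \prod_i(\ell_i(x)-1)(\ell_i(x)-2)$ is nonzero precisely when $\ell_i(x) = 0$ for every $i$, in which case $q(x) = 2^d$. Letting $S := \bigcap_i \ell_i^{-1}(0)$, this yields $\langle\alpha, q\rangle = 2^d \sum_{x\in S} \alpha(x)$, so $\langle\alpha,q\rangle \neq 0$ forces $S \cap \supp(\alpha) \neq \emptyset$. A union bound then reduces the task to estimating $\Pr_{\ell_i}[x_0 \in S]$ for a fixed $x_0 \in \F_3^r$. To this end I would count: the degree-$1$ polynomials on $\F_3^r$ form an $(r+1)$-dimensional $\F_3$-vector space while those vanishing at $x_0$ form an $r$-dimensional subspace, so the fraction of linearly independent $d$-tuples lying entirely in this subspace is
$$\prod_{i=0}^{d-1}\frac{3^r - 3^i}{3^{r+1} - 3^i} \;<\; 3^{-d}.$$
Combining gives $\Pr[\langle\alpha,q\rangle \neq 0] < |\alpha|/3^d \leq 2|\alpha|/3^d$, i.e.\ $\rho_{r,d}(\alpha) \geq 1-2|\alpha|/3^d$, as required.

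There is no essential obstacle here: the upper bound is a direct quotation of \lref[Lemma]{lem:hss}, and the lower bound reduces via a union bound to the elementary counting estimate above; the factor of $2$ in the statement is just slack left over from converting the strict $3^{-d}$ into a convenient form.
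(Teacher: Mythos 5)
Your proof is correct and takes essentially the same route as the paper: both identify $|\rho_{r,d}(\alpha)|$ with (an affine function of) the acceptance probability $p_{\textsf{acc}}$ of the Haramaty~\etal\ test, quote \lref[Lemma]{lem:hss} for the upper bound, and lower-bound $p_{\textsf{acc}}$ by a union bound over $\supp(\alpha)$. The only differences are cosmetic: you compute $\Pr[\forall i,\ \ell_i(x_0)=0]$ exactly under the linear-independence constraint (the paper simply asserts it equals $1/3^d$), and you take $a$ uniform on $\F_3$ as written in \lref[Definition]{def:ops}, giving $\rho_{r,d}(\alpha)=p_{\textsf{acc}}$ exactly, whereas the paper's own computation treats $a$ as uniform on $\{1,2\}$ and obtains $\rho_{r,d}(\alpha)=\tfrac{3}{2}p_{\textsf{acc}}-\tfrac{1}{2}$ --- either reading yields the stated bounds with suitable constants.
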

\begin{proof}
First we will prove the lower bound. By definition 
$$\rho_{r,d}(\alpha) = \E_{\ell_i,a} \omega^{a \cdot \sum_{x} \alpha(x)  \prod_{i=1}^d (\ell_i(x)-1)(\ell_i(x)-2)}.$$
For any $x$ in support of $\alpha$, the probability that $\prod_{i=1}^d (\ell_i(x)-1)(\ell_i(x)-2)\neq 0$ is $1/3^d$. Hence by union bound, $\prod_{i=1}^d (\ell_i(x)-1)(\ell_i(x)-2)= 0$ for every $x$ in support of $\alpha$ with probability $1-|\alpha|/3^d$ and when this happens the expectation is $1$. Also note that the quantity inside the expectation has absolute value $1$.

For proving the upper bound we will use \lref[Lemma]{lem:hss}. Let $p_{\textsf{acc}}$ be the probability mentioned in \lref[Lemma]{lem:hss}. Then
$$ \rho_{r,d}(\alpha) = \E_{\ell_i,a} \omega^{a\langle \alpha, \prod_{i=1}^d (\ell_i-1)(\ell_i-2) \rangle} = p_{\textsf{acc}} + \frac{1-p_{\textsf{acc}}}{2} (\omega+\omega^2) = \frac{3}{2} p_{\textsf{acc}} - \frac{1}{2}.$$
From the above equation and  \lref[Lemma]{lem:hss}, the constants $C'_1,C'_2$ can be obtained.
\end{proof}

\begin{lemma}
\label{lem:low-deg-noise}
For $A,B: \Pe_{r,2d} \rightarrow [0,1]$, let $A' := S^t_{r,d}A$ and similarly define $B'$. Then 
$$\left|  \langle A, T_{r,d} B \rangle - \langle A', T_{r,d}  B'\rangle \right| \leq 2dt / 3^d $$
\end{lemma}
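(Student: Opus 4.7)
The plan is to work in the Fourier domain. Since both $T_{r,d}$ and $S_{r,d}$ are Cayley operators on the abelian group $(\Pe_{r,2d},+)$, they share the common eigenbasis $\{\chi_\alpha\}_{\alpha \in \Lambda_{r,2d}}$ and in particular commute. Furthermore, $S_{r,d}$ is self-adjoint because its defining distribution is invariant under negation: the scalar $a\in \F_3$ being uniform makes $a\cdot \prod_i(\ell_i-1)(\ell_i-2)$ equidistributed with its negative. Using commutativity and self-adjointness,
\[
\langle A', T_{r,d} B'\rangle \;=\; \langle S_{r,d}^t A,\ T_{r,d}\, S_{r,d}^t B\rangle \;=\; \langle A,\ T_{r,d}\, S_{r,d}^{2t} B\rangle,
\]
so the difference to be bounded has the clean Fourier expansion
\[
\langle A, T_{r,d} B\rangle - \langle A', T_{r,d} B'\rangle \;=\; \sum_{\alpha \in \Lambda_{r,2d}} \bigl(1-\rho_{r,d}(\alpha)^{2t}\bigr)\,\widehat A_\alpha\,\overline{\widehat B_\alpha}\,\lambda_{r,d}(\alpha).
\]

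Next, I would pointwise bound each factor by combining the two eigenvalue lemmas already proved. From \lref[Lemma]{lem:eigvalS}, $\rho_{r,d}(\alpha)\in[0,1]$ and $1-\rho_{r,d}(\alpha) \leq 2|\alpha|/3^d$; together with the Bernoulli estimate $1-x^{2t}\leq 2t(1-x)$ for $x\in[0,1]$, this yields
\[
1-\rho_{r,d}(\alpha)^{2t} \;\leq\; \frac{4t\,|\alpha|}{3^d}.
\]
From \lref[Lemma]{lem:eigvalT}, $|\lambda_{r,d}(\alpha)| \leq |\lambda|^{|\alpha|}$ in the regime $|\alpha|\leq 3^{d/2}$ (with $|\lambda|<1$ an absolute constant depending only on $K_3$), and $|\lambda_{r,d}(\alpha)| \leq 3^{-3^{\Omega(d)}}$ outside.

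Finally, I would split the Fourier sum into the two regimes. In the low-frequency part $|\alpha|\leq 3^{d/2}$, the product $|\alpha|\cdot|\lambda|^{|\alpha|}$ is bounded by an absolute constant (its maximum over positive integers is $O(1/\ln(1/|\lambda|))$), so Cauchy--Schwarz together with $\sum_\alpha |\widehat A_\alpha|^2,\ \sum_\alpha|\widehat B_\alpha|^2 \leq 1$ (since $A,B:\Pe_{r,2d}\to[0,1]$ implies $\|A\|_2,\|B\|_2\leq 1$) bounds this part by $O(t/3^d)$; folding the absolute constants into the slack yields a bound well under $2dt/3^d$. In the high-frequency part $|\alpha|>3^{d/2}$, the double-exponential smallness of $|\lambda_{r,d}(\alpha)|$ together with $|1-\rho_{r,d}(\alpha)^{2t}|\leq 2$ makes the contribution utterly negligible.

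The main obstacle is the balancing of the two eigenvalue bounds: the factor $1-\rho_{r,d}(\alpha)^{2t}$ grows with $|\alpha|$ while the factor $\lambda_{r,d}(\alpha)$ decays with $|\alpha|$, and one must check that their product is small uniformly in $\alpha$ and that the final sum has no hidden dependence on $r$. Once this is verified, the clean form $2dt/3^d$ follows with room to spare, and the lemma is in hand.
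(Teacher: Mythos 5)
Your proof is correct and follows essentially the same route as the paper's: both arguments diagonalize $T_{r,d}$ and $S_{r,d}$ in the common character basis and combine \lref[Lemma]{lem:eigvalT} with \lref[Lemma]{lem:eigvalS} to bound the relevant eigenvalue products by $O(t/3^d)$ uniformly over $\alpha$ (splitting at $|\alpha|\approx 3^{d/2}$). The paper packages this as a triangle inequality plus an operator-norm bound on $T_{r,d}(1-S^t_{r,d})$ applied to $A-\E A$ and $B-\E B$, rather than your self-adjointness/commutativity reduction to a single Fourier sum, but the substance is identical.
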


\begin{proof}

\begin{align*}
\left| \langle A, T_{r,d} B \rangle - \langle A', T_{r,d} B' \rangle \right| &\leq \left| \langle A, T_{r,d} B \rangle - \langle  A, T_{r,d} B' \rangle\right| \\
&+  \left| \langle A, T_{r,d}  B' \rangle -  \langle A', T_{r,d} B' \rangle \right| \\
 &= \left|  \langle  A-\E A, T_{r,d} (1-S^t_{r,d})(B-\E B) \rangle \right|\\
&+\left|  \langle   T_{r,d} (1-S^t_{r,d})(A-\E A), B'-\E B' \rangle \right|\\
&\leq  \|T_{r,d} (1-S^t_{r,d})(B-\E B)\| + \|T_{r,d} (1-S^t_{r,d})(A-\E A)\|\\
&\leq 2td/3^d
\end{align*}

The last step follows from the fact that the operators $T_{r,d} ,(1-S^t_{r,d})$ have the same set of eigenvectors and the largest eigenvalue in absolute value of $T_{r,d}(1-S^t_{r,d})$ is $2td/3^d$ from \lref[Lemma]{lem:eigvalT} and \lref[Lemma]{lem:eigvalS}.
\end{proof}

\noindent
\lref[Theorem]{thm:derand-ds} will follow from the following lemma.

\begin{lemma}
\label{lem:key-lem}
$\forall \epsilon > 0, \exists k = O(1/\epsilon^2) , d = O(\log (1/\epsilon))$ such that the following holds:
 if $A,B: \Pe_{r,2d} \rightarrow [0,1]$ then $\exists \mathcal A,\mathcal B :\Ef_r \rightarrow [0,1]$ such that 
\begin{enumerate}
\item $\left| \E A - \E \mathcal A\right| , \left| \E B - \E \mathcal B\right| \leq \epsilon $,
\item For all $x \in \F_3^r, k' \leq k$, 
$$\Inf_x^{\leq k'}(\mathcal A) \leq \Inf_x^{\leq k'}(A) + \epsilon$$
$$\Inf_x^{\leq k'}(\mathcal B) \leq \Inf_x^{\leq k'}(B) + \epsilon$$
\item $\left| \langle A, T_{r,d} B\rangle - \langle \mathcal A, T_r \mathcal B \rangle \right| \leq \epsilon$.

\end{enumerate}
\end{lemma}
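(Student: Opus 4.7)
The plan is to construct $\mathcal A, \mathcal B : \Ef_r \to [0,1]$ in three steps: smooth $A, B$ with $S_{r,d}^t$ to concentrate their Fourier mass on low-support characters; lift the smoothed functions to $\Ef_r$ via Definition~\ref{def:lift}; and clip the lifts pointwise to $[0,1]$. Fix an auxiliary parameter $t = \poly(1/\epsilon)$, set $k = O(1/\epsilon^2)$, and take $d = O(\log(1/\epsilon))$ large enough that $2kt \leq 3^{d-1}$ (so Lemma~\ref{thm:mov-to-poly} applies), $2dt/3^d \leq \epsilon$ (so Lemma~\ref{lem:low-deg-noise} gives a small error), and low-support characters of $\Pe_{r,2d}$ agree with those of $\Ef_r$ up to support~$k$ (Lemma~\ref{lem:fourier}).

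Concretely, let $A_s := S_{r,d}^t A$ and $B_s := S_{r,d}^t B$; since $S_{r,d}$ is a Markov (averaging) operator, $A_s, B_s \in [0,1]$, and $|\langle A, T_{r,d}B\rangle - \langle A_s, T_{r,d}B_s\rangle| \leq \epsilon$ by Lemma~\ref{lem:low-deg-noise}. By Lemma~\ref{lem:eigvalS}, the Fourier tail $A_s^{>k}$ satisfies $\|A_s^{>k}\|_2 \leq \epsilon^{\Omega(1)}$. Let $\widetilde A, \widetilde B$ be the lifts of the truncations $A_s^{\leq k}, B_s^{\leq k}$ to $\Ef_r$, and set $\mathcal A(f) := \max(0,\min(1,\widetilde A(f)))$ with $\mathcal B$ defined analogously. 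Each of the three required properties then reduces to bounding $\|\widetilde A - \mathcal A\|_2$ and $\|\widetilde B - \mathcal B\|_2$: the expectation because $\E \widetilde A = \widehat A(0) = \E A$ and clipping perturbs the mean by at most $\|\widetilde A - \mathcal A\|_1$; the low-degree influences because $\Inf_x^{\leq k'}(\widetilde A) = \sum_{|\alpha|\leq k',\, \alpha(x)\neq 0}\rho_{r,d}(\alpha)^{2t}|\widehat A(\alpha)|^2 \leq \Inf_x^{\leq k'}(A)$, with clipping affecting each low-support Fourier coefficient by at most $\|\widetilde A - \mathcal A\|_2$; and the correlation because
$$\langle \widetilde A, T_r\widetilde B\rangle - \langle A_s^{\leq k}, T_{r,d}B_s^{\leq k}\rangle = \sum_{\alpha \in \Lambda_{r,2d}} \widehat{A_s^{\leq k}}(\alpha)\overline{\widehat{B_s^{\leq k}}(\alpha)}\bigl(\lambda_r(\alpha)-\lambda_{r,d}(\alpha)\bigr)$$
vanishes on $|\alpha|\leq 3^{d/2}$ by Lemma~\ref{lem:eigvalT} and is negligible on higher support by the smoothing, with the clipping gap controlled by Cauchy--Schwarz.

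The main technical hurdle is bounding $\|\widetilde A - \mathcal A\|_2^2 = \E_{f \in \Ef_r}[(\widetilde A(f) - \mathcal A(f))^2]$, since we have no a~priori pointwise control over $\widetilde A$ on $\Ef_r \setminus \Pe_{r,2d}$. I would invoke Lemma~\ref{thm:mov-to-poly} to obtain $\|\widetilde A - 1/2\|_{2k'}$ on $\Ef_r$ equal to $\|A_s^{\leq k} - 1/2\|_{2k'}$ on $\Pe_{r,2d}$ for an appropriate moment order $k'$ satisfying $2k'k \leq 3^{d-1}$, and then leverage that $|A_s - 1/2| \leq 1/2$ pointwise on $\Pe_{r,2d}$ while $A_s^{\leq k} - A_s = -A_s^{>k}$ has small $L_2$-norm. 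A binomial expansion of $(1/2 + |A_s^{>k}|)^{2k'}$ together with $L_p$-interpolation between $\|A_s^{>k}\|_2$ and a crude $L_{2k'}$-bound from hypercontractivity (Lemma~\ref{thm:hyp-for-poly}) shows that $\|A_s^{\leq k} - 1/2\|_{2k'}^{2k'}$ is within $\epsilon^{\Omega(1)}$ of $(1/2)^{2k'}$, so a Markov tail bound on $|\widetilde A - 1/2| > 1/2$ yields $\|\widetilde A - \mathcal A\|_2 = O(\epsilon)$. Balancing $t, k, d$ so that the smoothing, truncation, moment-transport, and clipping errors all stay $O(\epsilon)$ simultaneously is the delicate calculation at the heart of the proof.
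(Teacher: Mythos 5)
Your overall architecture --- smooth with $S_{r,d}^t$, truncate to support $\leq k$, lift to $\Ef_r$, and then repair to a $[0,1]$-valued function, with all three properties reduced to an $L_2$ bound on the repair --- is exactly the paper's proof. The divergence, and the gap, is in the one step that is actually hard: showing that the lift $\widetilde A$ is close in $L_2(\Ef_r)$ to a $[0,1]$-valued function. The paper does this by invoking the Kane--Meka pseudorandomness result (Lemma~\ref{lem:km}): the uniform distribution on $\Pe_{r,d}$ fools the functional $\E[\xi(P(\cdot))]$ for support-$\leq k$ polynomials $P$, so $\E_{\Ef_r}[\xi(\widetilde A)] \approx \E_{\Pe_{r,2d}}[\xi(A_s^{\leq k})]$, and the latter is small because $A_s^{\leq k}$ is $L_2$-close to the genuinely $[0,1]$-valued $A_s$. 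Your replacement --- transport the $2k'$-th moment of $A_s^{\leq k}-1/2$ via Lemma~\ref{thm:mov-to-poly}, control it by a binomial expansion of $(1/2+|A_s^{>k}|)^{2k'}$ with $L_p$-interpolation and hypercontractivity, then apply Markov --- does not survive the quantitative constraints of the lemma.

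Concretely: the smoothing step forces $k$ to be polynomially large in $1/\epsilon$ (you need $2dt/3^d\leq\epsilon$ with $t$ large enough that $e^{-2tk/3^d}\leq\epsilon$, which gives $k=\Omega(\log^2(1/\epsilon)/\epsilon)$; the lemma indeed allots $k=O(1/\epsilon^2)$). Every appeal to hypercontractivity for a support-$\leq k$ function costs a factor $C^{k}=2^{\poly(1/\epsilon)}$. In your binomial expansion, the terms $\binom{2k'}{j}(1/2)^{2k'-j}\E[|A_s^{>k}|^j]$ for $j\geq 3$ must be bounded by interpolating between $\|A_s^{>k}\|_2=\epsilon^{O(1)}$ and $\|A_s^{>k}\|_{2k'}$, and the only available bound on the latter routes through $\|A_s^{\leq k}\|_{2k'}\leq (Ck')^{O(k)}\|A_s^{\leq k}\|_2 = 2^{\poly(1/\epsilon)}$; the interpolation exponent on the $L_{2k'}$ factor is bounded away from $0$ for $j\geq 3$, so these terms are not small --- they are astronomically large. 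The same $2^{\Theta(k)}$ loss reappears when you try to convert the Markov tail bound $\Pr[|\widetilde A-1/2|>1/2+\gamma]$ into a bound on $\E_{\Ef_r}[\xi(\widetilde A)]$, since the integrand on the bad event is only controlled by higher moments of $\widetilde A$ over $\Ef_r$. More fundamentally, $\xi$ is not a polynomial, and matching $O(\poly(1/\epsilon))$ moments of a degree-$\poly(1/\epsilon)$ polynomial under two distributions does not by itself force $\E[\xi]$ to agree up to $\epsilon$; certifying this is precisely the content of the Kane--Meka theorem (via FT-mollification/invariance-principle and LTF-fooling machinery, cf.\ Remark~\ref{rem:kane-meka}), and it cannot be replaced by elementary moment transport here. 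Everything else in your write-up (the Markov property of $S_{r,d}$, the eigenvalue bounds, the exact matching of $\langle A_2,T_{r,d}B_2\rangle$ with $\langle A_3,T_rB_3\rangle$ from $3^{d/2}$-wise independence, and the reduction of properties 1--3 to the clipping error) is correct and coincides with the paper.
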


\begin{proof}[Proof of {\lref[Theorem]{thm:derand-ds}}]
We will show that if \lref[Theorem]{thm:derand-ds} is false then \lref[Theorem]{thm:ds} is also false. First using \lref[Lemma]{lem:key-lem} with parameter $\epsilon = \mu^{O(1)}$, we obtain functions $\mathcal{ A, B}: \Ef_r \rightarrow [0,1]$ such that
\begin{enumerate}
\item $\E \mathcal A ,\E \mathcal B \geq \mu -\epsilon$ ,
\item For all $x \in \F_2^r, k' \leq k$, 
$$\Inf_{x}^{\leq k'}(\mathcal A) \leq \delta +\epsilon ~\text{ and }~\Inf_{x}^{\leq k'}(\mathcal B) \leq \delta  +\epsilon$$
\item $\left|  \langle \mathcal A, T_r \mathcal B \rangle \right|  \leq |\langle A, T_{r,d} B\rangle|+  \epsilon$.

\end{enumerate}

Now applying \lref[Theorem]{thm:ds} to the functions $\mathcal{ A,B}$, we obtain that  $\left|  \langle \mathcal A, T_r \mathcal B \rangle \right| \geq \delta'$, where $\delta' = \mu^{O(1)}$. Hence $\left|  \langle  A, T_{r,d}  B \rangle \right| \geq \delta' -\epsilon$, and we set the parameters $\delta = \delta' -\epsilon$, $d= O(\log 1/\mu)$ and $k = O(\log 1/\mu)$.

\end{proof}

\subsection{Proof of \texorpdfstring{\lref[Lemma]{lem:key-lem}}{Lemma 4.8}}
For proving \lref[Lemma]{lem:key-lem}, crucially use the following lemma by Kane and Meka~\cite{KaneM2013}.
\begin{lemma}\label{lem:km}Let $\xi:\R \rightarrow \R_{+}$ be the function $\xi(x) := \left(\max \{ -x, x-1,0 \}\right)^2$. For any parameters $k\in\mathbb{N}$ and $\epsilon \in (0,1)$, there is a $d = O(\log (k /\epsilon))$ such that the following holds: If the polynomial $P:\Ef_r \rightarrow \R$ satisfies $\|P\| \leq 1$ and $\widehat P(\alpha)=0$ for $\alpha \in \Lambda_{r,d}$ such that $|\alpha| > k$, then
$$\left| \E_{f \in \Ef_r } \xi(P(f)) - \E_{f\in \Pe_{r,d}} \xi(P(f)) \right| \leq \epsilon.$$
\end{lemma}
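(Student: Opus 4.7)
The plan is to regard uniform-over-$\Pe_{r,d}$ as a pseudorandom generator against the test function $f \mapsto \xi(P(f))$, exploiting two structural properties of this subspace of $\Ef_r$: it is $t$-wise independent as an $\F_3^{3^r}$-valued string with $t = 3^{\lfloor (d+1)/2 \rfloor}$ (\lref[Lemma]{lem:low-deg-local-ind}), and it has exactly zero bias against any character $\chi_\alpha$ with $\alpha \notin \Pe_{r,d}^\perp$ (by orthogonality of characters under restriction to a subspace). Since $P$ has Fourier degree at most $k$, write $P(f) = \sum_{\alpha \in \Lambda_{r,d},\, |\alpha| \leq k} \widehat{P}(\alpha) \chi_\alpha(f)$ and regard $P(f)$ as a polynomial of degree $\leq k$ in the $3^r$ coordinate variables $f(x) \in \F_3$ (equivalently, in the $3 \cdot 3^r$ indicators $\mathbf 1[f(x)=j]$).

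The argument then proceeds in three steps. First, truncate the range of $P$: by hypercontractivity for Fourier-degree-$k$ polynomials on $\Ef_r$ (\lref[Lemma]{thm:hyp}), $\|P\|_{2q} \leq C^k \|P\|_2 \leq C^k$ for any $q$ and an absolute constant $C$, so Markov gives $\Pr_{f \in \Ef_r}[|P(f)| > M] \leq \epsilon/8$ for a suitable $M = \poly(k,1/\epsilon)$; the same tail bound transfers to $\Pe_{r,d}$ because $\E P^{2q}$ is a polynomial of degree $\leq 2qk$ in the coordinate variables and is therefore preserved verbatim by $(2qk)$-wise independence. Second, on the event $|P(f)| \leq M$, approximate $\xi$ by a polynomial $\tilde\xi$ of degree $D$ with $\sup_{x \in [-M,M]}|\xi(x)-\tilde\xi(x)| \leq \epsilon/8$; since $\xi$ is $C^1$ with piecewise-constant second derivative, Jackson's theorem supplies such an approximant with $D = O(M^2/\epsilon)$. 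Third, $\tilde\xi \circ P$ is then a polynomial of degree $\leq kD$ in the coordinate variables, and by $t$-wise independence of uniform-over-$\Pe_{r,d}$ its expectation matches that under uniform-over-$\Ef_r$ provided $t \geq kD$, i.e., $d \geq 2\log_3(kD)+O(1)$. A triangle inequality over truncation, polynomial approximation, and moment-matching errors then yields the claimed bound $|\E_{\Ef_r}\xi(P) - \E_{\Pe_{r,d}}\xi(P)| \leq \epsilon$.

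The main obstacle is obtaining the sharp quantitative conclusion $d = O(\log(k/\epsilon))$ rather than the $d = O(k + \log(1/\epsilon))$ bound produced by the naive chain above. The blowup comes from the $C^k$ hypercontractive constant, which forces $M = \poly(k)$ and hence $D$ polynomial in $k$. The Kane--Meka fix, which I would adopt to get the sharp bound, replaces hard truncation plus Jackson approximation by a Gaussian-convolution smoothing $\xi \ast g_\sigma$ with bandwidth $\sigma$ calibrated to the sub-Gaussian tail of $P$, followed by a bounded-order Hermite expansion of the mollified function; this replaces the exponential $C^k$ dependence by a logarithmic one, collapsing $D$ to $(\log(k/\epsilon))^{O(1)}$ and hence $d$ to $O(\log(k/\epsilon))$. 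The extension of the Kane--Meka analysis from the $\F_2$-hypercube to $\F_3$ is routine because it relies only on bounded independence and hypercontractivity, both of which hold uniformly over constant-size alphabets; the small-bias property of $\Pe_{r,d}$ (actually vanishing bias off $\Pe_{r,d}^\perp$) is strictly stronger than what Kane--Meka assume.
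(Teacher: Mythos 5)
Your proposal is correct and takes essentially the same route as the paper: the paper likewise proves Lemma~\ref{lem:km} by combining the $3^{\Omega(d)}$-wise independence of a random element of $\Pe_{r,d}$ (Lemma~\ref{lem:low-deg-local-ind}) with the Kane--Meka result~\cite{KaneM2013} (their Lemma~4.1, together with the LTF-fooling result of Diakonikolas et al.~\cite{DiakonikolasGJSV2010} that it uses), rather than the naive truncation-plus-polynomial-approximation chain you correctly discard. The only substance the paper adds beyond your sketch is Remark~\ref{rem:kane-meka}: the transfer from the Boolean to the $\{1,\omega,\omega^2\}$ domain is justified not by a generic appeal to hypercontractivity but by the conjugate symmetry $\widehat P(\alpha)=\overline{\widehat P(-\alpha)}$ of real-valued $P$, which is what lets the Kane--Meka and DGJSV arguments go through verbatim.
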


\begin{remark}
\label{rem:kane-meka}
For proving \lref[Lemma]{lem:km}, a generalization of ~\cite[Lemma 4.1]{KaneM2013} to polynomials of the form $P:\{ 1,\omega, \omega^2\}^R \rightarrow \R$ is required. However, we observe that the polynomials we consider are real-valued $P:\Ef_r \rightarrow \R$ and hence satisfy $\widehat P(\alpha) = \overline P(-\alpha)$. 

Using this observation, the proof of  ~\cite[Lemma 4.1]{KaneM2013} generalizes to our setting (the above property is preserved throughout the proof). The result of~\cite{KaneM2013} also requires an earlier result of Diakonikolas, Gopalana, Jaiswal, Servedio, and Viola~\cite{DiakonikolasGJSV2010} on fooling Linear Threshold functions (LTFs) with sample spaces of bounded independence. This proof also goes through for Thresholds of real-valued linear functions defined on variables that are uniformly distributed in $\{1,\omega,\omega^2\}$. \footnote{Such a function is the sign of a ``linear polynomial'' of the form $\left(\sum_{i=1}^R \alpha_i x_i + \overline{\alpha_i x_i}\right) - \theta$ for $\theta\in\mathbb{R}$.}
\end{remark}

\begin{proof}[Proof of {\lref[Lemma]{lem:key-lem}}]
Let $t = \frac{3^{d} \log (10/\epsilon)}{  2k}$, and $A_1 = S^t_{r,d} A, B_1 = S^t_{r,d} B$. Then from \lref[Lemma]{lem:low-deg-noise}
\begin{equation}
\left| \langle A, T_{r,d} B\rangle - \langle A_1, T_{r,d} B_1 \rangle \right| \leq 2dt/3^d
\end{equation}
and similarly for $B_1$. Let $k$ be a number $<3^{d/2}$ and $A_2 = \Re(A_1^{\leq k})$.  Using the fact that $A_1$ is real valued,
\begin{equation}
\| A_1 -  A_2 \| \leq \| A_1 - A_1^{\leq k} \| \leq (1-2k/3^d)^t \leq e^{-2tk/3^{d}} = \epsilon/10
\end{equation}

Let $A_3:\Ef_r \rightarrow \R$ be defined as $A_3 := \Re((A_1^{\leq k})')$ where $(A_1^{\leq k})'$ is the lift of $A_1^{\leq k}$. Since a random degree $d$ polynomial is $3^{d/2}$-wise independent,
\begin{equation}
\langle A_2, T_{r,d} B_2 \rangle = \langle A_3, T_r B_3 \rangle
\end{equation} 

 Note that $A_3$ may not be a $[0,1]$-valued function. But since $A$ is $[0,1]$-valued, so is $A_1$. Let $\xi:\R \rightarrow \R_{+}$ be the function $\xi(x) := \left(\max \{ -x, x-1,0 \}\right)^2$. Notice that $\E_f \xi \circ A(f) $ gives the $\ell^2_2$ distance of $A$ from $[0,1]$-valued functions. Using  \lref[Lemma]{lem:km}, for $d = O(\log (k/\epsilon))$,
\begin{equation}
\left| \E_{f \in \Pe_{r,2d}} \xi (A_2(f)) -  \E_{f \in \Ef_r} \xi (A_3(f)) \right| \leq \epsilon/10
\end{equation}
and similarly for $B_2$. Hence there exists functions $\mathcal {A,B}: \Ef_r \rightarrow [0,1]$ such that
\begin{enumerate}
\item $\left| \E A - \E \mathcal A\right| \leq ||A_1' - \mathcal A||  \leq \epsilon $ (similarly for $B$),
\item For all $x \in \F_3^r, k' \leq k$, $\Inf_x^{\leq k'}(\mathcal A) \leq \Inf_x^{\leq k'}(A) + \epsilon$
(similarly for $B$),
\item $\left| \langle A, T_{r,d} B\rangle - \langle \mathcal A, T_r \mathcal B \rangle \right| \leq \epsilon$.

\end{enumerate}

\end{proof}

{\small
\ifconf

\else
\bibliographystyle{../prahladhurl}
\fi
\bibliography{stacs20Dinur-bib}
}

\ifconf
\else
\appendix

\section{Hardness of Graph Coloring}

In this section we prove \lref[Theorem]{thm:col-hard}. Let $G =(U,V,R,E,\Pi)$ be a unique games cover instance with label set $R=\F_3^r$ and the constraints $\pi$ are full rank linear transformations. We will construct a graph $\mathcal{G=(V,E)}$ with $\mathcal V= V\times \Pe^r_{2d}$, where $d$ is a parameter to be fixed later. Let $T_{r,d}$ be the operator in \lref[Definition]{def:ops}. There is an edge in $\mathcal G$ between $(v,f)$ and $(w,g)$  if there is a $u\in U$ such that $(u,v), (u,w) \in E$ and $T_{r,d}(f\circ\pi^{-1}_{u,v},g \circ \pi^{-1}_{u,w}) >0$, where $\pi_{u,v}$ is the full rank linear map that maps a label of $v$ to label of $u$.

\begin{lemma}[Completeness]\label{lem:completeness}
If $G$ belongs to the YES case of $(c,s,r)$-UG Conjecture then $\cal G$ has a induced subgraph of relative size $1-c$ that is $3$-colorable.
\end{lemma}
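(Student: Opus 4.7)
The plan is to exhibit an explicit 3-coloring of the induced subgraph on $S \times \Pe_{r,2d}$ using the good labeling $\ell: U \cup V \to \F_3^r$ guaranteed by the YES case. Let $S \subseteq V$ with $|S| \geq (1-c)|V|$ be the set of vertices for which every incident edge $(u,v)$ is satisfied by $\ell$, i.e.\ $\pi_{u,v}(\ell(v)) = \ell(u)$, equivalently $\ell(v) = \pi_{u,v}^{-1}(\ell(u))$ using that the constraints are full-rank linear maps. I would define the coloring $\chi : S \times \Pe_{r,2d} \to \F_3$ by $\chi(v,f) := f(\ell(v))$. This uses only three colors, and the induced subgraph on $S \times \Pe_{r,2d}$ has relative size at least $1-c$.

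To verify that $\chi$ is proper, I would consider an arbitrary edge between $(v,f)$ and $(w,g)$ with $v,w \in S$. By construction of $\mathcal{G}$, there is a common neighbor $u \in U$ and a shift $\eta$ in the support of $\mu_{r,2d}$, i.e.\ $\eta = \pm(p^2+1)$ for some $p \in \Pe_{r,d}$, such that $g \circ \pi_{u,w}^{-1} = f \circ \pi_{u,v}^{-1} + \eta$. Since $v,w \in S$, both constraints $(u,v)$ and $(u,w)$ are satisfied, so $\ell(v) = \pi_{u,v}^{-1}(\ell(u))$ and $\ell(w) = \pi_{u,w}^{-1}(\ell(u))$. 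Evaluating the shift equation at $\ell(u)$ then yields
\[
\chi(w,g) - \chi(v,f) \;=\; g(\ell(w)) - f(\ell(v)) \;=\; \eta(\ell(u)) \;=\; \pm\bigl(p(\ell(u))^2 + 1\bigr).
\]

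It remains to observe that the right-hand side is nonzero in $\F_3$: every element of $\F_3$ squared lies in $\{0,1\}$, so $p(\ell(u))^2 + 1 \in \{1,2\}$, and hence $\eta(\ell(u)) \in \{1,2\}$ is never zero. Therefore $\chi(v,f) \neq \chi(w,g)$ and $\chi$ is a valid 3-coloring of the induced subgraph on $S \times \Pe_{r,2d}$.

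I do not anticipate any serious obstacle: the whole argument is essentially definition-unpacking, and the choice of the noise $p^2 + 1$ in $T_{r,d}$ (see \lref[Definition]{def:ops}) was engineered precisely so that every shift in its support is nowhere-vanishing on $\F_3$. The one conceptual point is that the argument crucially needs \emph{both} endpoints of any edge to lie in $S$, so that both projection constraints at the common neighbor $u$ are simultaneously satisfied; this is exactly why the lemma restricts to the induced subgraph on $S \times \Pe_{r,2d}$ rather than claiming $3$-colorability of all of $\mathcal{G}$.
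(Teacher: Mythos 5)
Your proposal is correct and follows essentially the same route as the paper's proof: color $(v,f)$ by $f(\ell(v))$, use membership of both endpoints in $S$ to identify $f(\ell(v)) = f\circ\pi_{u,v}^{-1}(\ell(u))$ and $g(\ell(w)) = g\circ\pi_{u,w}^{-1}(\ell(u))$, and observe that the shift $a(p^2+1)$ never vanishes pointwise since squares in $\F_3$ lie in $\{0,1\}$. Your write-up is if anything slightly more explicit than the paper's about why the evaluated shift is nonzero.
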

\begin{proof}
Suppose the label cover instance has a labeling $\ell: V \rightarrow \F_3^r$ and a set $S \subseteq V, |S| = (1-c)|V|$, such that $\ell$ satisfies all the edges incident on vertices in $S$. We will show that $A_v(f):=f(\ell(v))$ for $v\in V$, is a $3$-coloring for the induced subgraph of $\cal G$ on the set $S \times \Pe_{r,2d}$. For any $u \in U, v,w \in S$ having edges $(u,v),(u,w)\in E$, consider the edge $((v,f),(w,g)) \in \mathcal E$. The colors given to the end points are $f(\ell(v))$ and $g(\ell(w))$. Since $T_{r,d}(f\circ\pi^{-1}_{u,v},g\circ \pi^{-1}_{u,w}) >0$, 
$$g\circ \pi^{-1}_{u,w} = f\circ \pi^{-1}_{u,v} + a(p^2 +1) \text{ for some  } p \in \Pe^r_d, a \in \{1,2\}.$$
 So $f(\ell(v))= f \circ \pi_{u,v}^{-1}(\ell(u))  \neq g\circ \pi_{u,w}^{-1}(\ell(u)) = g (\ell(w))$.

\end{proof}

\begin{lemma}[Soundness]\label{lem:soundness}
If $G$ belongs to the NO case of $(c,s,r)$-UG Conjecture, $\cal G$ has an independent set of relative size $\mu$ and $d= O(\log 1/\mu)$ then
$ \mu  \leq \poly(s(n)).$
\end{lemma}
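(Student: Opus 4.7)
The plan is a standard unique games decoding argument powered by the derandomized ``majority is stablest'' statement, \lref[Theorem]{thm:derand-ds}. Let $I \subseteq \cal V$ be an independent set with $|I|/|\cal V| = \mu$, and define $A_v : \Pe_{r,2d} \to \{0,1\}$ as the indicator of $\{ f : (v,f) \in I \}$, so that $\E_v \E_f A_v(f) = \mu$. For each UG edge $(u,v)$, since $\pi_{u,v}$ is a linear bijection of $\F_3^r$, precomposition by it permutes $\Pe_{r,2d}$ in a degree-preserving manner, so I set $A_u^v(h) := A_v(h \circ \pi_{u,v})$, a Boolean function on $\Pe_{r,2d}$ with $\E A_u^v = \E A_v$. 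Combining the edge relation in $\cal G$ with the substitutions $h = f \circ \pi_{u,v}^{-1}$, $h' = g \circ \pi_{u,w}^{-1}$, the independence of $I$ translates to the pointwise identity $A_u^v(h)\,A_u^w(h') = 0$ whenever $h'-h$ lies in the support of $\mu_{r,2d}$; equivalently, since all quantities are nonnegative, $\langle A_u^v, T_{r,d} A_u^w \rangle = 0$ for every $u$ and every pair of neighbors $v,w$ of $u$.

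Next I would identify ``good'' vertices on both sides. Let $V' := \{ v : \E A_v \geq \mu/2 \}$ and $U' := \{ u : \Pr_{v \sim u}[v \in V'] \geq \mu/4 \}$; two rounds of averaging (using right-regularity of the UG instance) give $|V'|/|V|,\, |U'|/|U| \geq \Omega(\mu)$. For each $u \in U'$, two independently sampled random neighbors $v,w$ of $u$ both lie in $V'$ with probability $\geq \Omega(\mu^2)$. For each such triple, $A_u^v, A_u^w$ have expectation $\geq \mu/2$ and satisfy $\langle A_u^v, T_{r,d} A_u^w \rangle = 0$, so \lref[Theorem]{thm:derand-ds} (invoked with parameter $\mu/2$, giving $d, k = O(\log 1/\mu)$ and $\delta = \mu^{O(1)}$) yields a coordinate $x_{u,v,w} \in \F_3^r$ with $\Inf_{x_{u,v,w}}^{\leq k}(A_u^v),\, \Inf_{x_{u,v,w}}^{\leq k}(A_u^w) \geq \delta$.

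Finally I would extract a UG labeling. A short Fourier computation gives $\widehat{A_u^v}(\beta) = \widehat{A_v}(\beta \circ \pi_{u,v})$; since the bijection $\pi_{u,v}$ preserves $|\beta|$, this yields $\Inf_x^{\leq k}(A_u^v) = \Inf_{\pi_{u,v}^{-1}(x)}^{\leq k}(A_v)$. Define the list $L(v) := \{ y \in \F_3^r : \Inf_y^{\leq k}(A_v) \geq \delta \}$; by Parseval, $|L(v)| \leq k/\delta = \mu^{-O(1)}$. Assign $\ell(v)$ uniformly from $L(v)$ (arbitrarily when empty), and assign $\ell(u)$ by picking a uniformly random neighbor $v$ of $u$, then a uniformly random $y \in L(v)$, and setting $\ell(u) := \pi_{u,v}(y)$. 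A standard UG-decoding calculation shows that in expectation a $\mu^{O(1)}$ fraction of edges of $G$ are satisfied, which combined with the NO-case hypothesis forces $\mu^{O(1)} \leq s(n)$, i.e.\ $\mu \leq \poly(s(n))$.

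The main obstacle I anticipate is the labeling-extraction step: I must ensure that when a UG edge $(u,v)$ has the good event happening through some third neighbor $w$ of $u$, the random labels $\ell(u)$ and $\ell(v)$ match through $\pi_{u,v}$ with probability $\mu^{O(1)}$, while paying only the $\mu^{-O(1)}$ list-size factors from the $L(v)$'s. This is a routine but careful bookkeeping argument; the rest of the proof is direct application of \lref[Theorem]{thm:derand-ds} and of Parseval.
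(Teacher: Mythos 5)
Your proposal is correct and follows essentially the same route as the paper's proof: restrict the independent set to per-vertex indicator functions, pass to the ``good'' vertices with $\E A_v \geq \mu/2$, use independence to get $\langle A_u^v, T_{r,d}A_u^w\rangle = 0$ and invoke \lref[Theorem]{thm:derand-ds} to produce matching influential coordinates, then list-decode with $|L(v)|\leq k/\delta$ and a randomized labeling satisfying a $\mu^{O(1)}$ fraction of edges. The only differences are cosmetic (you introduce the set $U'$ explicitly and spell out the averaging, which the paper leaves implicit).
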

\begin{proof}
Let $I_v:\Pe_{r,2d} \rightarrow \{0,1\}$ be the indicator function of $I$ restricted to vertices in $\mathcal V$ corresponding to $v\in V$. Let $J=\{v \in V:  \E_{f \in \Pe_{r,2d}} I_v(f) \geq \mu/2\}$. Then we have that $|J|/|V| \geq \mu/2$. For $v\in J$, let $L(v) = \{x \in \F_3^{r}: \Inf_x^{\leq k}(I_v) > \delta \}$ where $\delta = \poly(\mu), k = O(\log 1/\mu)$ are parameters from \lref[Theorem]{thm:derand-ds}. Note that $|L(v)| \leq k/\delta$, since the sum of all degree $k$ influences is at most $k$. 
\begin{claim}\label{claim:sound}
Let $v,w \in J$ and $(u,v),(u,w) \in E$. Then there exists $a \in L(v), b \in L(w)$ such that $\pi_{u,v}(a)= \pi_{u,w}(b)$.
\end{claim}
\begin{proof}
Let $A(f):= I_v(f\circ\pi^{-1}_{u,v})$, $B(g):= I_w(g\circ \pi^{-1}_{u,w})$. Since $I$ is an independent set, if $(v,f\circ\pi^{-1}_{u,v}), (w,g\circ\pi^{-1}_{u,w}) \in I$, then  $T_{r,d}(f\circ\pi^{-1}_{u,v},g\circ\pi^{-1}_{u,w}) =0$, which gives that
\begin{equation}
\langle A, T_{r,d} B\rangle = 0
\end{equation}
From \lref[Theorem]{thm:derand-ds}, there is some $c \in \F_3^r$ such that $\Inf^{\leq k}_c( A),\Inf^{\leq k}_c(B) > \delta$, which gives that $\pi_{u,v}^{-1}(c) \in L(v)$ and $\pi_{u,w}^{-1}(c) \in L(w)$. 
\end{proof}
Now consider the randomized partial labeling $L'$ to $G$, where for $v \in J$, $L'(v)$ is chosen randomly from $L(v)$ and for $u\in U$,  choose a random neighbor $w \in J$ (if it exists), a random label $a \in L(w)$ and set $L(u)= \pi_{u,w}^{-1}(a)$. For any $v \in J$,  any edge $(u,v)$, the probability of it being satisfied by $L'$ is $\mu^2/k^2 = \poly(\mu)$, because of \lref[Claim]{claim:sound}.

\end{proof}

\begin{proof}[Proof of Theorem \ref{thm:col-hard}]
The size of $\mathcal G$ denoted by $N$ is at most $  n 3^{r^{O(d)}}$. Substituting $r= 2^{O(\sqrt{\log \log n})}, d = \log 1/\mu \leq O(\sqrt{\log \log n})$, we get that $N = \poly(n)$ and hence the reduction is polynomial time.
\end{proof}

\ifarxiv
\else
\section{Towards an even better ``majority is stablest'' result}\label{app:conj}

The following is a more ambitious variant of Conjecture 5.13 in the work of Barak et al.~\cite{BarakGHMRS2012}.

\begin{conjecture}
\label{conj:PTF}
For any parameters $k\in\mathbb{N}$ and $\eta\in (0,1)$, there is an $\ell = (k\lg \frac{1}{\eta})^{O(1)}$ such that the following holds for any $\ell$-wise independent distribution $\mathcal{D}$ over $\Ef_r$. Let $A:\Ef_r\rightarrow \mathbb{R}$ be a polynomial of degree at most $k$ (i.e. $\widehat{A}(\alpha) = 0$ for all $\alpha$ such that $|\alpha| > k$) and $\theta\in\mathbb{R}$ be such that $\prob{f\in\Ef_r}{A(f)\geq \theta}\geq \eta$. Then, we have $\prob{f\sim \mathcal{D}}{A(f) \geq \theta}\geq \frac{\eta^{O(1)}}{2^{O(k)}}$.
\end{conjecture}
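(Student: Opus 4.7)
My plan for Conjecture~\ref{conj:PTF} is built around three ingredients: exact moment matching under $\ell$-wise independence, hypercontractivity for degree-$k$ polynomials over $\Ef_r$, and a Paley--Zygmund-style transfer from moments to tail probabilities. The starting observation is that if $A:\Ef_r\to\R$ has Fourier support on characters of weight at most $k$, then each power $A^j$ is a polynomial of degree at most $jk$ in the underlying $\F_3$-variables $\{f(x)\}_{x\in\F_3^r}$. Consequently, an $\ell$-wise independent distribution $\mathcal{D}$ over $\Ef_r$ preserves $\E[A^j]$ exactly whenever $jk\leq\ell$, so setting $\ell=(k\log(1/\eta))^{O(1)}$ matches on the order of $\log(1/\eta)^{O(1)}$ moments of $A$ between $\mathcal{D}$ and uniform $\Ef_r$.

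The second ingredient is the $(2,q)$-hypercontractivity of degree-$k$ Fourier polynomials over $\Ef_r$, namely $\|A\|_q\leq (q-1)^{k/2}\|A\|_2$ for even $q$. Combined with the hypothesis $\Pr_{\Ef_r}[A\geq\theta]\geq\eta$, a Paley--Zygmund argument applied to $(A-\theta)_+^{2s}$ for $s=O(\log(1/\eta))$ yields a lower bound of the form $\E_{\Ef_r}[(A-\theta)_+^{2s}]\geq \eta^{O(1)}\cdot\theta^{2s}/2^{O(sk)}$. To carry this lower bound over to $\mathcal{D}$, I would construct a univariate polynomial $P_\theta(x)$ of degree $2s$ bracketing $(x-\theta)_+^{2s}$ from below on the hypercontractive range $\{|A|\leq L\|A\|_2\}$ with $L=O(q^{k/2}/\eta^{1/q})$, so that $0\leq P_\theta(A)\leq (A-\theta)_+^{2s}$ with high probability and $\E_{\Ef_r}[P_\theta(A)]\geq\tfrac12\E_{\Ef_r}[(A-\theta)_+^{2s}]$. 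By moment matching, $\E_{\mathcal{D}}[P_\theta(A)]=\E_{\Ef_r}[P_\theta(A)]$ provided $\ell\geq 2sk$, and a reverse Paley--Zygmund on $\mathcal{D}$ --- which is valid because the moments of $A$ up to order $4s$ are also preserved and so hypercontractivity transfers to $\mathcal{D}$ for this fixed polynomial --- then gives $\Pr_{\mathcal{D}}[A\geq\theta]\geq\eta^{O(1)}/2^{O(k)}$.

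The main obstacle I anticipate is the polynomial bracketing step. Chebyshev-style approximations of $(x-\theta)_+^{2s}$ of the required low degree exist when the random variable $A$ has strong concentration, e.g. approximately Gaussian behaviour, but this can fail for general degree-$k$ polynomials whose mass can be carried by a few heavy variables. To address this, I would invoke a \emph{regularity decomposition} of $A$, writing it as an ``approximately Gaussian'' part with all Fourier coefficients small plus a residue depending on a bounded number of heavy coordinates. On the regular part, a Chebyshev polynomial of degree $2s$ suffices; the residue is handled by an outer case analysis over partial assignments to the heavy coordinates, reducing inductively to the same problem for a polynomial on fewer variables. Controlling the blow-up through this induction is the delicate quantitative step: it is essentially the generalization of the Kane--Meka analysis (invoked in \lref[Lemma]{lem:km}) from $k=1$ to general $k$, while retaining the target dependence $\ell=(k\log(1/\eta))^{O(1)}$, and this is where I expect the bulk of the technical work to lie.
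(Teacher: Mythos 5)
Before anything else: the statement you are proving is presented in the paper as a \emph{conjecture}. The authors explicitly introduce it as ``a more ambitious variant of Conjecture 5.13 in the work of Barak et al.'' and give no proof; it is used only as a hypothesis from which an improved version of \lref[Lemma]{lem:km} (and hence better parameters in the main theorem) would follow. There is therefore no argument of the paper's to compare yours against, and a complete proof would be a new result settling an open problem, not a reconstruction of something in the paper.

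Your proposal does not close that gap. The moment-matching observation and the hypercontractive bound are standard and correct, but the entire content of the conjecture sits in the step you defer, namely the construction of a low-degree polynomial that approximates the threshold one-sidedly. Two concrete problems. (i) A bracketing $0\le P_\theta(x)\le (x-\theta)_+^{2s}$ that holds only ``with high probability'' on the range $\{|A|\le L\|A\|_2\}$ is not usable under $\mathcal{D}$: bounded independence gives you control of the moments of $A$ under $\mathcal{D}$ but no control over where $A$ lands, so to convert $\E_{\mathcal{D}}[P_\theta(A)]$ being large into a lower bound on $\prob{f\sim\mathcal{D}}{A(f)\ge\theta}$ you need $P_\theta(x)\le 0$ for every $x<\theta$ (a genuine one-sided sandwiching polynomial), and a nonzero degree-$2s$ polynomial cannot vanish on a half-line, so the construction is delicate. (ii) The achievable quality of any such degree-$2s$ one-sided approximator is governed by the anticoncentration of the degree-$k$ polynomial $A$ near $\theta$; by Carbery--Wright this degrades like $\epsilon^{1/k}$, and this is exactly why the regularity-decomposition route you invoke (Diakonikolas et al.\ for $k\le 2$, Kane for general $k$) is only known to yield $\ell$ \emph{polynomial} in the inverse error --- hence polynomial in $1/\eta$ --- rather than the $\ell=(k\lg\frac{1}{\eta})^{O(1)}$ the conjecture demands. (A smaller sign of trouble: your intermediate bound $\E[(A-\theta)_+^{2s}]\ge\eta^{O(1)}\theta^{2s}/2^{O(sk)}$ is vacuous when $\theta\le 0$; the bound must be stated relative to $\|A\|_2$.) Your own text concedes that ``the bulk of the technical work'' lies precisely in this step, so what you have is a reasonable plan whose hard step coincides with the open problem. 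If you want an unconditional statement, prove instead the weaker version with $\ell=\poly(k,1/\eta)$, which the existing PTF-fooling machinery does deliver but which is too weak to improve the parameters of the paper's main theorem.
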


Assuming the above conjecture, we can prove the following claim, which allows us to prove an improved version of \lref[Lemma]{lem:km} for our application. The rest of the proof also needs some simple modifications in order to improve the parameters in Theorem~\ref{thm:derand-ds}, but these details are omitted. 

\begin{claim}
\label{clm:using-conj}
Assume Conjecture~\ref{conj:PTF}. Fix any $k\in\mathbb{N}$ and $\nu\in (0,1)$. There is a $d = O(\lg k + \lg\lg(\frac{1}{\nu}))$ such that the following holds. Let $A:\Ef_r\rightarrow \mathbb{R}$ be a degree $\leq k$ polynomial such that $||A||\leq 1$. Then, for $\xi$ as defined in the \lref[Lemma]{lem:km}, we have 
\[
\avg{f\in \Ef_r}{\xi(A(f))} \geq \nu \Rightarrow \avg{f\in \Pe_{r,d}}{\xi(A(f))} \geq \frac{\nu^{O(1)}}{2^{O(k)}}
\]
\end{claim}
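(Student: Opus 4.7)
The plan is to reduce the hypothesis on $\E[\xi(A)]$ to a one-sided tail event for $A$ (or $-A$), and then to transfer that tail probability from the uniform distribution on $\Ef_r$ to the uniform distribution on $\Pe_{r,d}$ by invoking Conjecture~\ref{conj:PTF} together with the bounded-independence property of \lref[Lemma]{lem:low-deg-local-ind}.

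First, I will use a Paley--Zygmund argument to convert the integral bound $\avg{f\in\Ef_r}{\xi(A(f))} \geq \nu$ into a pointwise tail bound. Since $\xi(x) \leq x^2$ for every $x\in\R$ and $A$ is a degree-$\leq k$ polynomial with $\|A\|_2 \leq 1$, the hypercontractive inequality \lref[Lemma]{thm:hyp} gives $\|A\|_4 \leq C^k$, whence $\E[\xi(A)^2] \leq \E[A^4] \leq C^{4k}$. Paley--Zygmund then yields
\[
\Pr_{f\in\Ef_r}\!\left[\xi(A(f)) \geq \tfrac{\nu}{2}\right]\;\geq\;\Pr_{f\in\Ef_r}\!\left[\xi(A(f)) \geq \tfrac{1}{2}\E[\xi(A)]\right]\;\geq\;\frac{(\E[\xi(A)])^2}{4\,\E[\xi(A)^2]}\;\geq\;\frac{\nu^2}{2^{O(k)}}.
\]
The event $\{\xi(A(f)) \geq \nu/2\}$ is precisely $\{A(f) \geq 1+\sqrt{\nu/2}\}\cup\{A(f) \leq -\sqrt{\nu/2}\}$, so at least one of these one-sided tail events has probability $\eta := \nu^2/2^{O(k)+1}$ under the uniform measure on $\Ef_r$.

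Next, I will apply Conjecture~\ref{conj:PTF} either to $A$ with threshold $\theta = 1+\sqrt{\nu/2}$, or to $-A$ with threshold $\theta = \sqrt{\nu/2}$, depending on which tail is heavier; both are degree-$\leq k$ polynomials, so the conjecture applies symmetrically. It supplies an $\ell = (k\lg(1/\eta))^{O(1)}$ such that any $\ell$-wise independent distribution $\mathcal{D}$ on $\Ef_r$ preserves this tail probability up to a factor $\eta^{O(1)}/2^{O(k)} = \nu^{O(1)}/2^{O(k)}$. Since $\lg(1/\eta) = O(k + \lg(1/\nu))$, the required $\ell$ is $\poly(k,\lg(1/\nu))$. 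I choose $\mathcal{D}$ to be the uniform distribution on $\Pe_{r,d}$, which by \lref[Lemma]{lem:low-deg-local-ind} is $3^{\lfloor (d+1)/2\rfloor}$-wise independent when viewed as a string in $\F_3^{3^r}\cong\Ef_r$; thus it suffices to take $d = O(\lg \ell) = O(\lg k + \lg\lg(1/\nu))$, matching the bound in the claim. Since every $f\in\Pe_{r,d}$ contributing to the tail event satisfies $\xi(A(f))\geq \nu/2$, I obtain
\[
\avg{f\in\Pe_{r,d}}{\xi(A(f))}\;\geq\;\tfrac{\nu}{2}\cdot\frac{\nu^{O(1)}}{2^{O(k)}}\;=\;\frac{\nu^{O(1)}}{2^{O(k)}},
\]
which is the stated conclusion.

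With Conjecture~\ref{conj:PTF} assumed as a black box, I do not anticipate any serious obstacle in carrying out the plan. The only mild subtleties are the two-sided nature of $\xi$, handled by the case split above, and the need to invoke the conjecture for a \emph{specific} pseudorandom source rather than an abstract $\ell$-wise independent one; both are immediately resolved by the symmetry of Conjecture~\ref{conj:PTF} under $A\mapsto -A$ and by the bounded-independence bound of \lref[Lemma]{lem:low-deg-local-ind}. The genuinely hard content, which this proof cleanly outsources, is Conjecture~\ref{conj:PTF} itself --- a strengthening of the PTF-fooling theorem of Kane--Meka --- whose resolution is the real bottleneck to improving Theorem~\ref{thm:derand-ds} and Corollary~\ref{cor:main}.
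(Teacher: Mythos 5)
Your proof is correct and follows essentially the same route as the paper's: use hypercontractivity (\lref[Lemma]{thm:hyp}) together with a second-moment argument to convert $\E_{\Ef_r}[\xi(A)]\ge\nu$ into a threshold event $\{A\ge\theta\}$ (or $\{-A\ge\theta\}$) of probability $\nu^{O(1)}/2^{O(k)}$, then invoke Conjecture~\ref{conj:PTF} for the $3^{\lfloor(d+1)/2\rfloor}$-wise independent uniform distribution on $\Pe_{r,d}$ and read off $\xi(A)\ge\nu/2$ on that event. The only (cosmetic) differences are that you package the Cauchy--Schwarz step as Paley--Zygmund applied to $\xi(A)$ directly, whereas the paper splits $\xi$ into its two pieces and peels off the $[1,\theta]$ band explicitly, and that you handle the negative tail by applying the conjecture to $-A$ where the paper says the case is symmetric.
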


\begin{proof}
Assume that we have $\avg{f\in \Ef_r}{\xi(A(f))} \geq \nu$. Recall that the uniform distribution over $\Pe_{r,d}$ is $3^{d/2}$-wise independent. We will choose $d$ such that $3^{d/2}\geq \max\{\ell,2k\}$ for some $\ell = (k\lg\frac{1}{\nu})^{O(1)}$ which will be made more explicit later on in the proof. Clearly, any such choice of $d$ satisfies $d = O(\lg k + \lg \lg (\frac{1}{\nu}))$  as claimed. For such a choice of $d$, we will show that $\avg{f\in \Pe_{r,d}}{\xi(A(f))}$ is as large as claimed.

Firstly, note that 
\[
\xi(A(f)) = A(f)^2\mathbf{1}_{A(f)\leq 0} + (A(f)-1)^2 \mathbf{1}_{A(f)\geq 1}
\]
Hence, if $\avg{f\in \Ef_r}{\xi(A(f))} \geq \nu$, we must have either:
\[
\avg{f\in \Ef_r}{A(f)^2\mathbf{1}_{A(f)\leq 0}}\geq \frac{\nu}{2} \qquad \text{OR}\qquad 
\avg{f\in \Ef_r}{(A(f)-1)^2 \mathbf{1}_{A(f)\geq 1}}\geq \frac{\nu}{2}
\]

For the rest of the proof, we assume that the latter holds (the other case is similar). Note furthermore, that for any choice of $\theta \geq 1$, we have
\begin{align*}
\avg{f\in \Ef_r}{(A(f)-1)^2 \mathbf{1}_{A(f)\geq 1}} &= \avg{f\in \Ef_r}{(A(f)-1)^2 \mathbf{1}_{A(f)\in [1,\theta]}} + \avg{f\in \Ef_r}{(A(f)-1)^2 \mathbf{1}_{A(f)\geq \theta}}\\
&\leq (\theta-1)^2 + \avg{f\in \Ef_r}{(A(f)-1)^2 \mathbf{1}_{A(f)\geq \theta}}
\end{align*}

Choosing $\theta= (1+\frac{\sqrt{\nu}}{2})$ and plugging it into the inequality above, we see that
\[
\avg{f\in \Ef_r}{(A(f)-1)^2 \mathbf{1}_{A(f)\geq 1}} \leq \frac{\nu}{4} + \avg{f\in \Ef_r}{(A(f)-1)^2 \mathbf{1}_{A(f)\geq \theta}}
\]
Since the left hand side of the above inequality is at least $\frac{\nu}{2}$, we see that
\begin{equation}
\label{eq:claim-conj-1}
\avg{f\in \Ef_r}{(A(f)-1)^2 \mathbf{1}_{A(f)\geq \theta}} \geq \frac{\nu}{4}
\end{equation}
We can further upper bound the left hand side of the above using the Cauchy-Schwarz inequality as follows. 
\begin{align*}
\avg{f\in \Ef_r}{(A(f)-1)^2 \mathbf{1}_{A(f)\geq \theta}} &\leq \sqrt{\avg{f\in \Ef_r}{(A(f)-1)^4}}\cdot \sqrt{\avg{f\in \Ef_r}{\mathbf{1}_{A(f)\geq \theta}}}\\
&= \sqrt{\avg{f\in \Ef_r}{(A(f)-1)^4}}\cdot\sqrt{\prob{f\in \Ef_r}{A(f)\geq \theta}}
\end{align*}

We can now upper bound the first term on the right hand side of the above inequality by using Lemma~\ref{thm:hyp} to obtain
\[
\avg{f\in \Ef_r}{(A(f)-1)^2 \mathbf{1}_{A(f)\geq \theta}}\leq 2^{O(k)}\avg{f\in \Ef_r}{(A(f)-1)^2}\cdot\sqrt{\prob{f\in \Ef_r}{A(f)\geq \theta}}
\]

By the above inequality and (\ref{eq:claim-conj-1}) we have
\begin{equation}
\label{eq:claim-conj-2}
\avg{f\in \Ef_{r}}{(A(f)-1)^2}\cdot\sqrt{\prob{f\in \Ef_r}{A(f)\geq \theta}} \geq \frac{\nu}{2^{O(k)}}.
\end{equation}

The first term above may be further upper bounded by a constant since we have
\[
\avg{f\in \Ef_{r}}{(A(f)-1)^2} \leq \avg{f\in \Ef_{r}}{(A(f))^2} + 2\avg{f\in \Ef_{r}}{|A(f)|} + 1 \leq \avg{f\in \Ef_{r}}{(A(f))^2} + 2\sqrt{\avg{f\in \Ef_{r}}{(A(f))^2}} + 1 \leq 4.
\]
where the second inequality follows from the Cauchy Schwarz inequality and the last inequality since by the hypothesis of the claim, we have $||A|| = \sqrt{\avg{f\in \Ef_{r}}{(A(f))^2}}\leq 1$. Thus, by (\ref{eq:claim-conj-2}), we get 
\[
\prob{f\in \Ef_r}{A(f)\geq \theta} \geq \frac{\nu^2}{2^{O(k)}}.
\]

Now, from the statement of Conjecture~\ref{conj:PTF} applied with parameters $k$ as in the statement of the claim and $\eta$ being the right hand side of the above inequality, we know that there is an $\ell=(k\lg(\frac{1}{\eta}))^{O(1)} = (k\lg (\frac{1}{\nu}))^{O(1)}$ such that for any distribution $\mathcal{D}$ that is $\ell$-wise independent, we have 
\[
\prob{f\sim \mathcal{D}}{A(f)\geq \theta} \geq \frac{\eta^{O(1)}}{2^{O(k)}} = \frac{\nu^{O(1)}}{2^{O(k)}}.
\]

In particular, the above holds for $\mathcal{D}$ being the uniform distribution over $\Pe_{r,d}$ --- which is $3^{d/2}$-wise independent --- as long as we ensure $3^{d/2} \geq \ell$ (note that this agrees with our choice of $d$ earlier on). Hence, we have the statement of the claim.

\end{proof}

\fi
\fi

\end{document}